\pgfplotsset{compat=1.18}
\newtheorem{theorem}{Theorem}
\newtheorem{lemma}[theorem]{Lemma}
\theoremstyle{remark}
\newtheorem{remark}[theorem]{Remark}
\theoremstyle{definition}
\newtheorem{assumption}[theorem]{Assumption}
\newtheorem{example}[theorem]{Example}
\DeclareRobustCommand{\bvec}[1]{\boldsymbol{#1}}
  \renewcommand{\bvec}[1]{#1}%
\newcommand{\ul}[1]{\underline{{#1}}}
\newcommand{\uvec}[1]{\underline{\bvec{#1}}}
\DeclareRobustCommand{\btens}[1]{\boldsymbol{#1}}
  \renewcommand{\btens}[1]{#1}%
\newcommand{\Real}{\mathbb{R}}
\newcommand{\Natural}{\mathbb{N}}
\newcommand{\Symm}{\mathbb{S}}
\newcommand{\Tless}{\mathbb{T}}
\newcommand{\Vspace}{\mathbb{V}}
\newcommand{\st}{\,:\,}
\DeclareMathOperator{\tr}{tr}
\DeclareMathOperator{\Tr}{tr}
\DeclareMathOperator{\DEV}{dev}
\DeclareMathOperator{\SYM}{sym}
\DeclareMathOperator{\skw}{skw}
\DeclareMathOperator{\vskw}{vskw}
\DeclareMathOperator{\mskw}{mskw}
\DeclareMathOperator{\Ker}{Ker}
\DeclareMathOperator{\Image}{Im}
\renewcommand{\Im}{\mathrm{Im}}
\DeclareMathOperator{\ID}{Id}
\newcommand{\DIFF}{\mathrm{d}}
\DeclareMathOperator{\DIV}{div}
\DeclareMathOperator{\VDIV}{\bf div}
\DeclareMathOperator{\DIVDIV}{div \bf div}
\DeclareMathOperator{\CURL}{\bf curl}
\DeclareMathOperator{\GRAD}{\bf grad}
\DeclareMathOperator{\HESS}{\bf hess}
\DeclareMathOperator{\DEF}{\bf def}
\DeclareMathOperator{\INC}{\bf inc}
\DeclareMathOperator{\dev}{dev}
\DeclareMathOperator{\sym}{sym}
\newcommand{\bs}{{\scriptscriptstyle \bullet}}
\newcommand{\Poly}[1]{\mathcal{P}_{#1}}
\newcommand{\uH}[2]{\uvec{X}_{r,#2}^{#1}}
\newcommand{\uI}[2]{\uvec{I}_{r,#2}^{#1}}
\newcommand{\vvvert}{\vert\kern-0.25ex\vert\kern-0.25ex\vert}
\newcommand{\norm}[2]{\Vert #2\Vert_{#1}}
\newcommand{\opn}[2]{\vvvert #2\vvvert_{#1}}
\newcommand{\edges}[1]{\mathcal{E}_{#1}}
\newcommand{\vertices}[1]{\mathcal{V}_{#1}}
\newcommand{\Eh}{\edges{h}}
\newcommand{\Vh}{\vertices{h}}
\title{Hodge-Dirac wave systems and structure-preserving discretizations of the linearized Einstein equations}
\author{Marien-Lorenzo Hanot\thanks{Université de Lille, UMR 8524 - Laboratoire Paul Painlev\'e, CNRS, Inria, France.  Email: \texttt{marien-lorenzo.hanot@univ-lille.fr}},\quad Kaibo Hu\thanks{Mathematical Institute, University of Oxford. Radcliffe Observatory, Andrew Wiles Building,   Oxford OX2 6GG, United~Kingdom. Email: \texttt{kaibo.hu@maths.ox.ac.uk}}}
\begin{document}

\maketitle

\tableofcontents

\begin{abstract}
We derive a reformulation of the linearized Arnowitt-Deser-Misner (ADM) equations as a Hodge-Dirac wave system with the divdiv complex, addressing challenges in numerical relativity such as gauge fixing, constraint propagation, and tensor symmetries. The differential and algebraic structures of the divdiv complex ensure the well-posedness of the formulation and facilitate structure-preserving discretization via finite element exterior calculus. 
 We establish the well-posedness of this Hodge-Dirac wave equation and develop a discretization scheme applicable to both conforming and non-conforming discrete complexes, deriving error estimates under minimal assumptions. 
 
  \textbf{Key words.} Discrete divdiv complex, Hodge-Dirac wave, ADM equations, finite element exterior calculus
  \medskip\\
  \textbf{MSC2020.} 65M12, 65N30, 65M60, 83C27
\end{abstract}

\section{Introduction}

Numerical relativity is crucial in gravitational wave detection by providing essential templates. A central challenge in this field is solving the Einstein equations numerically. Significant advancements have been made over recent decades, notably the 2005 breakthrough in achieving accurate, long-term evolutions of black hole systems \cite{pretorius2005evolution}. This success highlighted the critical role of the mathematical properties of the Einstein equations, particularly their hyperbolicity, in enabling robust numerical methods.

Despite these achievements, the increasing sophistication of gravitational wave detectors demands greater precision and long-term stability in numerical computations. Moreover, fundamental algorithmic challenges, such as a rigorous understanding of convergence properties, remain unresolved. The absence of comprehensive numerical analytic studies of the Einstein equations also hinders progress in exploring modified gravity models \cite{petrov2023introduction,llinares2018simulation}.

The challenges of numerically solving the Einstein equations stem from at least the following issues:
\begin{enumerate}
\item Gauge fixing and hyperbolicity. As a geometric PDE, the Einstein equations permit reformulations through coordinate choices, known as gauge freedom. The mathematical properties, and thus the numerical performance, of these equations depend critically on the chosen gauge.
\item Constraint propagation and preservation. In 3+1 decompositions, geometric quantities must satisfy constraint equations at each time step, which are inherently preserved by the evolution equations. Maintaining these constraints numerically is both challenging and essential for ensuring numerical stability.
\item Nonlinearity. As characteristic of geometric PDEs, the Einstein equations exhibit strong nonlinearity.
\end{enumerate}
Gauge fixing and constraint preservation necessitate a deep understanding of the differential and algebraic structures of both continuous and discrete equations.
Over recent decades, Finite Element Exterior Calculus (FEEC) \cite{Arnold:18,Arnold.Falk.ea:06,Arnold.Falk.ea:10} has made significant strides in structure-preserving discretization and efficient solvers, with applications in electromagnetism and continuum mechanics.
Differential and cohomological structures have proven critical for accurate and efficient numerical solutions.
In the context of the Einstein equations, FEEC-based approaches were pioneered in \cite{quenneville2015new} using the Einstein-Bianchi (first-order) formulation and in \cite{li2018regge} with Regge finite elements, inspired by Regge calculus.
The Einstein-Bianchi formulation in \cite{quenneville2015new} employs traceless-transverse (TT, i.e., symmetric, traceless, and divergence-free) matrix fields, which naturally arise in conformal complexes \cite{Arnold.Hu:21,vcap2023bgg}.
However, enforcing both symmetry and tracelessness constraints simultaneously poses significant challenges, although some conforming finite element conformal complexes with supersmoothness were recently constructed \cite{hu2023finite,huang2025finite,guo2025Discretizing}.
To encode the tensor symmetries, \cite{quenneville2015new} utilized the Hessian (and $\DIVDIV$) complex with weaker algebraic constraints, imposing them via Lagrange multipliers, following approaches in elasticity \cite{arnold2007mixed}.
Similarly, \cite{Hu.Liang.Ma:21} explored the Einstein-Bianchi system with weakly imposed symmetries.
The work in \cite{li2018regge} extended Regge finite elements and analyzed their properties. We also mention the recent numerical scheme based on a differential form formulation \cite{oliynyk2025polytopal}.
In this paper, we build on the paradigm of structure-preserving discretizations for the Einstein equations. Specifically, we propose a Hilbert complex-based reformulation and its discretization, which naturally incorporates all algebraic and differential constraints {\it strongly} and ensures well-posedness.

We begin with the Arnowitt-Deser-Misner (ADM) equations, which govern the evolution of a metric and its curvature, satisfying the Einstein field equations. These equations arise from a $(3+1)$-decomposition of the field equations, separating temporal and spatial derivatives. A detailed exposition of the $(3+1)$ formalism and the ADM equations can be found, for example, in \cite{Alcubierre:08}. In a $(3+1)$-decomposition, one assumes a foliation of spacetime with spacial slices. The distance between the spacial slices is described by a function $\alpha$, referred to the {\it lapse}; and the {\it shift} between slices is described by a vector quantity $\beta$.

  In this work, we focus on the York version of the ADM equations \cite{York1979}, particularly their linearization, given by
\begin{subequations}  \label{eq:def.York.ADM}
  \begin{align} \label{eq:def.York.ADM1}
    \gamma_{tt} + S \INC \gamma - 2 \HESS \alpha - 2 \DEF \beta_t &= 0, \\ \label{eq:def.York.ADM2}
    \VDIV S (\gamma_t - 2 \DEF \beta) &= 0, \\ \label{eq:def.York.ADM3}
    \DIVDIV S \gamma &= 0,
  \end{align}
\end{subequations}
where $S \colon M_{3\times3}(\Real) \to M_{3\times3}(\Real)$ is an operator defined for any $A \in M_{3\times3}(\Real)$ as $S(A) := A^\top - \Tr(A) I_{3}$, $\gamma \colon \Real^4 \to \Symm$ represents the perturbation of the spatial metric, 
$1+ \alpha \colon \Real^4 \to \Real$ is the lapse function, and $\beta \colon \Real^4 \to \Real^3$ is the shift vector. 
The linearization is performed around $I + \gamma \approx I$ (Euclidean spacial metric), $\beta \approx \bvec{0}$ (zero shift between slices), and $1 + \alpha \approx 1$ (uniform distance between slices). In adapted coordinates, the full spacetime metric $g$ is recovered as:
\[
g :=
\begin{pmatrix}
  -(1+\alpha)^2 + \beta \cdot \beta & \beta^\top \\
  \beta & I + \gamma
\end{pmatrix}.
\]
The lapse function $\alpha$ and shift vector $\beta$ represent coordinate choices rather than physical quantities, serving as gauge functions.
For the operators, $\INC$ is a row-wise curl composed with a column-wise curl;  $\DEF:=\sym\GRAD$ is the symmetric gradient; $ \HESS$ is the Hessian operator of a scalar function; $S\sigma:=\sigma^{T}-\frac{1}{2}\tr(\sigma)I$ is a bijective algebraic operator (see \cite{Arnold.Hu:21}).

The ADM formulation \eqref{eq:def.York.ADM} comprises two sets of equations: \eqref{eq:def.York.ADM1} represents the evolutionary equation, while \eqref{eq:def.York.ADM2}--\eqref{eq:def.York.ADM3} are constraint equations that must be satisfied throughout the evolution. The system exhibits \textit{constraint propagation}, meaning that if \eqref{eq:def.York.ADM2}--\eqref{eq:def.York.ADM3} hold for the initial data, they remain satisfied during the evolution.
The standard ADM formulation \eqref{eq:def.York.ADM} and its nonlinear counterpart lack hyperbolicity. Moreover, straightforward discretization of the evolutionary equations in \eqref{eq:def.York.ADM} introduces numerical errors in the constraints, which accumulate over time and lead to instability.

{\medskip\bf\noindent Hodge-Dirac reformulation.} The first contribution of this paper is to address the loss of constraint propagation, well-posedness and tensor symmetries in numerical discretization by reformulating \eqref{eq:def.York.ADM} as a mixed formulation based on a differential complex.
Then in the framework of finite element exterior calculus  \cite{Arnold.Falk.ea:06,Arnold.Falk.ea:10,Arnold:18}, we can discretize such formulations by discretizing the corresponding complexes and ensure
well-posedness and structure-preserving properties. 

To encode the symmetries of tensors in \eqref{eq:def.York.ADM}, we will reformulate the ADM formulation \eqref{eq:def.York.ADM} into a canonical form based on the $\DIVDIV$ complex \cite{Arnold.Hu:21,vcap2023bgg}:
 \begin{equation} \label{eq:divdiv.complex}
 \begin{tikzcd}[column sep = 1.1cm]
 0  \arrow{r}& \bvec{H}(\DEV\GRAD;\Vspace) \arrow{r}{\DEV\GRAD} &  \bvec{H}(\SYM\CURL;\Tless) \arrow{r}{\SYM\CURL} &  \bvec{H}(\DIVDIV;\Symm) \arrow{r}{\DIVDIV} &  L^2(\Real)\arrow{r}& 0.
 \end{tikzcd}
 \end{equation}
More specifically,
denoting by $\DIFF$ the differential operators in \eqref{eq:divdiv.complex}, i.e., 
$$
\DIFF:=
\begin{pmatrix}
  0 & 0 & 0 & 0 \\
  \DEV\GRAD & 0 & 0 & 0 \\
  0 &\SYM\CURL & 0 & 0 \\
  0 & 0 & \DIVDIV & 0
\end{pmatrix}
$$
with its (formal) adjoint
$$
\DIFF^{\ast}:=
\begin{pmatrix}
  0 & - \VDIV & 0 & 0 \\
 0 & 0 & \CURL & 0 \\
  0 &0 & 0 & \HESS \\
  0 & 0 &0 & 0
\end{pmatrix}
$$
and defining $ {J} := \mathrm{diag}(1, -1, 1, -1)$, we will show that \eqref{eq:def.York.ADM} can be reformulated as
$\partial_t U =  {J} (\DIFF + \DIFF^*) U$ with properly chosen variables $U$
with components from all the spaces in \eqref{eq:divdiv.complex}
(see \eqref{eq:HDADM.HDW}).
This formulation closely resembles a time-dependent, skew-symmetric Hodge-Dirac problem for the de~Rham complex. See \cite{Leopardi.Stern:16} for a study for stationary problems.
 Analogous to how the Hodge-Dirac problem can be viewed as a ``square root'' of a Hodge-Laplace problem, the proposed problem can be regarded as a ``square root'' of the Hodge(-Laplace) wave equation presented in \cite[Section~8.5]{Arnold:18} and \cite{Quenneville:15} with the $\DIVDIV$  complex \eqref{eq:divdiv.complex}.
By analogy, we refer to the problem \eqref{eq:HDADM.HDW} as the {\it Hodge-Dirac wave}.
The primary distinction between the two wave problems is that the Hodge-Laplace equation involves only three consecutive spaces in the complex, whereas the Hodge-Dirac wave considers the entire complex. 
For example, in a complex comprising four spaces, the mixed formulations of the Hodge-Laplace (HL) wave equation and the Hodge-Dirac (HD) wave equation take the forms:
\[
\text{(HL): } \frac{d}{dt} U=
\begin{pmatrix}
  0 & \delta & 0 & 0 \\
  -\DIFF & 0 & -\delta & 0 \\
  0 & \DIFF & 0 & 0 \\
  0 & 0 & 0 & 0
\end{pmatrix}U, \quad
\text{(HD): } \frac{d}{dt}U =
\begin{pmatrix}
  0 & \delta & 0 & 0 \\
  -\DIFF & 0 & -\delta & 0 \\
  0 & \DIFF & 0 & \delta \\
  0 & 0 & -\DIFF & 0
\end{pmatrix}U,
\]
respectively. 
For time-independent equations, stability requires considering the entire complex \cite{Leopardi.Stern:16}. 
The Hodge-Dirac wave equation, and thus the reformulation of the ADM formulation, is well-posed. Each of the four components in the Hodge-Dirac system satisfies a Hodge-Laplacian wave equation; 
while the fourth component in the Hodge-Laplacian wave equation is constant in time, which may exhibit numerical instability under certain perturbations in the nonlinear case. 
Therefore, one may expect that the Hodge-Dirac system has better stability than the Hodge-Laplacian wave.

Compared to previous works that used differential complexes to address systems in numerical relativity, such as \cite{Quenneville:15,Hu.Liang.Ma:21,guo2025Discretizing}, to the best of our knowledge, 
this is the first work that encode all the algebraic symmetries of the tensors and the constraint equations in the ADM-Einstein equations with the $\DIVDIV$ complex. These constraints are thus preserved in numerical discretization.

{\medskip\bf\noindent Discretization of Hodge-Dirac systems.} As another main contribution of this paper, we propose a scheme for solving the Hodge-Dirac wave, establish its well-posedness, and derive error estimates for the time-dependent problem under minimal assumptions. Although presented in three dimensions, the approach is straightforward to extend to any number of dimensions. More importantly, the construction and results do not require conforming discretizations. Any discrete complex may be used, provided it possesses (not necessarily bounded) commuting interpolators and (discrete) Poincaré inequalities. The error estimates then depend on the consistency properties of the chosen spaces. The ability to use non-conforming spaces is particularly valuable for complexes more intricate than the de Rham complex, where conforming discretizations are often challenging to construct and come with other limitations, such as high minimal polynomial degrees or the need for special meshes.
 
The ADM-type formulation discussed in this paper involves the $\DIVDIV$ complex \eqref{eq:divdiv.complex}. The $\DIVDIV$ complex incorporates a second-order differential operator and tensor-valued elements with additional symmetry constraints (e.g., traceless or symmetric elements), posing significant challenges for discretization. We briefly review existing approaches.
\begin{itemize}
\item Some approaches relax strong symmetry constraints and enforce them weakly, as done for the linear elasticity \cite{arnold2007mixed} and the linearized Einstein-Bianchi system \cite{Quenneville:15}.
We choose, however, to preserve strong symmetries within the spaces.
\item Conforming discretizations of the spaces in \eqref{eq:divdiv.complex} were developed in \cite{Chen.Huang:20,Hu.Liang.Ma:21,hu2024family,bonizzoni2025discrete}. Under certain assumptions, conforming spaces and complexes are viable for our scheme. However, most constructions on simplicial meshes are involved due to supersmoothness constraints, which also exclude the use of low-order polynomials. In the numerical results presented in this paper, we use a $\DIVDIV$ discrete complex based on tensor-product splines \cite{bonizzoni2025discrete}. This conforming complex is restricted to Cartesian meshes but is significantly simpler to implement than other alternatives considered. It also yields a highly structured matrix system for numerical schemes. Although the overall degree of local polynomials in the complex can be high (reaching degree $7$ in some spaces), the degree in any Cartesian direction remains below $3$. We discuss this complex in greater detail in Section \ref{sec:num}.
\item To reduce the overall complexity of the method, non-conforming discretizations are also considered. A ``fully discrete'' $\DIVDIV$ complex was developed in \cite{Di-Pietro.Hanot:23*2}, using collections of local polynomials associated with various mesh entities to represent discrete data. This approach supports arbitrary polyhedral elements, not just simplicial meshes, and allows the use of lower-degree polynomials compared to conforming methods. Nevertheless, the spaces remain complex.
\item Another approach involves distributional elements. This concept, explored in \cite{christiansen2011linearization,Braess.Schoberl:08,pechstein2011tangential,gopalakrishnan2020mass,neunteufel2024hellan}, incorporates Dirac deltas into the design of numerical schemes. Distributional spaces can be viewed either through duality with a conforming space or as functions attached to lower-dimensional entities. Such schemes rely on intrinsic finite elements with weaker regularity and their complexes, which have gained attention since Braess and Schöberl's work on a posteriori estimators \cite{Braess.Schoberl:08} and Christiansen's finite element reinterpretation of Regge calculus \cite{christiansen2011linearization}, as well as in recent works \cite{hu2025finite,berchenko2025finite,gopalakrishnan20252,Hu.Lin.Zhang:23,Licht:17}.
In particular, a distributional $\DIVDIV$ complex was introduced in \cite{Hu.Lin.Zhang:23}. While distributional finite elements offer a promising approach, verifying the conditions arising from our analysis remains an open task for future work.
\end{itemize}

The rest of the paper is organized as follows. 
In Section \ref{sec:cont.set},
we reformulate the linearized ADM equations on the $\DIVDIV$ complex.
In Section \ref{sec:cont.wp}, we abstract the resulting formulation to a more general setting 
of ``Hodge-Dirac wave'' and show the well-posedness of the continuous problem.
In Section \ref{sec:var.disc}, we study the discretization of this problem on discrete, 
not necessary conforming, complexes. Obtaining both the well-posedness, and error estimates 
under fairly general assumptions.
Lastly, in Section \ref{sec:num}, 
we introduce some classes of exact solutions, and provide and discuss numerical results 
obtained with the scheme.

\bigskip
{\bf\noindent Notation.} In this paper, we define operators acting on matrix fields {\it column-wise}.  This follows the convention in \cite{Arnold.Hu:21}.  We introduce some notation following  \cite{Arnold.Hu:21}.
\begin{table}[h!]
\begin{center}
\begin{tabular}{c|c}
$\mathbb V$ & $\mathbb R^n$\\
$\mathbb M$ &the space of all $n\times n$-matrices\\
$\mathbb S$ & symmetric matrices\\
$\mathbb K$ & skew symmetric matrices\\
$\mathbb T$ & trace-free matrices\\
$\skw: \mathbb M\to \mathbb K$ & skew symmetric part of a matrix\\
$\sym: \mathbb M\to \mathbb S$ & symmetric part of a matrix\\
$\tr:\mathbb M\to\Real$ & matrix trace\\
$\iota: \Real\to \mathbb M$  & the map $\iota u:= uI$ identifying a scalar with a scalar matrix\\
$\dev:\mathbb{M}\to \mathbb{T}$ & deviator (trace-free part of a matrix) given by $\dev u:=u-1/n \tr (u)I$\\
$S: \mathbb{M}\to \mathbb{M}$ & the map given by $Su = u^T - \tr(u) I$
\end{tabular}
\end{center}
\caption{Notations}
\end{table}
Moreover, $\mskw: \mathbb R^{3}\mapsto \mathbb{K}$ defined by $(\mskw V)_{ij} := - \epsilon_{ijk} V_k$ maps an axial vector to its matrix representation; and  $(\vskw M)_{i} := - \frac12 \epsilon_{ijk} M_{jk}$ takes the skew-symmetric part of a matrix and maps it to the axial vector.

\section{Reformulation of the ADM formulation}
\label{sec:cont.set}

As mentionned in the introduction, we want to find a reformulation of the linearized ADM equations \eqref{eq:def.York.ADM} 
based on the $\DIVDIV$ complex \eqref{eq:divdiv.complex}.
Incorporating the constraint equations of \eqref{eq:def.York.ADM}, 
we propose the following strong formulation:
\begin{subequations}
  \begin{alignat}{2}
    \label{eq:HDADM.0} \VDIV A &= \bvec{0} \\
    \label{eq:HDADM.1} A_t + \CURL S \gamma &= \btens{0} \\
    \label{eq:HDADM.2} S\gamma_t - \SYM\CURL A &= \btens{0} \\
    \label{eq:HDADM.3} \DIVDIV S \gamma &= {0}.
  \end{alignat}
  \label{eq:HDADM}
\end{subequations}
The choice of variables is summarized in the following complex: 
\begin{equation*}
\begin{tikzcd}[column sep=large]
  \Vspace \arrow[r,"\DEV\GRAD"]
  & \Tless \arrow[r,"\SYM\CURL"] \arrow[l,green,shift left=2,"-\VDIV"]
  & \Symm \arrow[r,"\DIVDIV"] \arrow[l,green,shift left=2,"\CURL"]
  & \Real \arrow[l,green,shift left=2,"\HESS"] \\
  {\color{blue} \lambda_0 }& {\color{blue}A} &{\color{blue} S\gamma} &{\color{blue} \lambda_3} 
\end{tikzcd}
\end{equation*}

Let us assume for now that \eqref{eq:HDADM} is well-posed and verify that \eqref{eq:HDADM} is equivalent to the ADM formulation.
\begin{theorem}
  The solution $\gamma$ of \eqref{eq:HDADM} 
    with initial conditions $(A_0,\gamma_0)$
  sastifies the linearized York version of the ADM formulation 
  \eqref{eq:def.York.ADM}
  for the gauge 
  $\alpha := \frac12 \tr \gamma$ and $\beta := \frac12 \int_0^t \VDIV S\gamma + \vskw A_0$.
  \label{th:ADM.equiv}
\end{theorem}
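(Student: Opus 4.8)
The plan is to start from the four equations \eqref{eq:HDADM.0}--\eqref{eq:HDADM.3}, substitute the chosen gauge functions $\alpha = \tfrac12\tr\gamma$ and $\beta = \tfrac12\int_0^t \VDIV S\gamma + \vskw A_0$, and verify the three ADM equations \eqref{eq:def.York.ADM1}--\eqref{eq:def.York.ADM3} one by one. The main structural fact I would use is that $S$ is an algebraic bijection (indeed $S^2$ acts as an invertible operator relating $\gamma$, $S\gamma$, $\tr\gamma$), so that quantities written in terms of $S\gamma$ can be re-expressed in terms of $\gamma$ and vice versa; I would collect these algebraic identities first (e.g. $\tr S\gamma = -2\tr\gamma$, $S S\gamma = \gamma + (\text{trace term})$, $\DIVDIV S\gamma$ versus $\INC$, and the compatibility $S\CURL = \CURL S^\top$-type rules for column-wise operators from \cite{Arnold.Hu:21}). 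These let me translate the complex-based operators ($\CURL$, $\SYM\CURL$, $\DIVDIV$, $\VDIV$) into the ADM operators ($\INC$, $\DEF$, $\HESS$, $\VDIV$).

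Next I would tackle the constraint equations, which should be the easiest. Equation \eqref{eq:def.York.ADM3}, $\DIVDIV S\gamma = 0$, is literally \eqref{eq:HDADM.3}. For \eqref{eq:def.York.ADM2}, $\VDIV S(\gamma_t - 2\DEF\beta) = 0$, I would differentiate the definition of $\beta$ to get $\beta_t = \tfrac12 \VDIV S\gamma$, so $2\DEF\beta = \DEF \VDIV S\gamma$ — wait, more carefully, $2\DEF\beta_t = \DEF \VDIV S\gamma$, and one must instead relate $\gamma_t$ directly: from \eqref{eq:HDADM.2}, $S\gamma_t = \SYM\CURL A$, and I need $\VDIV S(\gamma_t - 2\DEF\beta)$; using $S$-algebra to rewrite $S(\gamma_t - 2\DEF\beta)$ in terms of $S\gamma_t$ and $S\DEF\beta$, then $\VDIV S\gamma_t = \VDIV \SYM\CURL A$, which should vanish by the complex property $\VDIV\SYM\CURL = 0$ composed appropriately (or directly because $\VDIV$ of a symmetrized curl is controlled), while $\VDIV S \DEF\beta$ must be matched against the $\CURL$-based structure using $\beta = \tfrac12\int\VDIV S\gamma + \vskw A_0$ and \eqref{eq:HDADM.0}, $\VDIV A = 0$. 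The role of the $\vskw A_0$ additive constant is precisely to absorb the skew-symmetric degree of freedom so that the whole expression is consistent; I would check that differentiating in $t$ kills the $\vskw A_0$ term, so the constraint \eqref{eq:def.York.ADM2} reduces to a statement at the level of $A$ that follows from \eqref{eq:HDADM.0}--\eqref{eq:HDADM.2}.

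For the evolution equation \eqref{eq:def.York.ADM1}, $\gamma_{tt} + S\INC\gamma - 2\HESS\alpha - 2\DEF\beta_t = 0$, I would differentiate \eqref{eq:HDADM.2} in time to obtain $S\gamma_{tt} = \SYM\CURL A_t$, then substitute $A_t = -\CURL S\gamma$ from \eqref{eq:HDADM.1}, giving $S\gamma_{tt} = -\SYM\CURL\CURL S\gamma$. The key identity is that $\SYM\CURL\CURL$ acting on $S\gamma$ equals (up to sign and trace corrections) $S\INC\gamma$ — this is the second-order operator factorization underlying the $\DIVDIV$ complex, and I expect to need the precise relation $\INC = \CURL\,(\cdot)\,\CURL$ for symmetric tensors together with the intertwining of $S$ with these curls. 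Then $2\HESS\alpha = \HESS\tr\gamma$ and $2\DEF\beta_t = \DEF\VDIV S\gamma$ are the gauge terms; after applying $S$ to \eqref{eq:def.York.ADM1} (or $S^{-1}$ to my derived identity) everything must cancel, with the $\HESS\tr\gamma$ and $\DEF\VDIV S\gamma$ terms accounting for the difference between $S\INC\gamma$ and $\SYM\CURL\CURL S\gamma$ plus the trace/deviatoric parts that the complex operators drop.

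I expect the main obstacle to be the bookkeeping of trace and skew-symmetric parts: the operators in the $\DIVDIV$ complex ($\DEV\GRAD$, $\SYM\CURL$) deliberately project onto trace-free or symmetric subspaces, so equalities hold only modulo the complementary parts, and one must verify that the chosen gauge terms $-2\HESS\alpha - 2\DEF\beta_t$ are exactly the ``missing'' pieces. Concretely, checking that $S\INC\gamma - \SYM\CURL\CURL S\gamma$ lands in the span of Hessians and symmetric gradients of the specified gauge functions — rather than some larger space — is the delicate computation; I would handle it by decomposing into trace and trace-free (and symmetric/skew) components and using the algebraic identities for $S$ collected at the start, together with $\CURL\GRAD = 0$ and $\DIV\CURL = 0$ at the vector-Laplacian level. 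The well-posedness assumed in the hypothesis guarantees the solution is smooth enough in $t$ to justify the time-differentiations used above.
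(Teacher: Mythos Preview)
Your treatment of the evolution equation \eqref{eq:def.York.ADM1} is essentially the paper's argument: differentiate \eqref{eq:HDADM.2} in $t$, substitute $A_t=-\CURL S\gamma$, and use the identity $\SYM\CURL\CURL S\gamma=\INC\gamma-S\DEF\VDIV S\gamma-S\HESS\tr\gamma$ (this is Lemma~\ref{lem:symcurlcurl.dec}); the leftover trace term is killed by $\tr\INC=-\DIVDIV S$ and \eqref{eq:HDADM.3}. So that part is fine.

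The gap is in the momentum constraint \eqref{eq:def.York.ADM2}. You write that $\VDIV S\gamma_t=\VDIV\SYM\CURL A$ ``should vanish by the complex property $\VDIV\SYM\CURL=0$''. There is no such property: the $\DIVDIV$ complex gives $\DIVDIV\circ\SYM\CURL=0$, not $\VDIV\circ\SYM\CURL=0$, and indeed $\VDIV\SYM\CURL A$ is generically nonzero. The paper computes it (using $\VDIV A=0$) as $\VDIV\SYM\CURL A=-\CURL\CURL\vskw A$ (Lemma~\ref{lem:divsymcurl.dec}). One must then track $\vskw A$ in time via \eqref{eq:HDADM.1} and the identity $2\vskw\CURL=-\VDIV S$, obtaining $\vskw A=\vskw A_0+\tfrac12\int_0^t\VDIV S\gamma-\tfrac12\int_0^t\GRAD\tr S\gamma$. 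Hence $-\CURL\CURL\vskw A=-\CURL\CURL\beta=2\VDIV S\DEF\beta$ by \eqref{eq:curlcurl.divdef}. Contrary to your suggestion, the $\vskw A_0$ term is not an inert constant ``killed by differentiating in $t$'': it is the initial value of $\vskw A$ and is essential to make $\beta$ match $\vskw A$ modulo a gradient. Without this mechanism both sides of \eqref{eq:def.York.ADM2} are nonzero, and your argument that each vanishes separately cannot work.
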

\begin{proof}
  The proof is divided into Lemma \ref{lem:proof.gtt} and Lemma \ref{lem:proof.divgt}.
\end{proof}

\begin{remark}
  The main difficulty is to derive the correct gauge for this formulation. 
  The skew symmetric part of $A$ introduces a non-zero value of $\VDIV S \gamma_t$ 
  which must be corrected using the shift vector $\beta$ (see the proof of Lemma \ref{lem:proof.divgt}).
  While the lapse function $\alpha$ is used to absorb the contribution of $\tr \gamma$ (see the proof of Lemma \ref{lem:proof.gtt}).  
\end{remark}
\begin{remark}
  The main advantage of this system over an $\INC$ based formulation is that,
  without the time derivatives, 
  it reduces to a simple Hodge-Dirac problem.
  The latter is well studied in the context of FEEC (see \cite{Leopardi.Stern:16}), 
  and can be discretized in a stable way using any $\DIVDIV$-complex. 
  Moreover, it strongly enforces both constraints.
  The $\DIVDIV S\gamma = 0$ constraint is directly enforced, 
  while the $\VDIV S\gamma_t = 2\VDIV S\DEF\beta$ can be taken as a definition for the shift vector $\beta$ (see Lemma \ref{lem:proof.gtt}). 
\end{remark}

\begin{lemma}
  If $(A,\gamma)$ is a solution of \eqref{eq:HDADM} then
  \begin{equation}
    \gamma_{tt} + S\INC\gamma - 2\DEF\beta_t - 2\HESS\alpha = \btens{0},
    \label{eq:proof.gtt}
  \end{equation}
  where $\alpha = \frac12 \tr \gamma$ and $\beta = \frac12 \int_0^t \VDIV S\gamma + \vskw A_0$.
  \label{lem:proof.gtt}
\end{lemma}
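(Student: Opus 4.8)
The plan is to collapse the first–order system \eqref{eq:HDADM} into a single second–order equation for $\gamma$ and then match it, term by term, with \eqref{eq:proof.gtt}. First I would eliminate $A$: differentiating \eqref{eq:HDADM.2} in time and substituting \eqref{eq:HDADM.1} (that is, using $S\gamma_t=\SYM\CURL A$ and $A_t=-\CURL S\gamma$) gives
\[
  S\gamma_{tt}=\SYM\CURL A_t=-\,\SYM\CURL\CURL(S\gamma),
\]
which is meaningful because the column-wise $\CURL$ of the symmetric field $S\gamma$ is trace-free, consistently with $\CURL\colon\Symm\to\Tless$ in \eqref{eq:divdiv.complex}.

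The second step is to rewrite $\SYM\CURL\CURL$ using the matrix identity $\CURL\CURL\tau=\GRAD(\VDIV\tau)-\Delta\tau$ (the tensor analogue of $\curl\curl=\grad\div-\Delta$, with divergence and gradient acting on the index carried by the curl), which on symmetric $\tau$ yields, after taking the symmetric part, $\SYM\CURL\CURL\tau=\DEF(\VDIV\tau)-\Delta\tau$; hence $S\gamma_{tt}=\Delta(S\gamma)-\DEF(\VDIV S\gamma)$. Since $S$ is a constant-coefficient algebraic bijection it commutes with $\Delta$, and on symmetric fields $S^{-1}$ acts by $\tau\mapsto\tau-\tfrac12(\tr\tau)I$. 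Applying $S^{-1}$ and using $\beta_t=\tfrac12\VDIV(S\gamma)$ (the constant $\vskw A_0$ drops under the time derivative) gives
\[
  \gamma_{tt}=\Delta\gamma-2\,S^{-1}\DEF\beta_t=\Delta\gamma-2\DEF\beta_t+(\VDIV\beta_t)\,I .
\]
The extra term vanishes because $\VDIV\beta_t=\tfrac12\DIVDIV(S\gamma)=0$ by the constraint \eqref{eq:HDADM.3}; this is the only place a constraint is used, and it is precisely what singles out the prescribed shift $\beta$. Thus $\gamma_{tt}=\Delta\gamma-2\DEF\beta_t$.

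It then remains to verify, with $\alpha=\tfrac12\tr\gamma$, the purely algebraic–differential identity $\Delta\gamma-2\DEF\beta_t=-S\INC\gamma+2\HESS\alpha+2\DEF\beta_t$. For this I would use the incompatibility identity $\INC\gamma=-\Delta\gamma+2\DEF(\VDIV\gamma)-\HESS(\tr\gamma)+(\Delta\tr\gamma-\DIVDIV\gamma)\,I$, whose trace equals $\Delta\tr\gamma-\DIVDIV\gamma$; applying $S$ therefore removes exactly the isotropic part, $S\INC\gamma=-\Delta\gamma+2\DEF(\VDIV\gamma)-\HESS(\tr\gamma)$. Substituting $\tr\gamma=2\alpha$ and $\VDIV\gamma=\VDIV(S\gamma)+\GRAD(\tr\gamma)=2\beta_t+2\GRAD\alpha$, so that $\DEF(\VDIV\gamma)=2\DEF\beta_t+2\HESS\alpha$, collapses both sides of the identity to $\Delta\gamma-2\DEF\beta_t$, which establishes \eqref{eq:proof.gtt}.

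The main obstacle is not conceptual but bookkeeping: pinning down the exact constants and the trace/$I$ terms in the second-order tensor identities for $\CURL\CURL$ and $\INC$, and in how $S$ and $S^{-1}$ act on them. The two clean facts that organise the bookkeeping are that $S$ annihilates the $(\,\cdot\,)I$ term in $\INC$ (which is why the York equation features $S\INC$ rather than $\INC$) and that the $(\VDIV\beta_t)I$ term produced by $S^{-1}$ is killed by the $\DIVDIV$ constraint (which is how the shift gauge enters).
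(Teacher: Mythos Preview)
Your argument is correct and follows the same overall plan as the paper: differentiate \eqref{eq:HDADM.2}, eliminate $A_t$ via \eqref{eq:HDADM.1}, rewrite the resulting second-order operator, and use the constraint \eqref{eq:HDADM.3} once to kill a spurious term. The organization of the tensor calculus is different, however. The paper invokes the single compound identity \eqref{eq:symcurlcurl.dec}, proved in the appendix, to pass directly from $\SYM\CURL\CURL S\gamma$ to $\INC\gamma - S\DEF\VDIV S\gamma - S\HESS\tr\gamma$, and then uses $S^{-1}=S+\tfrac12\iota\tr$ together with $\tr\INC=-\DIVDIV S$ to remove the isotropic residue. You instead pivot through $\Delta\gamma$: first reduce $\SYM\CURL\CURL$ on symmetric fields to $\DEF\VDIV-\Delta$, then independently expand $S\INC\gamma=-\Delta\gamma+2\DEF\VDIV\gamma-\HESS\tr\gamma$, and finally match. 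Your route is arguably more elementary in that it uses only the classical curl--curl Laplacian formula and the standard component expression for $\INC$, rather than the bespoke lemma \eqref{eq:symcurlcurl.dec}; the paper's route, on the other hand, keeps $\INC$ visible throughout and makes the role of the operator $S$ (and the reason the York equation carries $S\INC$ rather than $\INC$) slightly more explicit. In both proofs the constraint \eqref{eq:HDADM.3} enters at exactly one point---in the paper to kill $\tr\INC\gamma$, in yours to kill $(\DIV\beta_t)I$---and these are the same quantity up to a factor.
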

\begin{proof}
  Taking the time derivative of \eqref{eq:HDADM.2}, we have
  \begin{equation*}
    \begin{aligned}
      \gamma_{tt} - S^{-1}\SYM\CURL A_t &= \bvec{0} \\
      \gamma_{tt} + S^{-1}\SYM\CURL\CURL S\gamma \overset{\eqref{eq:HDADM.1}}&= \bvec{0} \\
      \gamma_{tt} + S^{-1}\left( \INC \gamma - S\DEF\VDIV S\gamma - S\HESS\tr\gamma \right) \overset{\eqref{eq:symcurlcurl.dec}}&= \bvec{0} \\
      \gamma_{tt} + S^{-1}\INC \gamma - 2\DEF\beta_t - 2\HESS\alpha &= \bvec{0} \\
      \gamma_{tt} + S\INC \gamma + \frac12\iota\tr\INC\gamma - 2\DEF\beta_t - 2\HESS\alpha &= \bvec{0} \\
      \gamma_{tt} + S\INC \gamma - \frac12\iota\cancel{\DIVDIV S\gamma} - 2\DEF\beta_t - 2\HESS\alpha \overset{\eqref{eq:HDADM.3}}&= \bvec{0},
    \end{aligned}
  \end{equation*}
  where we used the fact that $S^{-1} = S + \frac12\iota\tr$ on the fifth line, 
  and that $\tr\INC = -\DIVDIV S$ on the last.
\end{proof}

\begin{lemma}
  If $(A,\gamma)$ is solution of \eqref{eq:HDADM} then
  \begin{equation}
    \VDIV S\gamma_t = 2 \VDIV S \DEF\beta,
    \label{eq:proof.divgt}
  \end{equation}
  where $\beta = \frac12 \int_0^t \VDIV S\gamma + \vskw A_0$.
  \label{lem:proof.divgt}
\end{lemma}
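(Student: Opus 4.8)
The plan is to reduce \eqref{eq:proof.divgt} to a short chain of elementary algebraic–differential identities together with the observation that $\vskw A$ and $\beta$ differ only by a gradient. By \eqref{eq:HDADM.2} the left-hand side is $\VDIV S\gamma_t=\VDIV\SYM\CURL A$, so the first step is to establish, for any matrix field $A$ satisfying the constraint \eqref{eq:HDADM.0}, i.e.\ $\VDIV A=\bvec 0$, the identity
\begin{equation*}
  \VDIV\SYM\CURL A = -\CURL\CURL\,\vskw A .
\end{equation*}
Working in indices with the column-wise convention of \cite{Arnold.Hu:21}, one has $\VDIV\CURL A=\bvec 0$ and $\VDIV(\CURL A)^{\!\top}=\CURL(\VDIV A^{\!\top})$, whence $\VDIV\SYM\CURL A=\tfrac12\CURL(\VDIV A^{\!\top})$; since $A^{\!\top}-A=-2\,\mskw\vskw A$ and $\VDIV\mskw v=\CURL v$, the constraint $\VDIV A=\bvec 0$ forces $\VDIV A^{\!\top}=-2\,\CURL\vskw A$, which gives the identity. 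Consequently $\VDIV S\gamma_t=-\CURL\CURL\,\vskw A$.

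The second step identifies $\vskw A$ with $\beta$ modulo a gradient. Integrating \eqref{eq:HDADM.1} in time gives $A=A_0-\CURL\!\int_0^t S\gamma$, and for symmetric $\sigma$ a short index computation yields $\vskw\CURL\sigma=\tfrac12(\GRAD\tr\sigma-\VDIV\sigma)$. Applying this to $\sigma=\int_0^t S\gamma$ — symmetric because $S$ maps $\Symm$ into $\Symm$ — and invoking the definition $\beta=\tfrac12\int_0^t\VDIV S\gamma+\vskw A_0$, one obtains $\vskw A=\beta-\tfrac12\GRAD\!\int_0^t\tr S\gamma$. Since $\CURL\GRAD=\bvec 0$, this collapses the previous display to $\VDIV S\gamma_t=-\CURL\CURL\beta$.

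To finish, one uses that for any vector field $v$, $S\DEF v=\DEF v-(\DIV v)I$ (as $\DEF v$ is symmetric with trace $\DIV v$), so $\VDIV S\DEF v=-\tfrac12\CURL\CURL v$ by elementary vector calculus. Taking $v=\beta$ gives $2\VDIV S\DEF\beta=-\CURL\CURL\beta=\VDIV S\gamma_t$, which is exactly \eqref{eq:proof.divgt}. (Note that, unlike Lemma \ref{lem:proof.gtt}, this argument does not use the constraint \eqref{eq:HDADM.3}.)

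The step I expect to be the main obstacle is the first identity: keeping track of transposes and signs under the column-wise operator convention, and checking that it is precisely the constraint $\VDIV A=\bvec 0$ that removes the otherwise-present term $\tfrac12\CURL(\VDIV A)$; everything else is routine. An alternative route, closer to the setup of Lemma \ref{lem:proof.gtt}, is to differentiate \eqref{eq:proof.divgt} in time — the left side becomes $-\VDIV\SYM\CURL\CURL S\gamma$, which via \eqref{eq:symcurlcurl.dec} and the identities $\VDIV\INC=\bvec 0$, $\VDIV S\HESS=\bvec 0$ equals $\VDIV S\DEF\VDIV S\gamma=2\VDIV S\DEF\beta_t$, matching the right side — and then to verify agreement at $t=0$, which again reduces to $\VDIV\SYM\CURL A_0=-\CURL\CURL\vskw A_0$; I would nonetheless prefer the direct argument above, since it avoids \eqref{eq:symcurlcurl.dec}.
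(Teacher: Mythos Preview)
Your argument is correct and follows the paper's proof essentially step for step: the paper reduces $\VDIV S\gamma_t$ to $\VDIV\SYM\CURL A$, invokes the identity $\VDIV\SYM\CURL A=-\CURL\CURL\vskw A$ under $\VDIV A=\bvec 0$ (Lemma~\ref{lem:divsymcurl.dec}), rewrites $\vskw A$ as $\beta$ plus a gradient via $2\vskw\CURL=-\VDIV S$, and closes with $\CURL\CURL=-2\VDIV S\DEF$ (identity~\eqref{eq:curlcurl.divdef}). The only difference is organizational: you prove the appendix identities \eqref{eq:divsymcurl.dec}, \eqref{eq:divTcurl.curldivT}, and \eqref{eq:curlcurl.divdef} inline rather than citing them, and you phrase $2\vskw\CURL=-\VDIV S$ in the equivalent form $\vskw\CURL\sigma=\tfrac12(\GRAD\tr\sigma-\VDIV\sigma)$ for symmetric $\sigma$.
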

\begin{proof}
  Taking the divergence of \eqref{eq:HDADM.2}, we have
  \begin{equation*}
    \begin{aligned}
      \VDIV S\gamma_t 
      &= \VDIV\SYM\CURL A \\
      \overset{\eqref{eq:HDADM.0},\eqref{eq:divsymcurl.dec}}&= -\CURL\CURL\vskw A \\
      &= -\CURL\CURL\vskw \left( A_0 + \int_0^t A_t \right) \\
      \overset{\eqref{eq:HDADM.1}}&= -\CURL\CURL\left( \vskw A_0 - \int_0^t \vskw \CURL S\gamma \right) \\
      &= -\CURL\CURL\left( \vskw A_0 + \frac12\int_0^t \VDIV S^2\gamma \right) \\
      &= -\CURL\CURL\left( \vskw A_0 + \frac12\int_0^t \VDIV S\gamma  - \frac12\int_0^t\GRAD\tr S\gamma\right) \\
      &= -\CURL\CURL\beta + \frac12\int_0^t\cancel{\CURL\CURL\GRAD\tr S\gamma} \\
      \overset{\eqref{eq:curlcurl.divdef}}&= 2\VDIV S\DEF\beta,
    \end{aligned}
  \end{equation*}
  where $A_0$ is the value of $A$ at $t = 0$.
  We used the identity $2\vskw\CURL = -\VDIV S$ on the fifth line, 
  and the fact the $S\gamma \in \Symm$ on the sixth.
\end{proof}

\begin{remark}[Recovering the shift vector]
  The shift vector $\beta$ may be computed  as the solution of 
  \[
    \begin{aligned}
      \CURL\CURL\beta &= -\frac12 \VDIV S\gamma_t \\
      \DIV \beta &= \DIV\vskw A_0.
    \end{aligned}
  \]
\end{remark}

\section{Continuous well-posedness}
\label{sec:cont.wp}
To simplify the notations and emphasize the role played by the divdiv complex \eqref{eq:divdiv.complex}, 
we will denote the differential operators by either $\DIFF$ or $\DIFF^\star$.
In the following, the domain of $\DIFF^\star$ is determined from the domain of $\DIFF$
to make $(\DIFF^\star,D(\DIFF^\star))$ the adjoint of $(\DIFF,D(\DIFF))$.
Using this notation, and writing $g := S\gamma$, the system \eqref{eq:HDADM} becomes:
\begin{equation}
  \begin{aligned}
    \DIFF^\star A &= 0 \\
    A_t + \DIFF^\star g &= 0 \\
    g_t - \DIFF A &= 0 \\
    \DIFF g &= 0.
  \end{aligned}
  \label{eq:HDADM.dver}
\end{equation}

In order to show the well-posedness of \eqref{eq:HDADM.dver}, 
we will first complete it into the following system:
\begin{equation}
  \begin{aligned}
    \partial_t \lambda_0 &= \DIFF^\star A \\
    \partial_t A &= -\DIFF^\star g - \DIFF\lambda_0 \\
    \partial_t g &= \DIFF A + \DIFF^\star\lambda_3 \\
    \partial_t \lambda_3 &= - \DIFF g .
  \end{aligned}
  \label{eq:HDADM.complete}
\end{equation}
We can readily verify that we retrieve \eqref{eq:HDADM.dver} 
from \eqref{eq:HDADM.complete} if $\lambda_0 = 0$ and $\lambda_3 = 0$.

Let 
$$
J := \begin{pmatrix} 1 & 0 & 0 & 0 \\ 0 & -1 & 0 & 0 \\ 0 & 0 & 1 & 0 \\ 0 & 0 & 0 & -1 \end{pmatrix},\quad
\mathcal{L} := J(\DIFF + \DIFF^\star):=\begin{pmatrix} 1 & 0 & 0 & 0 \\ 0 & -1 & 0 & 0 \\ 0 & 0 & 1 & 0 \\ 0 & 0 & 0 & -1 \end{pmatrix}\begin{pmatrix} 0 & \DIFF^\star & 0 & 0 \\ \DIFF  & 0 & \DIFF^\star & 0 \\ 0 & \DIFF & 0 & \DIFF^\star \\ 0 & 0 & \DIFF & 0 \end{pmatrix},   \quad U := \begin{pmatrix} \lambda_0 \\ A \\ g \\ \lambda_3 \end{pmatrix}.
$$
The system \eqref{eq:HDADM.complete} can be written in a more condensed form as 
\begin{equation}
  \partial_t U =  \mathcal{L} U .
  \label{eq:HDADM.HDW}
\end{equation}
We define the domain of $\mathcal{L}$ as
\[
  D(\mathcal{L}) := D(\DIFF) \cap D(\DIFF^\star).
\]

\begin{lemma}[Skew adjointness] \label{lem:skew.adj}
  The operator $\mathcal{L}$ is skew-adjoint as an unbounded operator.
\end{lemma}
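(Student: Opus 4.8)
The plan is to verify the two inclusions $\mathcal{L} \subseteq \mathcal{L}^\star$ (formal skew-symmetry on the stated domain) and $\mathcal{L}^\star \subseteq \mathcal{L}$ (maximality of the domain), which together give skew-adjointness. First I would record the elementary algebraic fact that the matrix differential operator $\DIFF + \DIFF^\star$ built from the divdiv complex is formally symmetric, since the off-diagonal blocks pair a differential operator with its formal adjoint; conjugating by the diagonal sign matrix $J$ (with $J^2 = \ID$, $J^\star = J$) turns this into a formally skew-symmetric operator, because $J$ anti-commutes with the tridiagonal structure in the appropriate checkerboard sense. Concretely, for $U, V \in D(\mathcal{L}) = D(\DIFF) \cap D(\DIFF^\star)$ one computes $\langle \mathcal{L} U, V\rangle = \langle J(\DIFF+\DIFF^\star)U, V\rangle$ and uses the adjoint relations defining $D(\DIFF^\star)$ (no boundary terms, by construction of the domains) to move operators across the inner product, landing on $-\langle U, \mathcal{L} V\rangle$. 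This establishes $\mathcal{L} \subseteq -\mathcal{L}^\star$, i.e. $i\mathcal{L}$ is symmetric.

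The substantive step is the reverse inclusion: showing $D(\mathcal{L}^\star) \subseteq D(\mathcal{L})$, i.e. that the domain is already maximal. The natural route is to show that $\mathcal{L} \pm \ID$ (equivalently $i\mathcal{L} \pm i\,\ID$) is surjective onto the whole Hilbert space, and then invoke the basic criterion that a closed symmetric operator whose deficiency indices both vanish is self-adjoint. So I would: (i) check $\mathcal{L}$ is closed — this follows because $\DIFF$ and $\DIFF^\star$ are closed operators and $D(\mathcal{L})$ is their intersection with the graph norm, a standard fact for Hilbert complexes; (ii) solve $(\ID - \mathcal{L})U = F$ for arbitrary $F$ in the product space. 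For (ii) the cleanest approach is to exploit the Hodge decomposition associated to the divdiv complex \eqref{eq:divdiv.complex}: decompose each component of $F$ into exact, co-exact, and harmonic parts, solve the resulting decoupled finite-dimensional (harmonic) and elliptic (exact/co-exact) problems using the Poincaré inequalities for the complex, and reassemble. Alternatively — and perhaps more robustly — one observes that $(\DIFF + \DIFF^\star)$ is the Hodge–Dirac operator of the divdiv complex, which is known to be self-adjoint on $D(\DIFF)\cap D(\DIFF^\star)$ (cf. \cite{Leopardi.Stern:16} in the FEEC setting and the abstract Hilbert complex theory in \cite{Arnold:18}); then $\mathcal{L} = J(\DIFF+\DIFF^\star)$ with $J$ a bounded, self-adjoint involution that anti-commutes appropriately, so skew-adjointness of $\mathcal{L}$ transfers from self-adjointness of $\DIFF + \DIFF^\star$. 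I would present this transfer argument as the main line: if $T = T^\star$ on $D(T)$ and $J = J^\star = J^{-1}$ with $JD(T) = D(T)$ and $J T J = -T$ on $D(T)$ (the checkerboard anti-commutation, which holds here because $\DIFF$ shifts degree by $+1$ and $\DIFF^\star$ by $-1$), then $(JT)^\star = T^\star J^\star = T J = -J T$, and the domain identity $D((JT)^\star) = D(T^\star) = D(T) = D(JT)$ is automatic.

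\textbf{Main obstacle.} The delicate point is establishing that $\DIFF + \DIFF^\star$ (the divdiv Hodge–Dirac operator) is genuinely self-adjoint on $D(\DIFF)\cap D(\DIFF^\star)$, rather than merely symmetric — equivalently, that the divdiv complex is a \emph{closed} Hilbert complex with the expected compactness/Poincaré properties on the domain under consideration. On a bounded domain with appropriate boundary conditions this is where one must cite the regularity theory and Poincaré inequalities for the divdiv complex \cite{Arnold.Hu:21,vcap2023bgg}; on $\Real^3$ (as the formulation around a flat background suggests) one would instead argue via the Fourier transform, where $\DIFF + \DIFF^\star$ becomes a symmetric matrix symbol and self-adjointness is transparent. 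I would flag that the correct choice of boundary conditions (encoded in the choice $D(\DIFF)$, with $D(\DIFF^\star)$ its adjoint) is exactly what makes the formal integration by parts in the first step have no boundary contributions, so the two halves of the argument are consistent by design. Modulo citing the appropriate closed-complex property, the remaining computations — the sign bookkeeping for $J(\DIFF+\DIFF^\star)$ and the $\pm$-deficiency argument — are routine.
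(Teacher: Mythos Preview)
Your approach is correct and essentially the same as the paper's: both arguments hinge on the anti-commutation $J\DIFF = -\DIFF J$ (and hence $J\DIFF^\star = -\DIFF^\star J$), from which skew-adjointness of $\mathcal{L} = J(\DIFF+\DIFF^\star)$ follows by transferring from the Hodge--Dirac operator $\DIFF+\DIFF^\star$. The paper's proof is terser --- it computes $(J\DIFF)^\star = -J\DIFF^\star$ and $(J\DIFF^\star)^\star = -J\DIFF$ formally and then asserts $D(\mathcal{L}^\star) = D(\DIFF^\star)\cap D(\DIFF)$ without further comment --- whereas you correctly isolate the substantive point (that $\DIFF+\DIFF^\star$ is genuinely self-adjoint on $D(\DIFF)\cap D(\DIFF^\star)$, which requires the divdiv complex to be a closed Hilbert complex) and sketch how to justify it; your deficiency-index alternative via Hodge decomposition is also valid but is not what the paper does.
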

\begin{proof}
  Noticing that, 
  for all $0 \leq k < 3$, 
  $v\in D(\DIFF)$ such that only the $k$-th component of $v$ is non-zero,
  $\DIFF J v = (-1)^{k+1} \DIFF v$, while $J\DIFF v = (-1)^{k} \DIFF v$, 
  we have $\DIFF J = - J \DIFF$.
  Therefore, since $J$ is self-adjoint,
  we have formally
  \[
    (J\DIFF)^\star = \DIFF^\star J^\star = \DIFF^\star J = - J \DIFF^\star.
  \]
  Using the same identity for $J\DIFF^\star$, we find:
  \[
    \mathcal{L}^\star = (J\DIFF + J\DIFF^\star)^\star = -(J\DIFF^\star + J\DIFF) = -\mathcal{L}.
  \]
  Viewing $\mathcal{L}$ as an unbounded operator defined on $D(\DIFF)\cap D(\DIFF^*)$, 
  we find the domain of $\mathcal{L}^\star$ to be $D(\DIFF^\star)\cap D(\DIFF) = D(\mathcal{L})$.
  Therefore $(\mathcal{L},D(\mathcal{L}))$ is skew-adjoint as an unbounded operator.
\end{proof}

The main tool to study the system \eqref{eq:HDADM.HDW} is the following result from 
\cite[Proposition~4.1.6,Corollary~2.4.9]{Cazenave.Haraux:98}:
\begin{theorem}[Hille-Yosida] \label{thm:Hille.Yosida}
Given a Hilbert space $X$, a skew-adjoint operator $A$ with domain $D(A)$ dense in $X$,
$U_0 \in D(A)$ and $f \in C([0,T],X)$ such that $f \in W^{1,1}([0,T],X)$ or $f \in L^1( (0,T),D(A))$.
There exists a unique solution 
\[
  U \in C^0([0,T],D(A)) \cap C^1([0,T],X)
\]
to the system $\partial_t{U} = AU + f$ with $U(0) = U_0$.
Moreover, if $f \in L^1( (0,T),X)$ then by \cite[Lemma~4.1.5]{Cazenave.Haraux:98},
\[
  \norm{C([0,T],X)}{U} \leq \norm{X}{U_0} + \norm{L^1((0,T),X)}{f}.
\]
\end{theorem}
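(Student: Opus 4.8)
The plan is to reduce the statement to Stone's theorem and the variation-of-constants formula. First I would show that $(A,D(A))$ generates a strongly continuous group of isometries $(S(t))_{t\in\Real}$ on $X$. Since $A$ is skew-adjoint on the real Hilbert space $X$, one has $\langle Au,u\rangle=0$ for all $u\in D(A)$, hence for every $\lambda>0$
\[
  \|(\lambda I\mp A)u\|^2=\lambda^2\|u\|^2+\|Au\|^2\ge\lambda^2\|u\|^2,\qquad u\in D(A).
\]
Together with closedness of $A$ (it equals $\pm$ its own adjoint) and density of $D(A)$, this lower bound shows that $\lambda I\mp A$ has closed range and trivial cokernel (the cokernel of $\lambda I-A$ being $\ker(\lambda I+A)=\{0\}$, and symmetrically), so both $\lambda I-A$ and $\lambda I+A$ are boundedly invertible with $\|(\lambda I\mp A)^{-1}\|\le 1/\lambda$. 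By the Hille--Yosida generation theorem applied to $A$ and to $-A$, both generate contraction semigroups; these combine into a group $(S(t))_{t\in\Real}$, and since $S(t)$ and $S(-t)=S(t)^{-1}$ are each contractions, every $S(t)$ is an isometry.

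Second, I would take the candidate
\[
  U(t):=S(t)U_0+\int_0^t S(t-s)f(s)\,\mathrm{d}s
\]
and verify the claimed regularity. With $U_0\in D(A)$, the homogeneous part $S(\cdot)U_0$ lies in $C^0([0,T],D(A))\cap C^1([0,T],X)$ with derivative $AS(t)U_0$. For the convolution term, under $f\in L^1((0,T),D(A))$ one pushes $A$ inside the integral using closedness of $A$, while under $f\in W^{1,1}([0,T],X)$ one integrates by parts in $s$ to transfer the derivative onto $f$; in either case the convolution lies in $C^0([0,T],D(A))\cap C^1([0,T],X)$ and $U$ solves $\partial_t U=AU+f$, $U(0)=U_0$. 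Uniqueness is immediate from skew-adjointness: the difference $V$ of two solutions solves the homogeneous equation with $V(0)=0$, so $\tfrac{d}{dt}\|V(t)\|^2=2\langle AV(t),V(t)\rangle=0$ and $V\equiv0$. Finally, the a priori bound needs only $\|S(t)\|\le1$ and the Duhamel formula: $\|U(t)\|\le\|U_0\|+\int_0^t\|f(s)\|\,\mathrm{d}s$, and taking the supremum over $t\in[0,T]$ gives the stated estimate.

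The main obstacle is the generation step — checking all hypotheses of Stone/Hille--Yosida (closedness, density, and the two-sided resolvent bound) — together with handling the two alternative regularity assumptions on $f$ uniformly in the convolution estimate. As this is a classical result, in the paper I would simply cite \cite{Cazenave.Haraux:98} for the generation and regularity parts, and record only the uniqueness argument and the energy estimate, both of which follow in one line from skew-adjointness.
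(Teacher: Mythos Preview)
Your proposal is correct, and you have correctly anticipated the paper's treatment: the paper does not prove this theorem at all but simply cites it from \cite[Proposition~4.1.6, Corollary~2.4.9]{Cazenave.Haraux:98}, with the $L^1$ bound attributed to \cite[Lemma~4.1.5]{Cazenave.Haraux:98}. Your sketch via Stone's theorem and the Duhamel formula is the standard route and is sound; there is nothing further to compare.
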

\begin{remark}
From \cite[Proposition 6.1.1]{Cazenave.Haraux:98}, we also have the energy estimate for skew-symmetric operators
$$
\frac12 \frac{\DIFF}{\DIFF t} \Vert U \Vert^2 = <f,U>.
$$
\end{remark}

Let $X = L^2(\Omega,\Vspace) \oplus 
  L^2(\Omega,\Tless) \oplus 
    L^2(\Omega,\Symm) \oplus 
      L^2(\Omega,\Real)$. 
\begin{theorem}[Well-posedness]
  If $D(\mathcal{L})$ is dense in $X$, 
  and the initial condition $(\lambda_0^0,A_0,S\gamma_0,\lambda_3^0) \in D(\mathcal{L})$, 
  then there exists a unique solution $(\lambda_0,A,S\gamma,\lambda_3)$ to the system \eqref{eq:HDADM.complete}.
  Moreover, its norm is bounded by
  \[
    \norm{C([0,T],X)}{(\lambda_0,A,S\gamma,\lambda_3)} \leq \norm{X}{(\lambda_0^0,A_0,S\gamma_0,\lambda_3^0)}.
  \]
  If the initial conditions are compatible with \eqref{eq:HDADM} 
  (i.e. $\lambda_0^0 = 0$, $\lambda_3^0 = 0$, $\VDIV A_0 = \bvec{0}$, and $\DIVDIV S\gamma_0 = 0$),
  then $(A,S\gamma)$ is the unique solution of the system \eqref{eq:HDADM}.
  \label{thm:cont.wp}
\end{theorem}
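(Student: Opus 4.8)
The plan is to read \eqref{eq:HDADM.complete} as the abstract Cauchy problem $\partial_t U=\mathcal{L}U$ for the skew-adjoint operator $\mathcal{L}$, apply the Hille--Yosida/Stone result, and then—for initial data satisfying the compatibility conditions—show that the auxiliary components $\lambda_0,\lambda_3$ vanish identically, which collapses \eqref{eq:HDADM.complete} back onto \eqref{eq:HDADM}.

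First I would invoke Lemma~\ref{lem:skew.adj}, so that $(\mathcal{L},D(\mathcal{L}))$ is skew-adjoint, and it is densely defined by the standing hypothesis. Applying Theorem~\ref{thm:Hille.Yosida} with the Hilbert space $X$, the operator $\mathcal{L}$, vanishing forcing $f\equiv 0$, and initial datum $U_0=(\lambda_0^0,A_0,S\gamma_0,\lambda_3^0)\in D(\mathcal{L})$ yields a unique
\[
 U=(\lambda_0,A,g,\lambda_3)\in C^0([0,T],D(\mathcal{L}))\cap C^1([0,T],X)
\]
solving $\partial_t U=\mathcal{L}U$, i.e.\ \eqref{eq:HDADM.complete} written line by line, with $U(0)=U_0$. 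The stated bound is the $f\equiv 0$ instance of the estimate in Theorem~\ref{thm:Hille.Yosida}; equivalently it follows from the energy identity $\tfrac12\tfrac{d}{dt}\norm{X}{U}^2=\langle\mathcal{L}U,U\rangle=0$, which in fact gives conservation of the $X$-norm.

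Next, assume $\lambda_0^0=\lambda_3^0=0$, $\DIFF^\star A_0=-\VDIV A_0=\bvec 0$ and $\DIFF g_0=\DIVDIV S\gamma_0=0$, so that $U_0$ lies in the subspace $Y\subset X$ on which $\lambda_0=\lambda_3=0$, $\DIFF^\star A=0$ and $\DIFF g=0$ (closed, since $\DIFF$ and $\DIFF^\star$ are closed operators). I claim $\lambda_0\equiv 0$ and $\lambda_3\equiv 0$; granting this, the first and last lines of \eqref{eq:HDADM.complete} read $\DIFF^\star A=0$ and $\DIFF g=0$ and the middle two reduce to $A_t+\DIFF^\star g=0$, $g_t-\DIFF A=0$, which is exactly \eqref{eq:HDADM.dver}, i.e.\ \eqref{eq:HDADM} with $g=S\gamma$ (and $\gamma$ recovered as $S^{-1}g$). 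To prove the claim, first take $U_0\in D(\mathcal{L}^2)$, so that $U\in C^2([0,T],X)\cap C^1([0,T],D(\mathcal{L}))$ and $\partial_{tt}U=\mathcal{L}^2U$. Since $J^2=\ID$ and $\DIFF J=-J\DIFF$, $\DIFF^\star J=-J\DIFF^\star$ (shown in the proof of Lemma~\ref{lem:skew.adj}), one gets $\mathcal{L}^2=-(\DIFF+\DIFF^\star)^2=-(\DIFF\DIFF^\star+\DIFF^\star\DIFF)$ using $\DIFF\DIFF=0$ and $\DIFF^\star\DIFF^\star=0$; this operator is block-diagonal along the complex, so $\lambda_0$ solves the Hodge--Laplace wave equation $\partial_{tt}\lambda_0+\DIFF^\star\DIFF\lambda_0=0$ and $\lambda_3$ solves $\partial_{tt}\lambda_3+\DIFF\DIFF^\star\lambda_3=0$. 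Their Cauchy data vanish: $\lambda_0(0)=\lambda_0^0=0$ and $\partial_t\lambda_0(0)=\DIFF^\star A_0=0$ by the first line of \eqref{eq:HDADM.complete} and compatibility, and similarly $\lambda_3(0)=0$, $\partial_t\lambda_3(0)=-\DIFF g_0=0$. Conservation of the wave energies $\|\partial_t\lambda_0\|^2+\|\DIFF\lambda_0\|^2$ and $\|\partial_t\lambda_3\|^2+\|\DIFF^\star\lambda_3\|^2$ then forces $\lambda_0\equiv 0$ and $\lambda_3\equiv 0$. For a general compatible $U_0\in D(\mathcal{L})$ I would approximate it in $X$ by compatible data in $D(\mathcal{L}^2)$, note that the corresponding solutions $U^n$ have $\lambda_0^n\equiv\lambda_3^n\equiv 0$, and pass to the limit using $\norm{C([0,T],X)}{U-U^n}\le\norm{X}{U_0-U_0^n}$. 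Finally, uniqueness for \eqref{eq:HDADM} follows since any solution $(A',\gamma')$ of \eqref{eq:HDADM} in the class $C^0([0,T],D(\mathcal{L}))\cap C^1([0,T],X)$ with data $(A_0,\gamma_0)$ yields, via $(0,A',S\gamma',0)$, a solution of \eqref{eq:HDADM.complete} with initial datum $U_0=(0,A_0,S\gamma_0,0)\in D(\mathcal{L})$, hence coincides with $U$ by the uniqueness part of Theorem~\ref{thm:Hille.Yosida}.

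I expect the only non-routine point to be making this decoupled-wave argument rigorous for merely $D(\mathcal{L})$-regular data: one must either carry out the density reduction with approximants that are simultaneously in $D(\mathcal{L}^2)$ and compatible with \eqref{eq:HDADM}, or, equivalently, verify that the closed subspace $Y$ reduces the skew-adjoint operator $\mathcal{L}$ (so that the unitary group $e^{t\mathcal{L}}$ leaves $Y$ invariant), which rests on $\DIFF\DIFF=0$, $\DIFF^\star\DIFF^\star=0$ and the Hodge decomposition of the $\DIVDIV$ complex. Everything else is a direct application of the cited abstract results.
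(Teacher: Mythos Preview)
Your first part (existence, uniqueness, and the norm bound via Lemma~\ref{lem:skew.adj} and Theorem~\ref{thm:Hille.Yosida}) matches the paper exactly. For the compatibility statement you take a genuinely different route: you exploit $\mathcal{L}^2=-(\DIFF\DIFF^\star+\DIFF^\star\DIFF)$ to decouple the system into Hodge--Laplace wave equations, observe that $\lambda_0$ and $\lambda_3$ then have vanishing Cauchy data, and kill them by energy conservation. The paper instead constructs the alternative solution $W=(0,\,A-P_{\Im\DEV\GRAD}A,\,S\gamma-P_{\Im\HESS}S\gamma,\,0)$ via the Hodge decomposition, checks directly that $W$ satisfies \eqref{eq:HDADM.HDW} with the same initial data, and concludes $U=W$ by uniqueness. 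Your approach makes the ``square root of the Hodge wave'' structure explicit and is conceptually clean, but it costs you the regularity step you correctly flag: the wave-energy argument needs $U_0\in D(\mathcal{L}^2)$, and closing the gap requires either your density argument (for which the existence of compatible approximants in $D(\mathcal{L}^2)$ is not entirely free) or verifying that $Y$ reduces $\mathcal{L}$. The paper's construction sidesteps this completely, since the bounded Hodge projections commute with $\partial_t$ and the verification that $W$ solves \eqref{eq:HDADM.HDW} works directly at the $C^0([0,T],D(\mathcal{L}))\cap C^1([0,T],X)$ level; in effect the paper is carrying out your second proposed fix concretely rather than abstractly.
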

\begin{proof}
  If the initial condition $(\lambda_0^0,A_0,S\gamma_0,\lambda_3^0) \in D(\mathcal{L})$,
  then Lemma \ref{lem:skew.adj} allows us to apply Theorem \ref{thm:Hille.Yosida} 
  with $f = 0$ to show the existence and uniqueness of a solution $U := (\lambda_0,A,S\gamma,\lambda_3)$
  of \eqref{eq:HDADM.HDW}, together with the bound on the norm.

  Let us now assume that $\lambda_0^0 = 0$, $\lambda_3^0 = 0$, $\VDIV A_0 = \bvec{0}$, and $\DIVDIV S\gamma_0 = 0$.
  It remains to prove that $\lambda_0 = 0$ and $\lambda_3 = 0$ to retrieve \eqref{eq:HDADM.dver}. 
  To this end, we will first show that $W := (0,A - P_{\Im \DEV\GRAD} A,S\gamma - P_{\Im \HESS} S\gamma,0)$
  is another solution and concludes with the uniqueness of the solution.
  Using the Hodge decomposition
  \[
    L^2 \otimes \Tless = \Im \DEV\GRAD \oplus \Im \CURL \oplus \mathfrak{H}^1,\quad
    L^2 \otimes \Symm = \Im \HESS \oplus \Im \SYM\CURL \oplus \mathfrak{H}^2,
  \]
where $\mathfrak{H}^1$ and $\mathfrak{H}^2$ are the (possibly empty) spaces of harmonic forms,
  we deduce that $\VDIV(A - P_{\Im \DEV\GRAD} A) = \VDIV(P_{\Im\CURL \oplus \mathfrak{H}^1} A) = 0$, 
  and $\DIVDIV(S\gamma - P_{\Im \HESS} S\gamma) = \DIVDIV P_{\Im \SYM\CURL \oplus \mathfrak{H}^2} S\gamma = 0$.
  Moreover, we have $\SYM\CURL(A - P_{\Im \DEV\GRAD} A) = \SYM\CURL A$, 
  and $\CURL(S\gamma - P_{\Im \HESS} S\gamma) = \CURL S\gamma$.
  Therefore, at all time $t$, $W \in D(\mathcal{L})$. 
  Since the orthogonal projections are continuous and commute with the time derivative, 
  we infer that $W \in C^0([0,T],D(\mathcal{L})) \cap C^1([0,T],X)$.
  Since $\VDIV A_0 = 0$, we have $(A - P_{\Im \DEV\GRAD} A)_0 = A_0$, 
  and, likewise $\DIVDIV S\gamma_0 = 0$ implies $(S\gamma - P_{\Im \HESS} S\gamma)_0 = S\gamma_0$. 
  It only remains to check that $\partial_t{W} = \mathcal{L}W$,
  that we infer from the Hodge decomposition 
  giving $\partial_t A = \partial_t P_{\Im \DEV\GRAD} A - \CURL S \gamma$, 
  and $\partial_t S\gamma = \partial_t P_{\Im \HESS}S\gamma + \SYM\CURL A$.

  Therefore $(0,A - P_{\Im \DEV\GRAD} A,S\gamma - P_{\Im \HESS} S\gamma,0)$ is a solution of \eqref{eq:HDADM.HDW}, 
  and by uniqueness we must have $U = W$, hence $\lambda_0 = 0$ and $\lambda_3 = 0$.
  Thus $(A,\gamma)$ is a solution of \eqref{eq:HDADM}. 
  We readily verify that any solution of \eqref{eq:HDADM} gives a solution of \eqref{eq:HDADM.HDW}, 
  showing the uniqueness of $(A,\gamma)$.
\end{proof}

\section{Variational formulation and discretization}
\label{sec:var.disc}

Consider the problem: 
Given $U_0 \in D(\DIFF) \cap D(\DIFF^\star)$, 
find $U \in C^1([0,T],X) \cap C^0([0,T],D(\DIFF))$,
such that $U(0) = U_0$, and for all $t\in[0,T]$, all $V\in D(\DIFF)$, 
\begin{equation}
  (\partial_t U, V) = (J\DIFF U,V) - (U,J\DIFF V).
  \label{eq:HDADM.var}
\end{equation}
\begin{lemma}
  A function $U$ is a solution of \eqref{eq:HDADM.var} if and only if $U$ is a solution of \eqref{eq:HDADM.HDW}.
\end{lemma}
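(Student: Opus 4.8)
The plan is to read the variational identity \eqref{eq:HDADM.var} as an encoding of the adjoint relation for $\DIFF$, exploiting three properties of the bounded operator $J$ that are already used in the proof of Lemma \ref{lem:skew.adj}: $J$ is self-adjoint, $J$ is an involution, and $J$ conjugates the complex into its mirror, so that $\DIFF J = -J\DIFF$ and, dually, $\DIFF^\star J = -J\DIFF^\star$; in particular $J$ maps $D(\DIFF)$ onto itself and $D(\DIFF^\star)$ onto itself.

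First I would show that a solution of \eqref{eq:HDADM.var} solves \eqref{eq:HDADM.HDW}. Fix $t$. Since $U \in C^0([0,T],D(\DIFF)) \cap C^1([0,T],X)$, the element $\partial_t U - J\DIFF U$ lies in $X$, and using the self-adjointness of $J$ the identity \eqref{eq:HDADM.var} rewrites as $(\partial_t U - J\DIFF U, V) = -(JU,\DIFF V)$ for all $V \in D(\DIFF)$. Hence the functional $V \mapsto (JU,\DIFF V)$ is bounded on $D(\DIFF)$ for the $X$-norm, which is exactly the statement that $JU \in D(\DIFF^\star)$ with $\DIFF^\star(JU) = -(\partial_t U - J\DIFF U)$. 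Because $J$ is an involution preserving $D(\DIFF^\star)$, this gives $U(t) \in D(\DIFF^\star)$, hence $U(t) \in D(\mathcal{L})$; and $\DIFF^\star(JU) = -J\DIFF^\star U$, so rearranging yields $\partial_t U = J\DIFF U + J\DIFF^\star U = \mathcal{L}U$. Finally, since $\DIFF U$ and $\partial_t U$ depend continuously on $t$ in $X$, so does $\DIFF^\star U$, and by the graph-norm characterization of $D(\DIFF^\star)$ we obtain $U \in C^0([0,T],D(\mathcal{L}))$, which is the regularity required in \eqref{eq:HDADM.HDW}; the initial condition $U(0)=U_0$ and $U_0 \in D(\DIFF)\cap D(\DIFF^\star) = D(\mathcal{L})$ are inherited directly.

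Conversely, if $U$ solves \eqref{eq:HDADM.HDW}, then $U(t) \in D(\mathcal{L}) \subseteq D(\DIFF)$ for every $t$, and for $V \in D(\DIFF)$ I would expand $(\partial_t U, V) = (\mathcal{L}U,V) = (J\DIFF U,V) + (J\DIFF^\star U,V)$. Using $J\DIFF^\star = -\DIFF^\star J$, the involution property, the membership $JU \in D(\DIFF^\star)$, and the defining adjoint identity $(\DIFF^\star(JU),V) = (JU,\DIFF V)$, the second term becomes $-(JU,\DIFF V) = -(U,J\DIFF V)$, so $(\partial_t U, V) = (J\DIFF U,V) - (U,J\DIFF V)$, which is precisely \eqref{eq:HDADM.var}.

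The only genuinely delicate point is the forward direction: one must argue that the weak identity \eqref{eq:HDADM.var}, a priori tested only against $V \in D(\DIFF)$, certifies $U(t) \in D(\DIFF^\star)$. This hinges on the $X$-boundedness of the left-hand side, which is guaranteed exactly by the assumed solution regularity $U \in C^0([0,T],D(\DIFF)) \cap C^1([0,T],X)$, and it is the step to state with care; everything else is formal bookkeeping once $\DIFF J = -J\DIFF$, $\DIFF^\star J = -J\DIFF^\star$, and the self-adjointness of $J$ are invoked.
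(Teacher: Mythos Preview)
Your proof is correct. The paper states this lemma without proof, presumably regarding the equivalence as a routine consequence of the definition of the adjoint $\DIFF^\star$; your write-up supplies exactly the missing argument, and the delicate point you single out --- that the weak identity tested against $V\in D(\DIFF)$ forces $U(t)\in D(\DIFF^\star)$ via the $X$-boundedness of $\partial_t U - J\DIFF U$ --- is indeed the only nontrivial step, handled properly by your appeal to the defining property of the adjoint together with the anti-commutation relations $\DIFF J=-J\DIFF$, $\DIFF^\star J=-J\DIFF^\star$ already established in Lemma~\ref{lem:skew.adj}.
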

The well-posedness of \eqref{eq:HDADM.var} follows from Theorem \ref{thm:cont.wp}.

Let us now consider the discretization of this problem.
For the discretization in space, we consider a discrete complex:
\begin{equation*}
  \begin{tikzcd}
    L^2(\Omega,\Vspace) \arrow[r,"\DIFF"] \arrow[d,"\uI{0}{h}"]
    & L^2(\Omega,\Tless) \arrow[r,"\DIFF"] \arrow[d,"\uI{1}{h}"]
    & L^2(\Omega,\Symm) \arrow[r,"\DIFF"] \arrow[d,"\uI{2}{h}"]
    & L^2(\Omega,\Real) \arrow[d,"\uI{3}{h}"] \\
    \uH{0}{h} \arrow[r,"\DIFF_h"]
    & \uH{1}{h} \arrow[r,"\DIFF_h"]
    & \uH{2}{h} \arrow[r,"\DIFF_h"]
    & \uH{3}{h}.
  \end{tikzcd}
\end{equation*}
We denote by $\uH{}{h} := \bigtimes_{i=0}^3 \uH{i}{h}$ the global discrete space,
and introduce two norms on the discrete space: 
the $L^2$-like norm $\norm{h}{V} := \sqrt{\sum_{i=0}^3(v_i,v_i)_h}$, 
and the graph norm $\norm{1,h}{V} := \norm{h}{V} + \norm{h}{\DIFF_h V}$.
\begin{assumption}
  The discrete complex must satisfy the following properties:
  \begin{enumerate}[label=\textbf{(A{\arabic*})}, series=asm_enum]
    \item The discrete complex admits uniform Poincaré inequalities:
      there is $c_p$ independant of the mesh size $h$, such that for all $0 \leq i < 3$,
      \[
        \forall \uvec{\tau}_h \in \uH{i}{h} \cap (\Ker \DIFF_h)^{\perp_h},
        \norm{h}{\uvec{\tau}_h} \leq c_p \norm{h}{\DIFF_h \uvec{\tau}_h},
      \]
      where $ (\Ker \DIFF_h)^{\perp_h}$ denotes the orthogonal complement of $\Ker \DIFF_h$ with respect to the $\norm{h}{\bs}$ inner product.
      \label{asm:disc.spaces.poincare}
    \item The interpolator is a cochain map, i.e. $\forall 0 \leq i < 3$, $\DIFF_h\uI{i}{h} = \uI{i+1}{h}\DIFF$.
      \label{asm:disc.spaces.cochain}
  \end{enumerate}
  \label{asm:disc.spaces}
\end{assumption}
Notice that the right-hand side of \eqref{eq:HDADM.var} is, in general, degenerate.
If $W$ is such that $\DIFF W = 0$ and $\DIFF^* W = 0$ (if $W$ is an harmonic form), then 
\eqref{eq:HDADM.var} becomes $\partial_t W = 0$. 
Since the equation is linear, this simply means that the harmonic component of a solution is preserved by the evolution, 
and it does not require a particular treatment in the implementation.
We introduce the notation $P_{\mathfrak{H}}$ for the $L^2$ orthonormal projector into the subspace of harmonic forms.
We use the same notation for the projector acting on the continuous and the discrete spaces; 
its meaning can be deduced from its argument.
With this convention of notation, we have on both continuous and discrete levels
\begin{equation}
  \partial_t P_{\mathfrak{H}} U = 0.
  \label{eq:Phdt}
\end{equation}
We denote by $\mathfrak{H}_h \subset \uH{}{h}$ the space of discrete harmonic forms.

\subsection{Spatial discretization}
We first consider the discretization of the spatial derivative appearing in \eqref{eq:HDADM.var}. 
We define the bilinear operator $\mathcal{L}_h : \uH{}{h}\times\uH{}{h}\to\Real$ 
for all $V_h,W_h\in\uH{}{h}$ by
\begin{equation}
  \mathcal{L}_h(V_h,W_h) := (J\DIFF_h V_h,W_h)_h - (V_h,J\DIFF_h W_h)_h.
  \label{eq:def.Lh}
\end{equation}
Notice that $\mathcal{L}_h$ is continuous for the graph norm.

\begin{lemma}[Partial Inf-Sup stability of $\mathcal{L}_h$]
  The bilinear form $\mathcal{L}_h$ is inf-sup stable on $\mathfrak{H}^{\perp_h}$:
  there is $C_L > 0$ depending only on $c_p$ such that, 
  for all $V_h \in \mathfrak{H}^{\perp_h}$, 
  \[
    \sup_{W_h\in\mathfrak{H}^{\perp_h}, W_h\neq 0} \frac{\mathcal{L}_h(V_h,W_h)}{\norm{1,h}{W_h}} \geq C_L \norm{1,h}{V_h}.
  \]
  \label{lem:infsup.Lh}
\end{lemma}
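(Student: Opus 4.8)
The plan is to proceed exactly as one does for the continuous Hodge-Dirac operator: given $V_h \in \mathfrak{H}^{\perp_h}$, construct an explicit test function $W_h$ built from $V_h$ plus correction terms obtained from the discrete complex and the Poincaré inequality \ref{asm:disc.spaces.poincare}, then show $\mathcal{L}_h(V_h, W_h) \gtrsim \norm{1,h}{V_h}^2$ while $\norm{1,h}{W_h} \lesssim \norm{1,h}{V_h}$.

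First I would decompose $V_h$ using the discrete Hodge decomposition of $\uH{}{h}$: since $V_h \perp_h \mathfrak{H}_h$, we can write each component of $V_h$ as a sum of a piece in $\Im \DIFF_h$ and a piece in $(\Ker \DIFF_h)^{\perp_h}$ (the discrete analogue of $\Im \DIFF_h^\star$). Write $V_h = V_h^{\DIFF} + V_h^{\star}$ with $\DIFF_h V_h^{\star} = \DIFF_h V_h$ and $V_h^{\DIFF} \in \Im \DIFF_h$, $V_h^{\star} \in (\Ker \DIFF_h)^{\perp_h}$; also pick $Z_h \in (\Ker \DIFF_h)^{\perp_h}$ with $\DIFF_h Z_h = V_h^{\DIFF}$ (possible since $V_h^{\DIFF} \in \Im \DIFF_h$ and, by the complex property, $\DIFF_h V_h^{\DIFF} = 0$; one also checks $V_h^{\DIFF}$ lies in the range because it is orthogonal to harmonics). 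The Poincaré inequality then controls $\norm{h}{V_h^\star} \le c_p \norm{h}{\DIFF_h V_h}$ and $\norm{h}{Z_h} \le c_p \norm{h}{V_h^{\DIFF}} \le c_p \norm{h}{V_h}$. The candidate test function is $W_h := V_h + t_1 J \DIFF_h V_h^\star \mp ...$ — more precisely one takes $W_h$ of the form $\alpha V_h + \beta\, (\text{a preimage under }\DIFF_h\text{ of }J$-rotated $V_h^\star) + \gamma\, J\DIFF_h Z_h$ with small positive constants; the $J$-factors are inserted so that, after using $\DIFF_h J = -J \DIFF_h$ on graded components (the discrete version of the identity in Lemma \ref{lem:skew.adj}), the cross terms in $\mathcal{L}_h(V_h, W_h)$ combine with a definite sign. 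Expanding $\mathcal{L}_h(V_h,W_h)$ and using the adjointness-free definition \eqref{eq:def.Lh} together with the graded anticommutation, the leading contributions are $\sim \norm{h}{\DIFF_h V_h}^2$ (from the $V_h^\star$ correction, via Poincaré) and $\sim \norm{h}{V_h^{\DIFF}}^2$ (from the $Z_h$ correction), and since $\norm{h}{V_h}^2 = \norm{h}{V_h^{\DIFF}}^2 + \norm{h}{V_h^\star}^2 \le \norm{h}{V_h^{\DIFF}}^2 + c_p^2\norm{h}{\DIFF_h V_h}^2$, summing gives a lower bound $\gtrsim \norm{h}{V_h}^2 + \norm{h}{\DIFF_h V_h}^2 = \norm{1,h}{V_h}^2$, with the constant depending only on $c_p$. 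The bound $\norm{1,h}{W_h} \lesssim \norm{1,h}{V_h}$ follows from the Poincaré estimates on $Z_h$ and the preimages, noting $\norm{h}{\DIFF_h W_h}$ involves only $\DIFF_h$ of things already controlled (and $\DIFF_h^2 = 0$ kills the top-order correction's derivative).

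The main obstacle I anticipate is bookkeeping the signs and the grading in the discrete setting: unlike the continuous case there is no $\DIFF_h^\star$ operator available, so every manipulation must be done through $\mathcal{L}_h$'s definition and the identity $\DIFF_h J = -J\DIFF_h$ on homogeneous components, and one must be careful that the "$\Im\DIFF_h$ vs.\ $(\Ker\DIFF_h)^{\perp_h}$'' splitting interacts correctly with $J$ (which permutes/negates gradings but preserves the decomposition, since $J$ commutes with the $L^2$-orthogonal projectors onto each homogeneous degree). A secondary point is ensuring the preimages $Z_h$ and the $\DIFF_h$-preimage of $J V_h^\star$ can be chosen in $(\Ker\DIFF_h)^{\perp_h}$ and that their norms are Poincaré-controlled uniformly — this is exactly what \ref{asm:disc.spaces.poincare} provides, so the argument is closed. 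Once the test function is correctly assembled, the rest is a Young's-inequality absorption of cross terms, choosing the small coefficients $\alpha,\beta,\gamma$ in the right order.
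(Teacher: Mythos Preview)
Your overall strategy---Hodge-decompose $V_h$, invoke the Poincar\'e inequality to build preimages, and assemble an explicit $W_h$---is exactly the paper's. However, your concrete test function is muddled in ways that would break the argument. First, the term $\alpha V_h$ contributes nothing: $\mathcal{L}_h$ is skew-symmetric, so $\mathcal{L}_h(V_h,V_h)=0$. Second, ``a preimage under $\DIFF_h$ of $J$-rotated $V_h^\star$'' does not exist in general: $V_h^\star \in (\Ker\DIFF_h)^{\perp_h}$, which is orthogonal to $\Im\DIFF_h$, so $JV_h^\star$ has no $\DIFF_h$-preimage. Third, your term $J\DIFF_h Z_h = JV_h^{\DIFF}$ yields $\mathcal{L}_h(V_h,JV_h^{\DIFF}) = (\DIFF_h V_h, V_h^{\DIFF})_h$, a cross term rather than the claimed $\norm{h}{V_h^{\DIFF}}^2$; what you actually want is $JZ_h$ (not $J\DIFF_h Z_h$), which gives $\mathcal{L}_h(V_h,JZ_h) = (\DIFF_h V_h, Z_h)_h + (V_h, V_h^{\DIFF})_h = 0 + \norm{h}{V_h^{\DIFF}}^2$ since $\DIFF_h V_h \in \Ker\DIFF_h$ and $Z_h \perp \Ker\DIFF_h$.

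Once you make these corrections you recover precisely the paper's construction: take $W_h = J\DIFF_h V_h + \rho_h$ where $\rho_h \in (\Ker\DIFF_h)^{\perp_h}$ satisfies $\DIFF_h\rho_h = -J P_{\Im\DIFF_h} V_h$ (your $JZ_h$ with a sign). Then $\mathcal{L}_h(V_h,W_h) = \norm{h}{\DIFF_h V_h}^2 + \norm{h}{P_{\Ker\DIFF_h}V_h}^2$ exactly, with no cross terms to absorb and no Young's inequality or coefficient tuning needed; the Poincar\'e inequality finishes as you say. So the anticipated ``main obstacle'' of sign bookkeeping is real, and the clean resolution is to drop the superfluous terms and use the two-term test function directly.
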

\begin{proof}
  Let $V_h\in\mathfrak{H}^{\perp_h}$. 
  We infer from the discrete Poincaré inequality Assumption \ref{asm:disc.spaces.poincare}
  the existence of $\rho_h\in\uH{}{h}\cap(\Ker\DIFF_h)^{\perp_h}$ such that $\DIFF_h\rho_h = -JP_{\Image \DIFF_h} V_h$ and $\norm{h}{\rho_h} \leq c_p \norm{h}{V_h}$.
  Since $V_h\in\mathfrak{H}^{\perp_h}$, we have $P_{\Image \DIFF_h} V_h = P_{\Ker \DIFF_h} V_h$.
  Setting $W_h = J\DIFF_h V_h + \rho_h$, we have
  \begin{equation}
    \begin{aligned}
      \mathcal{L}_h(V_h,W_h) 
      &:= (J\DIFF_h V_h, J\DIFF_h V_h)_h + (J\DIFF_h V_h, \rho_h)_h
      - (V_h,J\DIFF_h\rho_h)_h \\
      & = \norm{h}{\DIFF_h V_h}^2 + 0 + \norm{h}{P_{\Ker\DIFF_h}V_h}^2 \\
      & \geq \frac12 \norm{h}{\DIFF_h V_h}^2 + \frac{1}{2 c_p^2} \norm{h}{(\ID - P_{\Ker\DIFF_h})V_h}^2 
      + \norm{h}{P_{\Ker\DIFF_h}V_h}^2 \\
      & \geq \frac{1}{2 \max\lbrace 1, c_p^2 \rbrace} \norm{1,h}{V_h}^2 .
    \end{aligned}
    \label{eq:infsup.Lh.P1}
  \end{equation}
  Moreover, from the construction of $W_h$, we have
  \begin{equation}
    \norm{1,h}{W_h} = \norm{h}{\DIFF_h V_h} + \norm{h}{\rho_h} + \norm{h}{\DIFF_h \rho_h} 
    \leq (1+c_p) \norm{h}{U} + \norm{h}{\DIFF_h V_h} \leq (1+c_p)\norm{1,h}{V_h} .
    \label{eq:infsup.Lh.P2}
  \end{equation}
  We infer the result dividing \eqref{eq:infsup.Lh.P1} by \eqref{eq:infsup.Lh.P2}.
\end{proof}

\subsection{Time discretization}
For the discretization in time, we consider a spliting of the time interval 
$t_0 = 0 < t_1 < \dots < t_N = T$ and, for each space of the complex \eqref{eq:divdiv.complex},
an linear operator approximating the time derivative
at $t_n$, 
$\mathcal{T}^n := \mathcal{T}_B^n + \mathcal{T}_L^n$, together with its discrete counterpart 
$\mathcal{T}_h^n := \mathcal{T}_{h,B}^n + \mathcal{T}_{h,L}^n$.
The spliting is done to seperate the implicit part $\mathcal{T}_B^n$ from the explicit part $\mathcal{T}_L^n$ of the operator. 
Explicitly, the component 
      $\mathcal{T}^n_{h,L}(V_h^m)_{0 \leq m \leq n} = \mathcal{T}^n_{h,L} (V_h^m)_{0 \leq m < n)}$
      is a vector of $\uH{i}{h}$ that does not depends on $V_h^n$, 
      while the component $\mathcal{T}^n_{h,B}(V_h^m)_{0 \leq m \leq n} = \mathcal{T}^n_{h,B}V_h^n$
      is linear in $V_h^n$. 
\begin{example}
  For the Euler method, the time discretization operator reads:
  \[
    \mathcal{T}_{h,B}^n((V^m_h)_{m\leq n}) := \frac{V^n}{t_n-t_{n-1}}, \quad 
    \mathcal{T}_{h,L}^n((V^m_h)_{m\leq n}) := \frac{-V^{n-1}}{t_n-t_{n-1}}.
  \]
\end{example}
Moreover, the time discretization must satisfy the following properties:
\begin{assumption}
  For all $0 \leq n \leq N$, and all $0 \leq i \leq 3$,
  \begin{enumerate}[resume*=asm_enum]
    \item The operators commute with the interpolator, i.e. 
      $\uI{i}{h}\mathcal{T}^n_B = \mathcal{T}^n_{h,B} \uI{i}{h}$
      and $\uI{i}{h}\mathcal{T}^n_L = \mathcal{T}^n_{h,L} \uI{i}{h}$.
      \label{asm:disc.time.cochain}
    \item The operator commutes with the spatial derivatives:
      $\DIFF\mathcal{T}^n = \mathcal{T}^n\DIFF$ and $\DIFF^\star\mathcal{T}^n = \mathcal{T}^n\DIFF^\star$.
      \label{asm:disc.time.com}
    \item There exists $\theta_n > 0$, such that for all $V_h\in\uH{i}{h}$, 
      $(\mathcal{T}^n_{h,B}V_h,V_h)_h \geq \theta_n \norm{h}{V_h}^2$.
      Moreover, we must have $\sum_{n=1}^N \frac{1}{\theta_n} \leq C T$ 
      for some constant $C$ independant of $N$. 
      \label{asm:disc.time.coer}
    \item The discrete operator commutes with $P_{\mathfrak{H}}$: $P_{\mathfrak{H}}\mathcal{T}^n_h = \mathcal{T}^n_h P_{\mathfrak{H}}$, 
      and the continuous operator is zeroth order consistent: 
      if $W$ is constant in time then $\mathcal{T}^n W = 0$.
      \label{asm:disc.time.harmonics}
  \end{enumerate}
  \label{asm:disc.time}
\end{assumption}
The time discretization operators are defined on each space $\uH{i}{h}$ and extend to $\uH{}{h}$ diagonally.
In order to quantify the error on the discrete solution,
we introduce the following notations:
\begin{align}
  \epsilon_{\Delta t,n} (V) &:= \mathcal{T}^n V - \partial_t V(t_n), \label{eq:def.espT} \\
  \epsilon_h (V,W_h) &:= (\uI{}{h}V,\DIFF_h W_h)_h - (\uI{}{h}\DIFF^\star V,W_h)_h. \label{eq:def.esph}
\end{align}

The discrete problem is: 
Given $U_h^0 \in \uH{}{h}$, 
find $(U_h^n)_{1\leq n\leq N} \in \left(\uH{}{h}\right)^N$ 
such that for all $1 \leq n \leq N$, 
and all $V_h \in \uH{}{h}$, 
\begin{equation}
  a_h^n(U_h^n,V_h) = l_h^n(V_h),
  \label{eq:disc.var}
\end{equation}
where,  
\begin{equation*}
  a_h^n(U_h^n,V_h) := (\mathcal{T}^n_{h,B}U_h^n,V_h) - \mathcal{L}_h(U_h^n,V_h),
  \quad l_h^n(V_h) := -(\mathcal{T}^n_{h,L}(U_h^m),V_h)_h.
\end{equation*}

\begin{theorem}[Error estimate]
  There exists a unique solution $(U_h^n)$ to the problem \eqref{eq:disc.var}.
  Moreover, denoting by $U$ the solution of the continuous problem \eqref{eq:HDADM.var}, 
  if for all $1 \leq n \leq N$, $(V_h^n)_{n\leq N}\in(\uH{}{h})^N$,
  \begin{equation*}
    \begin{aligned}
      \norm{h}{\mathcal{T}^n_{h,L} (V_h^m)_{m<n} } &\leq \theta_n \norm{h}{V_h^{n-1}}, \\
      \norm{h}{\uI{}{h} \epsilon_{\Delta t,N}(U)} + \norm{h}{\uI{}{h} \mathcal{T}^n (\epsilon_{\Delta t,m}(U))_{m\leq n}} &\leq (\Delta t)^l E(\vert U \vert), \\
      \opn{1,h}{\epsilon_h(U,\cdot)} + \opn{1,h}{\epsilon_h(\mathcal{T}^nU,\cdot)} &\leq h^{r+1} E(\vert U \vert),
    \end{aligned}
  \end{equation*}
  with $r,l \in \Natural$ the order of convergence in space and time, 
  $h$ the characteristic size of the discrete spaces, 
  and $E(\vert U \vert)$ proportional to some semi-norm of $U$, 
  then, 
  \[
    \norm{h}{U_h^N - \uI{}{h}U(T)} \leq ((\Delta t)^l + h^{r+1}) C E(\vert U \vert) + 
    \norm{h}{(P_{\mathfrak{H}}\uI{}{h} - \uI{}{h}P_{\mathfrak{H}})U},
  \]
  for some constant $C$ independant of $\Delta t$ and $h$.
  \label{thm:err.estimate}
\end{theorem}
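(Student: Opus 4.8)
The plan is to proceed in three stages: (i) establish existence and uniqueness of $(U_h^n)$; (ii) derive an evolution equation for the interpolation error $e_h^n := U_h^n - \uI{}{h}U(t_n)$, isolating the consistency terms; (iii) propagate the error in time using the coercivity of the implicit part $\mathcal{T}^n_{h,B}$ and the skew-symmetry of $\mathcal{L}_h$, then take the supremum over $n$.

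For (i), I would fix $n$ and view \eqref{eq:disc.var} as a square linear system on $\uH{}{h}$. By Assumption \ref{asm:disc.time.coer}, $(\mathcal{T}^n_{h,B}V_h,V_h)_h \geq \theta_n\norm{h}{V_h}^2 > 0$, while $\mathcal{L}_h(V_h,V_h) = 0$ since $\mathcal{L}_h$ is skew-symmetric (it is the discrete analogue of Lemma \ref{lem:skew.adj}: indeed $\DIFF_h J = -J\DIFF_h$ holds componentwise by the same grading argument, so $\mathcal{L}_h(V_h,V_h) = (J\DIFF_h V_h,V_h)_h - (V_h,J\DIFF_h V_h)_h$ and the two terms coincide up to sign). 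Hence $a_h^n(V_h,V_h) \geq \theta_n\norm{h}{V_h}^2$, so $a_h^n$ is coercive on the finite-dimensional space $\uH{}{h}$, giving a unique solution by Lax-Milgram (or simply invertibility of the associated matrix). Note the partial inf-sup stability of Lemma \ref{lem:infsup.Lh} is not needed for well-posedness here; coercivity of the time term suffices, which is why the scheme tolerates the degeneracy of $\mathcal{L}_h$ on harmonic forms.

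For (ii) and (iii), I would write the error equation. Since $U$ solves the continuous problem, for any $V_h\in\uH{}{h}$ one has, applying $\uI{}{h}$ to $\partial_t U(t_n) = J\DIFF U(t_n) + J\DIFF^\star U(t_n)$ and using Assumptions \ref{asm:disc.spaces.cochain} and \ref{asm:disc.time.cochain}, an identity of the form $a_h^n(\uI{}{h}U(t_n),V_h) = l_h^n((\uI{}{h}U(t_m))) + R^n(V_h)$, where the remainder $R^n$ collects exactly the time-consistency defect $\epsilon_{\Delta t,n}(U)$ (interpolated) and the spatial-consistency defect $\epsilon_h(U,\cdot)$ and $\epsilon_h(\mathcal{T}^nU,\cdot)$ appearing in \eqref{eq:def.espT}--\eqref{eq:def.esph}; here the commutation Assumption \ref{asm:disc.time.com} is used to move $\mathcal{T}^n$ past $\DIFF$ and $\DIFF^\star$. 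Subtracting from \eqref{eq:disc.var} gives $a_h^n(e_h^n,V_h) = -R^n(V_h)$, i.e. $(\mathcal{T}^n_{h,B}e_h^n,V_h)_h - \mathcal{L}_h(e_h^n,V_h) = -R^n(V_h)$. Choosing $V_h = e_h^n$ kills the $\mathcal{L}_h$ term by skew-symmetry, and coercivity yields $\theta_n\norm{h}{e_h^n}^2 \leq (\mathcal{T}^n_{h,B}e_h^n,e_h^n)_h$. Writing $\mathcal{T}^n_{h,B}e_h^n = \mathcal{T}^n_h e_h^n - \mathcal{T}^n_{h,L}(e_h^m)_{m<n}$ and using the first hypothesis $\norm{h}{\mathcal{T}^n_{h,L}(V_h^m)_{m<n}} \leq \theta_n\norm{h}{V_h^{n-1}}$ to bound the explicit part by $\theta_n\norm{h}{e_h^{n-1}}$, one obtains a recursion of Grönwall type: $\norm{h}{e_h^n} \leq \norm{h}{e_h^{n-1}} + \frac{1}{\theta_n}\big(\text{consistency norms at step } n\big)$, where the consistency norms are controlled by $(\Delta t)^l E(\vert U\vert)$ and $h^{r+1}E(\vert U\vert)$ via the stated hypotheses and the operator norm of $\epsilon_h$ against $\norm{1,h}{\cdot}$ (using $\norm{h}{\cdot}\leq\norm{1,h}{\cdot}$, and that $\mathcal{L}_h$ is continuous in the graph norm). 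Summing over $n=1,\dots,N$ and invoking $\sum_n \frac1{\theta_n}\leq CT$ from Assumption \ref{asm:disc.time.coer} telescopes this to $\norm{h}{e_h^N}\leq \norm{h}{e_h^0} + ((\Delta t)^l + h^{r+1})CE(\vert U\vert)$. Finally, to handle the initial term $\norm{h}{e_h^0}$: the harmonic component is exactly preserved by both the continuous flow \eqref{eq:Phdt} and the discrete scheme (using Assumption \ref{asm:disc.time.harmonics} and $P_{\mathfrak H}\mathcal{T}^n_h = \mathcal{T}^n_h P_{\mathfrak H}$), while on $\mathfrak{H}^{\perp_h}$ the scheme is driven entirely by consistency; taking $U_h^0 = \uI{}{h}U_0$ up to the projector mismatch leaves precisely the term $\norm{h}{(P_{\mathfrak H}\uI{}{h} - \uI{}{h}P_{\mathfrak H})U}$ in the final bound.

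The main obstacle I anticipate is stage (ii): carefully accounting for \emph{all} the consistency contributions and checking that they are each covered by one of the three hypothesized bounds — in particular, correctly combining the spatial-consistency term $\epsilon_h$ evaluated at $U$ with the one evaluated at $\mathcal{T}^nU$ (the latter arising because the implicit-in-time term itself contains a spatial operator that must be tested against $V_h$), and making sure the $\norm{1,h}{\cdot}$ versus $\norm{h}{\cdot}$ bookkeeping is consistent so that the final estimate is in the $\norm{h}{\cdot}$ norm as stated. The interplay between the operator-norm estimate $\opn{1,h}{\epsilon_h(U,\cdot)}$ and the fact that we ultimately test only against $e_h^n$ (whose graph norm we do not control uniformly) requires using the Poincaré-based bound of Lemma \ref{lem:infsup.Lh} — or rather its consequence that on $\mathfrak H^{\perp_h}$ the $\norm{1,h}{\cdot}$ and a dual-type norm are comparable — to transfer the estimate back to the plain $L^2$-like norm; handling the harmonic part separately (where this fails) is what produces the last term in the bound.
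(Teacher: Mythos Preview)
Your well-posedness argument (i) is correct and matches the paper's coercivity lemma for $a_h^n$. The gap is in (ii)--(iii), and you have in fact identified it yourself in the last paragraph without resolving it: when you test the error equation with $V_h = e_h^n$, the spatial consistency contribution $\epsilon_h(U, e_h^n)$ (and $\epsilon_h(\mathcal{T}^nU, e_h^n)$) is only controlled by $\opn{1,h}{\epsilon_h(U,\cdot)}\,\norm{1,h}{e_h^n}$, while the coercivity/Gr\"onwall recursion gives you no handle on $\norm{1,h}{e_h^n}$ --- only on $\norm{h}{e_h^n}$. The ratio $\norm{1,h}{e_h^n}/\norm{h}{e_h^n}$ can blow up like a negative power of $h$ by an inverse inequality, which would destroy the spatial rate. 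Lemma \ref{lem:infsup.Lh} alone does not rescue this: it is an inf-sup for $\mathcal{L}_h$, not a bound on $\norm{1,h}{e_h^n}$ in terms of quantities your recursion controls.

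The paper's remedy is to insert an \emph{auxiliary} sequence $\hat U_h^n$ defined by a purely spatial problem, $\mathcal{L}_h((\ID-P_{\mathfrak H})\hat U_h^n, V_h) = ((\ID-P_{\mathfrak H})\uI{}{h}\mathcal{T}^nU, V_h)_h$ on $\mathfrak H^{\perp_h}$ together with $P_{\mathfrak H}\hat U_h^n = \uI{}{h}P_{\mathfrak H}U(t_n)$. The inf-sup of Lemma \ref{lem:infsup.Lh} then bounds $\norm{1,h}{(\ID-P_{\mathfrak H})(\uI{}{h}\mathcal{T}^nU - \mathcal{T}^n_h(\hat U_h^m)_{m\le n})}$ and $\norm{1,h}{(\ID-P_{\mathfrak H})(\uI{}{h}U(T) - \hat U_h^N)}$ directly by the $\epsilon_h$ and $\epsilon_{\Delta t}$ terms --- this is where the graph norm is spent and the spatial consistency absorbed. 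Separately, $\norm{h}{\hat U_h^n - U_h^n}$ is handled by your coercivity/recursion argument, but now the right-hand side of the error equation for $\hat U_h^n - U_h^n$ contains no $\epsilon_h$ term (it has been hidden in the definition of $\hat U_h^n$), so only $\norm{h}{\cdot}$ norms appear and the recursion closes. The harmonic gap $\norm{h}{(P_{\mathfrak H}\uI{}{h} - \uI{}{h}P_{\mathfrak H})U}$ enters not through $e_h^0$ as you suggest, but through the prescription $P_{\mathfrak H}\hat U_h^N = \uI{}{h}P_{\mathfrak H}U(t_N)$ when comparing $\hat U_h^N$ with $\uI{}{h}U(T)$ at the final step.
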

\begin{proof}
  The proof is detailed in Section \ref{sec:err.estimate.proof}
\end{proof}
\begin{remark}
  The last term can be seen as the harmonic gap. In many cases, it can be estimated from other argument.
  For instance, if the domain is contractible, the only harmonic forms are in the space of $0$-forms (or $3$-forms if enforcing Dirichlet boundary conditions).
  Since the $0$ and $3$-forms components of $U$ are zero, we have 
  $\norm{h}{(P_{\mathfrak{H}}\uI{}{h} - \uI{}{h}P_{\mathfrak{H}})U} = 0$.
\end{remark}
\begin{remark}
  In order to simplify the proof, we restrict ourselves to first order time discretizations.
  However, only a simple adaptation is necessary to consider more general time discretizations.
\end{remark}

\subsection{Proof of the error estimate} \label{sec:err.estimate.proof}

\begin{lemma}[Coercivity for the $L^2$-norm]
  \label{lem:disc.infsup}
  The bilinear form $a_h^n$ is coercive for the norm $\norm{h}{\cdot}$:
  \[
    \forall V_h \in \uH{}{h}, 
    \quad a_h^n(V_h,V_h) \geq \theta_n \norm{h}{V_h}^2 .
  \]
\end{lemma}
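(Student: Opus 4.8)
The plan is to exploit the fact that, when tested against the same argument, the spatial part $\mathcal{L}_h$ vanishes, reducing the claim to the coercivity of the implicit time operator $\mathcal{T}^n_{h,B}$.

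First I would observe that $\mathcal{L}_h$ is skew-symmetric in its two arguments: by definition \eqref{eq:def.Lh},
\[
  \mathcal{L}_h(V_h,W_h) = (J\DIFF_h V_h,W_h)_h - (V_h,J\DIFF_h W_h)_h = -\mathcal{L}_h(W_h,V_h),
\]
using only that $(\bs,\bs)_h$ is a symmetric bilinear form. Hence $\mathcal{L}_h(V_h,V_h) = 0$ for every $V_h\in\uH{}{h}$. Consequently, for any $V_h$,
\[
  a_h^n(V_h,V_h) = (\mathcal{T}^n_{h,B}V_h,V_h)_h - \mathcal{L}_h(V_h,V_h) = (\mathcal{T}^n_{h,B}V_h,V_h)_h.
\]

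Next I would invoke Assumption \ref{asm:disc.time.coer}, which gives $(\mathcal{T}^n_{h,B}v_h,v_h)_h \geq \theta_n\norm{h}{v_h}^2$ on each component space $\uH{i}{h}$. Since $\mathcal{T}^n_{h,B}$ acts diagonally on $\uH{}{h} = \bigtimes_{i=0}^3 \uH{i}{h}$ and $\norm{h}{V_h}^2 = \sum_{i=0}^3 (v_i,v_i)_h$, summing the componentwise bounds yields $(\mathcal{T}^n_{h,B}V_h,V_h)_h \geq \theta_n\norm{h}{V_h}^2$, whence $a_h^n(V_h,V_h) \geq \theta_n\norm{h}{V_h}^2$, as claimed.

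There is no substantial obstacle here: the only thing to be careful about is recording that the coercivity assumption, stated per component, transfers to the product space, which is immediate because both the discrete inner product and $\mathcal{T}^n_{h,B}$ are defined componentwise; and that the cross terms in $\mathcal{L}_h$ cancel exactly, which relies solely on the symmetry of $(\bs,\bs)_h$ and not on any property of $\DIFF_h$ or $J$.
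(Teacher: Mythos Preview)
Your proposal is correct and follows essentially the same approach as the paper: you use the skew-symmetry of $\mathcal{L}_h$ to cancel the spatial contribution and then invoke Assumption~\ref{asm:disc.time.coer} for the coercivity of $\mathcal{T}^n_{h,B}$. The only difference is that you spell out the componentwise-to-product-space passage explicitly, which the paper leaves implicit.
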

\begin{proof}
  The result stems from the skew-symmetry of the operator $\mathcal{L}$, and of the coercivity \ref{asm:disc.time.coer} of $\mathcal{T}_{h,B}^n$.
  Evaluating $a_h^n(V_h,V_h)$, we have:
  \begin{equation*}
    \begin{aligned}
      a_h^n(V_h,V_h) &= (\mathcal{T}_{h,B}^nV_h,V_h)_h 
      - (J\DIFF_h V_h,V_h)_h + (V_h,J\DIFF_h V_h)_h \\
      &\geq \theta_n \norm{h}{V_h}^2.
    \end{aligned}
  \end{equation*}
\end{proof}

Let $U$ be the solution of \eqref{eq:HDADM.var}.
In order to derive the error estimate, we first introduce another problem:
For all $0 < n \leq N$, find $\hat{U}_h^n\in\uH{}{h}$, such that for all $V_h\in\uH{}{h}$, 
\begin{equation}
  \mathcal{L}_h((\ID - P_{\mathfrak{H}})\hat{U}_h^n,V_h) = ((\ID - P_{\mathfrak{H}})\uI{}{h}\mathcal{T}^nU,V_h)_h, \quad P_{\mathfrak{H}} \hat{U}_h^n = \uI{}{h}P_{\mathfrak{H}}U(t_n).
  \label{eq:def.hatUh}
\end{equation}
\begin{remark}
  The difficulty here is that we must use the graph norm to get the error estimates on the spatial discretization, 
  but we also need to sharply control the $L^2$ norm to preserve the convergence in time.
  Hence the need of this auxiliary problem.
\end{remark}

\begin{lemma}[Error estimate from previous steps]
  \label{lem:err.time}
  Let $(U_h^n)_n$ be solution of \eqref{eq:disc.var}, 
  and $(\hat{U}_h^n)_n$ be solution of \eqref{eq:def.hatUh}.
  For any $0 < n \leq N$, it holds:
  \[
    \norm{h}{\hat{U}_h^n - U_h^n} \leq 
  \frac{1}{\theta_n}\norm{h}{(\ID-P_{\mathfrak{H}})\left( \mathcal{T}^n_h(\hat{U}_h^m)_{m\leq n} -  \uI{}{h}\mathcal{T}^nU\right)}
    + \frac{1}{\theta_n}\norm{h}{\mathcal{T}^n_{h,L}\left( (\hat{U}_h^m)_{m<n} - (U_h^m)_{m<n} \right)}.
  \]
\end{lemma}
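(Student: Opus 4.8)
The plan is to test the coercivity of $a_h^n$ from Lemma~\ref{lem:disc.infsup} against $V_h := \hat{U}_h^n - U_h^n$, giving $\theta_n\norm{h}{\hat{U}_h^n - U_h^n}^2 \le a_h^n(\hat{U}_h^n - U_h^n,\hat{U}_h^n - U_h^n)$, and then to bound the right-hand side by rewriting $a_h^n(\hat{U}_h^n - U_h^n,\cdot)$ through the two defining problems \eqref{eq:disc.var} and \eqref{eq:def.hatUh}.

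For the rewriting, first note $a_h^n(U_h^n,V_h)=l_h^n(V_h)=-(\mathcal{T}^n_{h,L}(U_h^m)_{m<n},V_h)_h$. On the other side, using $\mathcal{T}^n_h=\mathcal{T}^n_{h,B}+\mathcal{T}^n_{h,L}$ to write $\mathcal{T}^n_{h,B}\hat{U}_h^n = \mathcal{T}^n_h(\hat{U}_h^m)_{m\le n}-\mathcal{T}^n_{h,L}(\hat{U}_h^m)_{m<n}$, and observing that $\mathcal{L}_h(W_h,\cdot)$ depends only on $(\ID-P_{\mathfrak{H}})W_h$ — since $\DIFF_h$ annihilates the discrete harmonic part and $J\DIFF_h(\cdot)$ stays orthogonal to $\mathfrak{H}_h$ because $J$ preserves the grading — the first equation of \eqref{eq:def.hatUh} gives $\mathcal{L}_h(\hat{U}_h^n,V_h)=((\ID-P_{\mathfrak{H}})\uI{}{h}\mathcal{T}^nU,V_h)_h$. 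Subtracting the two relations yields
\[
  a_h^n(\hat{U}_h^n - U_h^n,V_h) = \big(\mathcal{T}^n_h(\hat{U}_h^m)_{m\le n}-(\ID-P_{\mathfrak{H}})\uI{}{h}\mathcal{T}^nU,\,V_h\big)_h - \big(\mathcal{T}^n_{h,L}\big((\hat{U}_h^m)_{m<n}-(U_h^m)_{m<n}\big),\,V_h\big)_h .
\]

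The key step is to replace the first inner product by $\big((\ID-P_{\mathfrak{H}})(\mathcal{T}^n_h(\hat{U}_h^m)_{m\le n}-\uI{}{h}\mathcal{T}^nU),V_h\big)_h$, i.e.\ to prove $P_{\mathfrak{H}}\mathcal{T}^n_h(\hat{U}_h^m)_{m\le n}=0$. For this I would move $P_{\mathfrak{H}}$ through the time operator using \ref{asm:disc.time.harmonics} ($P_{\mathfrak{H}}\mathcal{T}^n_h=\mathcal{T}^n_h P_{\mathfrak{H}}$), then invoke the harmonic constraint $P_{\mathfrak{H}}\hat{U}_h^m=\uI{}{h}P_{\mathfrak{H}}U(t_m)$ of \eqref{eq:def.hatUh} together with the interpolator commutation \ref{asm:disc.time.cochain} to get $P_{\mathfrak{H}}\mathcal{T}^n_h(\hat{U}_h^m)_{m\le n}=\uI{}{h}\mathcal{T}^n(P_{\mathfrak{H}}U)$, and finally use that $P_{\mathfrak{H}}U$ is constant in time by \eqref{eq:Phdt} and that $\mathcal{T}^n$ is zeroth-order consistent (\ref{asm:disc.time.harmonics}) so this vanishes. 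With that substitution, setting $V_h=\hat{U}_h^n-U_h^n$ in the displayed identity, applying the Cauchy--Schwarz inequality to each term, using the coercivity bound, and dividing by $\theta_n\norm{h}{\hat{U}_h^n-U_h^n}$ (the estimate being trivial when $\hat{U}_h^n=U_h^n$) yields the claim.

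The main obstacle is precisely that last identity: tracking how $P_{\mathfrak{H}}$ threads through the split time-discretization operator and the interpolator, and recognizing that the continuous harmonic component of $U$ is time-independent so that its discrete time derivative is exactly zero. This is what Assumption~\ref{asm:disc.time} is engineered to supply, and it is what makes the auxiliary problem \eqref{eq:def.hatUh} compatible with \eqref{eq:disc.var} on the harmonic subspace; the remaining manipulations are routine.
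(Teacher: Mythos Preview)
Your proposal is correct and follows essentially the same route as the paper: test the coercivity of $a_h^n$ against $\hat{U}_h^n-U_h^n$, rewrite $a_h^n(\hat{U}_h^n,\cdot)$ via \eqref{eq:def.hatUh} and $a_h^n(U_h^n,\cdot)$ via \eqref{eq:disc.var}, and use the chain $P_{\mathfrak{H}}\mathcal{T}^n_h(\hat{U}_h^m)=\mathcal{T}^n_h(P_{\mathfrak{H}}\hat{U}_h^m)=\mathcal{T}^n_h(\uI{}{h}P_{\mathfrak{H}}U)=\uI{}{h}\mathcal{T}^n(P_{\mathfrak{H}}U)=0$ exactly as the paper does. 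You even make explicit the point that $\mathcal{L}_h(W_h,\cdot)$ depends only on $(\ID-P_{\mathfrak{H}})W_h$, which the paper uses implicitly when passing from \eqref{eq:def.hatUh} to $a_h^n(\hat{U}_h^n,V_h)=(\mathcal{T}^n_{h,B}\hat{U}_h^n,V_h)_h-((\ID-P_{\mathfrak{H}})\uI{}{h}\mathcal{T}^nU,V_h)_h$.
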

\begin{proof}
  Inserting the definition \eqref{eq:def.hatUh} of $\hat{U}^n_h$ into the definition of $a_h^n$, we infer
  that for all $V_h\in\uH{}{h}$,
  \[
    a_h^n(\hat{U}_h^n,V_h) = (\mathcal{T}^n_{h,B}\hat{U}_h^n,V_h)_h - ( (\ID-P_{\mathfrak{H}})\uI{}{h}\mathcal{T}^nU,V_h)_h.
  \]
  Moreover, we infer from Assumption \ref{asm:disc.time.harmonics}, \eqref{eq:Phdt} and \eqref{eq:def.hatUh} that 
  \[
    P_{\mathfrak{H}} \mathcal{T}^n_h(\hat{U}_h^m)_{m\leq n} =
    \mathcal{T}^n_h(P_{\mathfrak{H}}\hat{U}_h^m)_{m\leq n} =
    \mathcal{T}^n_h(\uI{}{h}P_{\mathfrak{H}}U) =
    \uI{}{h}\mathcal{T}^n(P_{\mathfrak{H}}U) = 0.
  \]
  Therefore, from the definition \eqref{eq:disc.var} of $U_h^n$, we have 
  \begin{equation}
    \begin{aligned}
      a_h^n(\hat{U}_h^n - U_h^n,V_h) 
      &= (\mathcal{T}^n_{h,B}\hat{U}_h^n -  (\ID-P_{\mathfrak{H}})\uI{}{h}\mathcal{T}^nU + \mathcal{T}^n_{h,L}(U_h^m)_{m<n} ,V_h)_h\\
      &= (\mathcal{T}^n_{h,B}\hat{U}_h^n + \mathcal{T}^n_{h,L}(\hat{U}_h^m)_{m<n} -  (\ID-P_{\mathfrak{H}})\uI{}{h}\mathcal{T}^nU,V_h)_h 
      + (\mathcal{T}^n_{h,L}(U_h^m)_{m<n} - \mathcal{T}^n_{h,L}(\hat{U}_h^m)_{m<n},V_h)_h \\
      &= (\mathcal{T}^n_h(\hat{U}_h^m)_{m\leq n} -  (\ID-P_{\mathfrak{H}})\uI{}{h}\mathcal{T}^nU ,V_h)_h 
      + (\mathcal{T}^n_{h,L}\left( (\hat{U}_h^m)_{m<n} - (U_h^m)_{m<n} \right),V_h)_h\\
      &= ((\ID-P_{\mathfrak{H}})\left( \mathcal{T}^n_h(\hat{U}_h^m)_{m\leq n} -  \uI{}{h}\mathcal{T}^nU\right),V_h)_h 
      + (\mathcal{T}^n_{h,L}\left( (\hat{U}_h^m)_{m<n} - (U_h^m)_{m<n} \right),V_h)_h.
    \end{aligned}
    \label{eq:err.time.P1}
  \end{equation}
  We conclude from Lemma \ref{lem:disc.infsup} and \eqref{eq:err.time.P1}, writting 
  \begin{equation*}
    \begin{aligned}
    \norm{h}{\hat{U}_h^n - U_h^n}^2 &\leq \frac{1}{\theta_n} a_h^n(\hat{U}_h^n - U_h^n,\hat{U}_h^n - U_h^n) \\
    &\leq \frac{1}{\theta_n} \left( \norm{h}{(\ID-P_{\mathfrak{H}})\left( \mathcal{T}^n_h(\hat{U}_h^m)_{m\leq n} -  \uI{}{h}\mathcal{T}^nU\right)}
    + \norm{h}{\mathcal{T}^n_{h,L}\left( (\hat{U}_h^m)_{m<n} - (U_h^m)_{m<n} \right)} \right) \norm{h}{\hat{U}_h^n - U_h^n} .
    \end{aligned}
  \end{equation*}
\end{proof}

\begin{lemma}[Error estimate at a single step]
  Let $(U_h^n)_n$ be solution of \eqref{eq:disc.var}, 
  and $(\hat{U}_h^n)_n$ be solution of \eqref{eq:def.hatUh}.
  For any $0 < n \leq N$, it holds:
  \[
    \norm{1,h}{(\ID-P_{\mathfrak{H}})\left(\uI{}{h}\mathcal{T}^nU - \mathcal{T}^n_h (\hat{U}_h^m)_{m\leq n}\right)} \leq \frac{1}{C_L}\left( \opn{1,h}{\epsilon_h(\mathcal{T}^nU,\cdot)} + \norm{h}{\uI{}{h} \mathcal{T}^n (\epsilon_{\Delta t,m}(U))_{m\leq n}}\right),
  \]
  where $C_L$ is the constant given by Lemma \ref{lem:infsup.Lh}.
  \label{lem:err.space}
\end{lemma}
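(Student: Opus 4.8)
The plan is to reduce the estimate to the partial inf-sup stability of $\mathcal{L}_h$ from Lemma \ref{lem:infsup.Lh}. Set $\Pi := \ID - P_{\mathfrak{H}}$ and $E_h^n := \Pi\big(\uI{}{h}\mathcal{T}^nU - \mathcal{T}^n_h(\hat U_h^m)_{m\leq n}\big)$; since $\Pi$ is the $\norm{h}{\cdot}$-orthogonal projection onto $\mathfrak{H}^{\perp_h}$, we have $E_h^n\in\mathfrak{H}^{\perp_h}$, so it suffices to bound $\mathcal{L}_h(E_h^n,W_h)$ by the right-hand side times $\norm{1,h}{W_h}$ for arbitrary $W_h\in\mathfrak{H}^{\perp_h}$ and invoke Lemma \ref{lem:infsup.Lh}. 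A fact used repeatedly is that $\mathcal{L}_h(P_{\mathfrak{H}}V_h,W_h)=0$ for all $V_h,W_h$ (a discrete harmonic form is killed by $\DIFF_h$ and is $\norm{h}{\cdot}$-orthogonal to $\Image\DIFF_h$, which $J$ preserves degreewise), so $\mathcal{L}_h(E_h^n,W_h) = \mathcal{L}_h(\uI{}{h}\mathcal{T}^nU,W_h) - \mathcal{L}_h(\mathcal{T}^n_h(\hat U_h^m)_{m\leq n},W_h)$; I treat the two terms in turn.

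For the first term, the basic consistency identity
\[
  \mathcal{L}_h(\uI{}{h}V, W_h) = (\uI{}{h}\mathcal{L}V, W_h)_h - \epsilon_h(JV, W_h), \qquad V\in D(\DIFF)\cap D(\DIFF^\star),
\]
follows by expanding $\mathcal{L}_h$, moving $\DIFF_h$ onto $\uI{}{h}$ with the cochain property \ref{asm:disc.spaces.cochain}, commuting $J$ past $\uI{}{h}$, using the algebraic identities $\DIFF J=-J\DIFF$ and $\DIFF^\star J=-J\DIFF^\star$ from the proof of Lemma \ref{lem:skew.adj}, and the definition \eqref{eq:def.esph} of $\epsilon_h$. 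I then take $V=\mathcal{T}^nU$ and use \ref{asm:disc.time.com} (and that $J$ is degree-diagonal) to rewrite $\mathcal{L}\mathcal{T}^nU = \mathcal{T}^n\mathcal{L}U$, substitute the continuous equation $\mathcal{L}U = \partial_tU$ pointwise in time, and apply \eqref{eq:def.espT} in the form $\partial_tU(t_m)=\mathcal{T}^mU-\epsilon_{\Delta t,m}(U)$; the first term becomes
\[
  \big(\uI{}{h}\mathcal{T}^n(\mathcal{T}^mU)_{m\leq n},W_h\big)_h - \big(\uI{}{h}\mathcal{T}^n(\epsilon_{\Delta t,m}(U))_{m\leq n},W_h\big)_h - \epsilon_h(J\mathcal{T}^nU,W_h).
\]

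For the second term, the defining relations \eqref{eq:def.hatUh} give $\mathcal{L}_h(\Pi\hat U_h^m,\cdot)=(\Pi\uI{}{h}\mathcal{T}^mU,\cdot)_h$ for each $m\leq n$; combining this with the bilinearity of $\mathcal{L}_h$, the linear structure of $\mathcal{T}^n_h$ in the time levels together with \ref{asm:disc.time.cochain} (so that $\mathcal{T}^n_h$ recombines the $\Pi\hat U_h^m$ with exactly the coefficients that $\mathcal{T}^n$ applies to the $\mathcal{T}^mU$), and again the fact $\mathcal{L}_h(P_{\mathfrak{H}}\cdot,\cdot)=0$ to discard harmonic parts, one gets $\mathcal{L}_h(\mathcal{T}^n_h(\hat U_h^m)_{m\leq n},W_h) = \big(\Pi\uI{}{h}\mathcal{T}^n(\mathcal{T}^mU)_{m\leq n},W_h\big)_h$. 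Subtracting, the two $\mathcal{T}^n(\mathcal{T}^mU)$ contributions collapse into $\big(P_{\mathfrak{H}}\uI{}{h}\mathcal{T}^n(\mathcal{T}^mU)_{m\leq n},W_h\big)_h$, which vanishes as $W_h\in\mathfrak{H}^{\perp_h}$; hence
\[
  \mathcal{L}_h(E_h^n,W_h) = -\big(\uI{}{h}\mathcal{T}^n(\epsilon_{\Delta t,m}(U))_{m\leq n},W_h\big)_h - \epsilon_h(J\mathcal{T}^nU,W_h).
\]
Cauchy–Schwarz with $\norm{h}{W_h}\leq\norm{1,h}{W_h}$, the operator-norm definition of $\opn{1,h}{\epsilon_h(\,\cdot\,,\cdot)}$ — with $\opn{1,h}{\epsilon_h(J\mathcal{T}^nU,\cdot)}=\opn{1,h}{\epsilon_h(\mathcal{T}^nU,\cdot)}$ since $J$ is a $\norm{1,h}{\cdot}$-isometry — and Lemma \ref{lem:infsup.Lh} then close the argument after dividing by $C_L$.

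The step I expect to be the main obstacle is the bookkeeping of the harmonic projectors in the second term: one must keep track that $P_{\mathfrak{H}}$, on the continuous and on the discrete level, need not commute with $\uI{}{h}$, while verifying that the harmonic components of the $\hat U_h^m$ and of the $\mathcal{T}^n(\mathcal{T}^mU)$ terms all drop out against $W_h\in\mathfrak{H}^{\perp_h}$, and — the real point of the auxiliary problem \eqref{eq:def.hatUh}, with $\Pi$ placed precisely where it is — that it produces exactly the $\uI{}{h}\mathcal{T}^n(\mathcal{T}^mU)_{m\leq n}$ term needed to cancel the one coming from $\mathcal{L}U=\partial_tU$ via \ref{asm:disc.time.com}. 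The consistency identity for $\mathcal{L}_h(\uI{}{h}\,\cdot\,,\cdot)$ and the accompanying $J$-sign manipulations are, by contrast, routine once $\DIFF J=-J\DIFF$ is at hand.
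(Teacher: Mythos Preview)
Your proposal is correct and follows essentially the same route as the paper: expand $\mathcal{L}_h$ on the interpolated term and on $\mathcal{T}^n_h(\hat U_h^m)$, use the cochain property \ref{asm:disc.spaces.cochain}, the commutation \ref{asm:disc.time.cochain}--\ref{asm:disc.time.com}, and the continuous equation to cancel the leading $\mathcal{T}^n(\mathcal{T}^m U)$ contributions, then bound the residual by Cauchy--Schwarz and invoke the inf--sup of Lemma~\ref{lem:infsup.Lh}. Your packaging via the consistency identity $\mathcal{L}_h(\uI{}{h}V,W_h)=(\uI{}{h}\mathcal{L}V,W_h)_h-\epsilon_h(JV,W_h)$ and your explicit justification that $\opn{1,h}{\epsilon_h(J\mathcal{T}^nU,\cdot)}=\opn{1,h}{\epsilon_h(\mathcal{T}^nU,\cdot)}$ are slight presentational improvements on the paper's chain of equalities, but the argument is the same.
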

\begin{proof}
  Applying the linearity of the time discretization operator $\mathcal{T}^n_h$ to \eqref{eq:def.hatUh}, we have
  for any $V_h \in \mathfrak{H}^{\perp_h}$:
  \begin{equation}
    \mathcal{L}_h((\ID-P_{\mathfrak{H}})\mathcal{T}^n_h (\hat{U}_h^m)_{m\leq n},V_h) = (\mathcal{T}^n_h \uI{}{h} (\mathcal{T}^m U)_{m\leq n},V_h).
    \label{eq:err.space.P1}
  \end{equation}
  Therefore, we have
  \begin{equation}
    \begin{aligned}
      \mathcal{L}_h&((\ID - P_{\mathfrak{H}})\left(\uI{}{h}\mathcal{T}^nU - \mathcal{T}^n_h (\hat{U}_h^m)_{m\leq n}\right), V_h)\\
      \overset{\eqref{eq:err.space.P1}}&= (J\DIFF_{h} \uI{}{h} \mathcal{T}^n U,V_h)_h - (\uI{}{h} \mathcal{T}^nU,J\DIFF_hV_h)_h
      - (\mathcal{T}^n_h \uI{}{h} (\mathcal{T}^m U)_{m\leq n},V_h)_h\\
      \overset{\eqref{eq:def.esph}}&= (J\DIFF_{h} \uI{}{h} \mathcal{T}^n U,V_h)_h - (\uI{}{h}\DIFF^\star J \mathcal{T}^nU,V_h)_h
    - \epsilon_h(J\mathcal{T}^nU,V_h)
      - (\mathcal{T}^n_h \uI{}{h} (\mathcal{T}^m U)_{m\leq n},V_h)_h \\
      \overset{\ref{asm:disc.spaces.cochain},\ref{asm:disc.time.cochain}}&= 
      (\uI{}{h}J \DIFF \mathcal{T}^n U,V_h)_h + (\uI{}{h}J \DIFF^\star \mathcal{T}^nU,V_h)_h
    - \epsilon_h(J\mathcal{T}^nU,V_h)
    - (\uI{}{h} \mathcal{T}^n (\mathcal{T}^m U)_{m\leq n},V_h)_h \\
    \overset{\ref{asm:disc.time.com},\eqref{eq:def.espT}}&=
    (\uI{}{h}\mathcal{T}^n \cancel{\left(J(\DIFF+\DIFF^\star)U - \partial_t U\right)},V_h)_h
    - \epsilon_h(J\mathcal{T}^nU,V_h)
    - (\uI{}{h} \mathcal{T}^n (\epsilon_{\Delta t,m}(U))_{m\leq n},V_h)_h \\
    &\leq \left( \opn{1,h}{\epsilon_h(\mathcal{T}^nU,\cdot)} + \norm{h}{\uI{}{h} \mathcal{T}^n (\epsilon_{\Delta t,m}(U))_{m\leq n}}\right)\norm{1,h}{V_h},
    \end{aligned}
    \label{eq:err.space.P2}
  \end{equation}
  where we removed the projector $P_{\mathfrak{H}}$ using the fact that $\DIFF P_{\mathfrak{H}} = 0$
  and $\DIFF^* P_{\mathfrak{H}} = 0$, 
  the cancellation occurs because $U$ is solution of \eqref{eq:HDADM.var}, 
  and we used a Cauchy-Schwartz inequality together with $\norm{h}{V_h} \leq \norm{1,h}{V_h}$ on the last line.
  We conclude evaluating \eqref{eq:err.space.P2} for $V_h$ realizing the inf-sup inequality of Lemma \ref{lem:infsup.Lh}.
\end{proof}

\begin{lemma}
  \label{lem:err.exact.space}
  Let $\hat{U}_h^N$ be solution of \eqref{eq:def.hatUh}, 
  and $C_L$ be the constant given by Lemma \ref{lem:infsup.Lh}.
  It holds
  \[
    \norm{1,h}{(\ID-P_{\mathfrak{H}})\left(\uI{}{h}U(T) - \hat{U}_h^N\right)} \leq \frac{1}{C_L} \left( \opn{1,h}{\epsilon_h(U,\cdot)} + \norm{h}{\uI{}{h}\epsilon_{\Delta t,N}(U)}\right).
  \]
\end{lemma}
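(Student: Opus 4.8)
The plan is to specialize the argument of Lemma \ref{lem:err.space} to the final time $t=T=t_N$, replacing $\mathcal{T}^nU$ by the exact value $U(T)$ and invoking the governing equation $\partial_tU=\mathcal{L}U$ at the endpoint. First I would observe that $(\ID-P_{\mathfrak{H}})\bigl(\uI{}{h}U(T)-\hat U_h^N\bigr)\in\mathfrak{H}^{\perp_h}$ by construction, so by the partial inf–sup stability of $\mathcal{L}_h$ (Lemma \ref{lem:infsup.Lh}) it suffices to bound, for an arbitrary $V_h\in\mathfrak{H}^{\perp_h}$,
$\mathcal{L}_h\bigl((\ID-P_{\mathfrak{H}})(\uI{}{h}U(T)-\hat U_h^N),V_h\bigr)$
from above by $\bigl(\opn{1,h}{\epsilon_h(U,\cdot)}+\norm{h}{\uI{}{h}\epsilon_{\Delta t,N}(U)}\bigr)\norm{1,h}{V_h}$; dividing by $C_L$ then gives the claim.

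The core step is to expand $\mathcal{L}_h\bigl((\ID-P_{\mathfrak{H}})\uI{}{h}U(T),V_h\bigr)$. The harmonic projector drops out on both sides: in the term $(J\DIFF_h(\ID-P_{\mathfrak{H}})\uI{}{h}U(T),V_h)_h$ because $\DIFF_hP_{\mathfrak{H}}=0$, and in $((\ID-P_{\mathfrak{H}})\uI{}{h}U(T),J\DIFF_hV_h)_h$ because $J\DIFF_hV_h\in\Image\DIFF_h\subseteq\mathfrak{H}^{\perp_h}$ (here using that the diagonal sign operator $J$ acts componentwise and therefore maps $\mathfrak{H}_h$ into itself). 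Then I would push $\uI{}{h}$ through the differential with the cochain property \ref{asm:disc.spaces.cochain}, introduce the consistency functional $\epsilon_h$ via its definition \eqref{eq:def.esph}, and use the identity $\DIFF^\star J=-J\DIFF^\star$ from the proof of Lemma \ref{lem:skew.adj}, exactly as in the chain of equalities in \eqref{eq:err.space.P2}. This produces
$\mathcal{L}_h\bigl((\ID-P_{\mathfrak{H}})\uI{}{h}U(T),V_h\bigr)=\bigl(\uI{}{h}(J\DIFF+J\DIFF^\star)U(T),V_h\bigr)_h-\epsilon_h(JU(T),V_h)$,
and since $U$ solves \eqref{eq:HDADM.var}, equivalently $\partial_tU=J(\DIFF+\DIFF^\star)U$ pointwise on $[0,T]$ by Theorem \ref{thm:cont.wp}, the first summand equals $(\uI{}{h}\partial_tU(T),V_h)_h$.

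It remains to subtract the defining relation \eqref{eq:def.hatUh} of $\hat U_h^N$, which for $V_h\in\mathfrak{H}^{\perp_h}$ reads $\mathcal{L}_h\bigl((\ID-P_{\mathfrak{H}})\hat U_h^N,V_h\bigr)=(\uI{}{h}\mathcal{T}^NU,V_h)_h$. Combining with $\epsilon_{\Delta t,N}(U)=\mathcal{T}^NU-\partial_tU(t_N)$ yields
$\mathcal{L}_h\bigl((\ID-P_{\mathfrak{H}})(\uI{}{h}U(T)-\hat U_h^N),V_h\bigr)=-(\uI{}{h}\epsilon_{\Delta t,N}(U),V_h)_h-\epsilon_h(JU(T),V_h)$,
and a Cauchy–Schwarz bound on the first term together with $\norm{h}{V_h}\le\norm{1,h}{V_h}$, and $|\epsilon_h(JU(T),V_h)|\le\opn{1,h}{\epsilon_h(JU(T),\cdot)}\norm{1,h}{V_h}$ on the second, close the estimate. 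The last point I would make explicit — as is already implicitly used in the proof of Lemma \ref{lem:err.space} — is that $\opn{1,h}{\epsilon_h(JU(T),\cdot)}=\opn{1,h}{\epsilon_h(U(T),\cdot)}$, since the sign flips carried by $J$ on the first argument are absorbed by the corresponding admissible sign flips of the components of the test function, under which both $\epsilon_h$ and the graph norm are invariant. I do not expect a genuine obstacle here: the argument is essentially a restriction of Lemma \ref{lem:err.space} to $t=T$, and the only points needing care are this bookkeeping between $J$, the harmonic projector, and the $\epsilon_h$ operator norm, together with checking that $U\in C^1([0,T],X)\cap C^0([0,T],D(\mathcal{L}))$ legitimizes evaluating $\partial_tU=\mathcal{L}U$ at the right endpoint.
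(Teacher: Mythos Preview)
Your proposal is correct and follows essentially the same route as the paper: expand $\mathcal{L}_h$ on $(\ID-P_{\mathfrak{H}})(\uI{}{h}U(T)-\hat U_h^N)$, drop the harmonic projector, apply the cochain property and the definition of $\epsilon_h$, cancel $J(\DIFF+\DIFF^\star)U-\partial_tU$, and invoke the inf--sup of Lemma~\ref{lem:infsup.Lh}. You spell out a bit more detail (why $P_{\mathfrak{H}}$ disappears, and why $\opn{1,h}{\epsilon_h(JU,\cdot)}=\opn{1,h}{\epsilon_h(U,\cdot)}$), but the argument is the same.
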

\begin{proof}
  Using the definition of $\hat{U}_h^N$, we have for any $V_h\in\mathfrak{H}^{\perp_h}$:
  \begin{equation}
    \begin{aligned}
      \mathcal{L}_h((\ID-P_{\mathfrak{H}})\left(\uI{}{h}U(T) - \hat{U}_h^N\right),V_h) 
      &= (J\DIFF \uI{}{h} U,V_h)_h - (\uI{}{h} U,J\DIFF_hV_h)_h - (\uI{}{h} \mathcal{T}^N U,V_h)_h \\
      \overset{\ref{asm:disc.spaces.cochain}}&= 
      (\uI{}{h} \cancel{\left( J(\DIFF + \DIFF^\star) U - \partial_t U \right)},V_h)_h 
      - \epsilon_h(JU,V_h) 
      - (\uI{}{h} \epsilon_{\Delta t,N} U ,V_h)_h \\
    &\leq \left( \opn{1,h}{\epsilon_h(U,\cdot)} + \norm{h}{\uI{}{h} \epsilon_{\Delta t,N}(U))_{m\leq n}}\right)\norm{1,h}{V_h},
    \end{aligned}
    \label{eq:err.exact.space.P1}
  \end{equation}
  where the cancellation occurs because $U$ is solution of \eqref{eq:HDADM.var}.
  We conclude using evaluating \eqref{eq:err.exact.space.P1} for $V_h$ realizing the inf-sup inequality of Lemma \ref{lem:infsup.Lh}.
\end{proof}

\begin{proof}[Proof of Theorem \ref{thm:err.estimate}]
  The well-posedness of the discrete problem \eqref{eq:disc.var} 
  readily follows from the coercivity of $a_h^n$ proven in Lemma \ref{lem:disc.infsup}.
  Let us now prove the error estimate:
  Let $U$ be the solution of \eqref{eq:HDADM.var}.
  For any $0< n \leq N$, applying Lemma \ref{lem:err.time} 
  together with the assumption $\norm{h}{\mathcal{T}^n_{h,L}\left( (\hat{U}_h^m)_{m<n} - (U_h^m)_{m<n} \right)} \leq \theta_n \norm{h}{\hat{U}_h^{n-1} - U_h^{n-1}}$
  gives:
  \[
    \norm{h}{\hat{U}_h^{n} - U_h^{n}} \leq 
    \frac{1}{\theta_n}\norm{h}{(\ID-P_{\mathfrak{H}})\left(\mathcal{T}^n_h(\hat{U}_h^m)_{m\leq n} - \uI{}{h}\mathcal{T}^n U\right)} + \norm{h}{\hat{U}_h^{n-1} - U_h^{n-1}}.
  \]
  Recalling that $U_h^0 = \hat{U}_h^0$, we have
  \begin{equation}
    \norm{h}{\hat{U}_h^N - U_h^N} \leq \sum_{n=1}^N \frac{1}{\theta_n}\norm{h}{(\ID-P_{\mathfrak{H}})\left(\mathcal{T}^n_h(\hat{U}_h^m)_{m\leq n} - \uI{}{h}\mathcal{T}^n U\right)}.
    \label{eq:err.est.P1}
  \end{equation}
  Applying Lemma \ref{lem:err.space} to \eqref{eq:err.est.P1}, we find 
  \begin{equation}
    \norm{h}{\hat{U}_h^N - U_h^N} \leq \frac{1}{C_L} \sum_{n=1}^N \frac{1}{\theta_n}\left( \opn{1,h}{\epsilon_h(\mathcal{T}^nU,\cdot)} + \norm{h}{\uI{}{h} \mathcal{T}^n (\epsilon_{\Delta t,m}(U))_{m\leq n}}\right)
    \label{eq:err.est.P2}
  \end{equation}
  Introducing the assumed bound on $\opn{1,h}{\epsilon_h(\mathcal{T}^nU,\cdot)}$ and 
  $\norm{h}{\uI{}{h} \mathcal{T}^n (\epsilon_{\Delta t,m}(U))_{m\leq n}}$ into \eqref{eq:err.est.P2}, 
  we have
  \begin{equation}
    \begin{aligned}
      \norm{h}{\hat{U}_h^N - U_h^N} &\leq \frac{1}{C_L} \sum_{n=1}^N \frac{1}{\theta_n} \left( (\Delta t)^l + h^{r+1} \right) E(\vert U \vert) \\
      \overset{\ref{asm:disc.time.coer}}&\leq \left( (\Delta t)^l + h^{r+1} \right) \frac{C}{C_L} E(\vert U \vert) .
    \end{aligned}
    \label{eq:err.est.P3}
  \end{equation}
  We conclude using the triangle inequality:
  \[
    \norm{h}{\uI{}{h}U(T) - U_h^N} \leq 
    \norm{h}{(\ID-P_{\mathfrak{H}})\left(\uI{}{h}U(T) - \hat{U}^N_h\right)}
    + \norm{h}{P_{\mathfrak{H}}\left(\uI{}{h}U(T) - \hat{U}^N_h\right)}
    + \norm{h}{\hat{U}_h^N - U_h^N},
  \]
  then using Lemma \ref{lem:err.exact.space} to bound the first term, 
  \eqref{eq:def.hatUh} on the second term,
  and \eqref{eq:err.est.P3} to bound the last term.
\end{proof}

\subsection{Examples: a specific scheme}
In this section we consider a specific choice of discretization in order to prove more properties of the scheme.
We introduce the notation $a \lesssim b$, meaning that there is $C > 0$ depending only on the chosen discrete complex 
such that $a \leq C b$.
We consider a backward Euler time stepping with constant time step $\Delta t$,
setting 
\begin{equation}
  \label{eq:def.CN}
  \mathcal{T}^n_h (V_h^m)_{m \leq n} := \frac{V_h^n - V_h^{n-1}}{\Delta t}, 
  \quad \theta_n = \frac{1}{\Delta t}.
\end{equation}
Let $Y \subset D(\DIFF) \cap D(\DIFF^\star)$ denote a subset of $X$ over which $\uI{}{h}$ is continuous for the $L^2$-norm
(i.e. $\forall V \in Y$, $\norm{h}{\uI{}{h}V} \lesssim \norm{L^2(\Omega)}{V}$), 
and let $U$ be the solution of \eqref{eq:HDADM.var}.
\begin{lemma}[Backward Euler time stepping]
  \label{lem:CN.prop}
  If $U\in C^3([0,T],Y)$, then 
  for all $1\leq n \leq N$, $(V_h^n)_{n\leq N}\in(\uH{}{h})^N$, it holds
  \begin{equation*}
    \begin{aligned}
      \norm{h}{\mathcal{T}^n_{h,L}(V_h^m)_{m<n} } &\leq \theta_n \norm{h}{V_h^{n-1} }, \\
      \norm{h}{\uI{}{h} \epsilon_{\Delta t,N}(U)} +
      \norm{h}{\uI{}{h} \mathcal{T}^n (\epsilon_{\Delta t,m}(U))_{m\leq n}} &\lesssim \Delta t \left(\vert U \vert_{C^2([0,T],Y)} + \vert U \vert_{C^3([0,T],Y)}\right).
    \end{aligned}
  \end{equation*}
\end{lemma}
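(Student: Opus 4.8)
The plan is to handle the two bounds independently. The first is immediate from the structure of backward Euler; the second rests on a single Taylor-with-remainder identity for the time-consistency error $\epsilon_{\Delta t,\cdot}$, together with the observation that this error is a Bochner integral of a continuous $Y$-valued map, hence lies in $Y$, so that $\uI{}{h}$ can be applied to it and controlled by the $L^2$-norm through the continuity hypothesis on $Y$.

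\textbf{First bound.} For the scheme \eqref{eq:def.CN} the implicit/explicit splitting reads $\mathcal{T}^n_{h,B}(V_h^m)_{m\leq n} = \Delta t^{-1}V_h^n$ and $\mathcal{T}^n_{h,L}(V_h^m)_{m<n} = -\Delta t^{-1}V_h^{n-1}$, so that $\norm{h}{\mathcal{T}^n_{h,L}(V_h^m)_{m<n}} = \Delta t^{-1}\norm{h}{V_h^{n-1}} = \theta_n\norm{h}{V_h^{n-1}}$ and the first inequality holds with equality.

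\textbf{Second bound.} Applying Taylor's formula with integral remainder to $U\in C^2([0,T],Y)$ and substituting $s=t_{m-1}+\sigma\Delta t$ gives, for $1\leq m\leq N$,
\begin{equation*}
  \epsilon_{\Delta t,m}(U) = \frac{U(t_m)-U(t_{m-1})}{\Delta t}-\partial_t U(t_m) = -\Delta t\int_0^1 \sigma\,\partial_{tt}U\big(t_{m-1}+\sigma\Delta t\big)\,\DIFF\sigma .
\end{equation*}
As $s\mapsto\partial_{tt}U(s)$ is continuous with values in $Y$, the right-hand side is a $Y$-valued Bochner integral, hence lies in $Y$; continuity of $\uI{}{h}$ on $Y$ for the $L^2$-norm then yields $\norm{h}{\uI{}{h}\epsilon_{\Delta t,m}(U)} \lesssim \norm{L^2(\Omega)}{\epsilon_{\Delta t,m}(U)} \leq \frac{\Delta t}{2}\,\vert U\vert_{C^2([0,T],Y)}$, and $m=N$ bounds the first summand. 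For the second, backward Euler gives $\mathcal{T}^n(W^m)_{m\leq n}=\Delta t^{-1}(W^n-W^{n-1})$, so for $n\geq 2$ the identity above produces
\begin{equation*}
  \mathcal{T}^n(\epsilon_{\Delta t,m}(U))_{m\leq n} = -\int_0^1 \sigma\Big(\partial_{tt}U\big(t_{n-1}+\sigma\Delta t\big)-\partial_{tt}U\big(t_{n-2}+\sigma\Delta t\big)\Big)\,\DIFF\sigma .
\end{equation*}
Writing the inner difference as $\int_{t_{n-2}+\sigma\Delta t}^{t_{n-1}+\sigma\Delta t}\partial_{ttt}U(\tau)\,\DIFF\tau$, an interval of length $\Delta t$, the iterated integral is again $Y$-valued and has $L^2$-norm $\lesssim\Delta t\,\vert U\vert_{C^3([0,T],Y)}$, whence $\norm{h}{\uI{}{h}\mathcal{T}^n(\epsilon_{\Delta t,m}(U))_{m\leq n}}\lesssim\Delta t\,\vert U\vert_{C^3([0,T],Y)}$ by the same continuity argument. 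Adding the two contributions gives the claim for $n\geq 2$; the case $n=1$ is analogous upon reading $\epsilon_{\Delta t,0}(U)$ through the same formula at the ghost step $t_{-1}$, which is legitimate since the continuous evolution \eqref{eq:HDADM.HDW} is generated by a skew-adjoint operator (Lemma \ref{lem:skew.adj}) and hence extends as a group.

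\textbf{Main difficulty.} There is no genuine obstacle: the argument is Taylor expansion plus boundedness of $\uI{}{h}$. The two points that must not be skipped are (i) the extra time derivative incurred when passing from $\epsilon_{\Delta t,n}(U)$ to the discrete difference $\mathcal{T}^n(\epsilon_{\Delta t,m}(U))_{m\leq n}$, which is exactly why the hypothesis is $U\in C^3([0,T],Y)$ and why $\vert U\vert_{C^3}$ appears alongside $\vert U\vert_{C^2}$; and (ii) verifying that each consistency error lies in $Y$, so that the interpolator is defined on it and dominated by the $L^2$-norm — this is the only place where the set $Y$ is used.
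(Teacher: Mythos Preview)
Your proof is correct and follows the same route as the paper: the first bound is immediate from the definition of backward Euler, and the second comes from Taylor expansion of $U$ combined with the continuity of $\uI{}{h}$ on $Y$. The only differences are cosmetic: you use the integral form of the remainder (which is in fact the rigorous choice for Banach-valued maps, whereas the paper writes the Lagrange form $\partial_t^3 U(c)$), and you explicitly address the $n=1$ edge case via the group extension of the flow, which the paper silently restricts to $n\geq 2$.
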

\begin{proof}
  The first bound is trivial from the definition \eqref{eq:def.CN} of $\mathcal{T}^n_{h,L}$
  giving
  \[
    \norm{h}{\mathcal{T}^n_{h,L}(V_h^m)_{m<n} } = 
    \frac{\norm{h}{V_h^{n-1}}}{\Delta t}
    = \theta_n \norm{h}{V_h^{n-1}}.
  \]
  Assuming the $C^3$ regularity in time of $U$, we can write its Taylor expansion at $t_n$ 
  for $n \geq 2$:
  \begin{equation}
    \begin{aligned}
      U(t_{n-1}) &= U(t_n) - \Delta t \partial_t U(t_n) + (\Delta t)^2 \frac12 \partial_t^2 U(t_n) - (\Delta t)^3\frac16 \partial_t^3U(c) \\
      \partial_t^2 U(t_{n-1}) &= \partial_t^2 U(t_n) - \Delta t \partial_t^3 U(c),
    \end{aligned}
    \label{eq:CN.prop.P1}
  \end{equation}
  where $c \in [t_{n-1},t_n]$.
  Injecting \eqref{eq:CN.prop.P1} into the definition of $\epsilon_{\Delta t,n}(U)$ gives:
  \[
    \epsilon_{\Delta t,n}(U)
      = \frac{U(t_n) - U(t_{n-1})}{\Delta t} - \partial_t U(t_n) \\
      = -\Delta t \frac12 \partial_t^2U(t_n) + (\Delta t)^2 \frac16 \partial_t^3 U(c).
  \]
  Using the same formula for $\epsilon_{\Delta t,n-1}(U)$, we find
  \begin{align*}
    \mathcal{T}^n (\epsilon_{\Delta t,m}(U))_{m\leq n}
    = \frac{\epsilon_{\Delta t,n}(U) - \epsilon_{\Delta t,n-1}(U)}{\Delta t}
    &= -\Delta t \frac12 \frac{\partial_t^2 U(t_n) - \partial_t^2 U(t_{n-1})}{\Delta t}
    + \Delta t \frac16 \left( \partial_t^3 U(c') - \partial_t^3U(c'') \right) \\
    &= - \Delta t \frac12 \partial_t^3 U(c)
    + \Delta t \frac16 \left( \partial_t^3 U(c') - \partial_t^3U(c'') \right).
  \end{align*}
  Taking the $L^2$ norm on both side and bounding $\norm{L^2(\Omega)}{\partial_t^2U(c)}$ 
  by $\norm{C^2([0,T],Y)}{U}$, we have
  \[
    \norm{L^2(\Omega)}{\epsilon_{\Delta t,n}(U)} \lesssim \Delta t\norm{C^2([0,T],Y)}{U},
    \quad \norm{L^2(\Omega)}{\mathcal{T}^n (\epsilon_{\Delta t,m}(U))_{m\leq n}} \lesssim  \Delta t\norm{C^3([0,T],Y)}{U}.
  \]
  We infer the result from the continuity of $\uI{}{h}$ on $Y$.
\end{proof}

\begin{remark}
  The result is straightforward to extend to higher-order schemes approximating the time derivative at $t_n$.
  Other schemes such as the Crank-Nicolson time stepping require some slight modification. 
  For instance, since the Crank-Nicolson time stepping approximates the time derivative at $t_{n-\frac12} := \frac{t_n + t_{n-1}}{2}$, to preserve the second order accuracy, 
  we should use 
  $\epsilon_{\Delta t,n-\frac12}(U) := \frac{U(t_n) - U(t_{n-1})}{\Delta t} - \partial_t U(t_{n-\frac12})$.
  Then we can show that the error 
  \[
    \mathcal{T}^n (\epsilon_{\Delta t,m-\frac12}(U))_{m\leq n}
    = \frac{\epsilon_{\Delta t,n-\frac12}(U) - \epsilon_{\Delta t,n-\frac32}(U)}{\Delta t}
  \]
  is second order accurate.
\end{remark}

\section{Numerical results}
\label{sec:num}

\subsection{Discrete complex}
We use a conforming $\DIVDIV$ complex based on tensor product of splines \cite{bonizzoni2025discrete}. 
The main ingredients in its construction are one dimensional finite elements.
Let $\Eh$ and $\Vh$ denote respectively the set of edges and vertices dividing a segment $[0,1]$.
We consider the following spaces:
\begin{itemize}
  \item $S^{1} := \left\{ v \in C^1([0,1]) \st \forall E \in \Eh, v_{E} \in \Poly{3}(E)  \right\}$
  \item $S^{0} := \left\{ v \in C^0([0,1]) \st \forall E \in \Eh, v_{E} \in \Poly{2}(E)  \right\}$
  \item $S^{-1} := \left\{ v \in L^2([0,1]) \st \forall E \in \Eh, v_{E} \in \Poly{1}(E)  \right\}$
\end{itemize}
The associated interpolator are:
\begin{itemize}
  \item $I^{1}$ such that $\forall v \in C^1([0,1]), \forall V \in \Vh,\ I^1(v)(V) = v(V), (I^1(v))'(V) = v'(V)$.
  \item $I^0$ such that $\forall v \in C^0([0,1]), \forall V \in \Vh,\ I^0(v)(V) = v(V)$, 
    $\forall E \in \Eh$, $\int_E I^0(v) = \int_E v$.
  \item $I^{-1}$ such that $\forall v \in L^2([0,1]), \forall E \in \Eh,\ \forall p \in \Poly{1}(E), 
    \int_E p I^{-1}(v) = \int_E p v$.
\end{itemize}

The basis for the discrete complex are the tensor product of these spaces.
We denote by $S^{a,b,c} := S^a \otimes S^b \otimes S^c$. 
The spaces of the discrete complex are:
\begin{equation*}
  \begin{gathered}
    \uH{0}{h} := \begin{pmatrix}
      S^{1,0,0} \\ S^{0,1,0} \\ S^{0,0,1}
    \end{pmatrix}, \quad
    \uH{1}{h} := \begin{pmatrix}
      S^{0,0,0} & S^{1,-1,0} & S^{1,0,-1} \\
      S^{-1,1,0} & S^{0,0,0} & S^{0,1,-1} \\
      S^{-1,0,1} & S^{0,-1,1} & S^{0,0,0}
    \end{pmatrix}, \quad \\
    \uH{2}{h} := \begin{pmatrix}
      S^{1,-1,-1} & S^{0,0,-1} & S^{0,-1,0} \\
      S^{0,0,-1} & S^{-1,1,-1} & S^{-1,0,0} \\
      S^{0,-1,0} & S^{-1,0,0} & S^{-1,-1,1}
    \end{pmatrix}, \quad
    \uH{3}{h} := S^{-1,-1,-1}.
  \end{gathered}
\end{equation*}
All the spaces are conforming, hence we can take the restriction of the continuous differential as the discrete differential \cite{bonizzoni2025discrete}. 

\subsection{Wave-like solutions}
In order to numerically validate our scheme, 
we consider two classes of wave-like solutions. 
The first depends on three parameters $\lambda_1,\lambda_2,c \in \Real$.
We define
\begin{equation}
  \bvec{k} := \begin{pmatrix}
    0 \\ 0 \\ c
  \end{pmatrix}, \quad
  \bvec{A} := \begin{pmatrix}
    0 & \lambda_1 & 0 \\
    -\lambda_2 & 0 & 0 \\
    0 & 0 & 0
  \end{pmatrix}e^{i(\bvec{k}\cdot\bvec{x} - ct)}, \quad
  \bvec{\gamma} := \begin{pmatrix}
    \frac{\lambda_1-\lambda_2}{2} & 0 & 0 \\
    0 & \frac{\lambda_2 - \lambda_1}{2} & 0 \\
    0 & 0 & -\frac{\lambda_1 + \lambda_2}{2}
  \end{pmatrix}e^{i(\bvec{k}\cdot\bvec{x} - ct)}.
  \label{eq:def.sol.1}
\end{equation}
The second class depends on two parameters $\lambda, c \in\Real$.
We define
\begin{equation}
  \bvec{k} := \begin{pmatrix}
    c \\ c \\ 0
  \end{pmatrix}, \quad
  \bvec{A} := \begin{pmatrix}
    0 & 0 & -\lambda \\
    0 & 0 & -\lambda \\
    0 & 0 & 0
  \end{pmatrix}e^{i(\bvec{k}\cdot\bvec{x} - ct)}, \quad
  \bvec{\gamma} := \begin{pmatrix}
    \lambda & 0 & 0 \\
    0 & -\lambda & 0 \\
    0 & 0 & 0
  \end{pmatrix}e^{i(\bvec{k}\cdot\bvec{x} - ct)}.
  \label{eq:def.sol.2}
\end{equation}
Notice that in the second case $\bvec{k}\cdot\bvec{k} = 2c^2$.

We readily verify that both \eqref{eq:def.sol.1} and \eqref{eq:def.sol.2}
satisfy \eqref{eq:HDADM.HDW}.

\subsection{Results}
We have implemented two test cases: 
The first one is given by \eqref{eq:def.sol.1} with $c := \pi$, $\lambda_1 := 2$, $\lambda_2 := 1$, 
and the second one is given by \eqref{eq:def.sol.2} with $c := \pi$ and $\lambda := 2$.
In either case, we initialized the solution with the reference one at $t = 0$, and let the system evolve until $t = 2\pi$. 
The domain consists of a unit cube, and the mesh is a Cartesian grid. 
We enforced a Dirichlet boundary condition on the whole boundary, 
deriving the value from the analytical solution.
With the setting, the only harmonic forms are the $3$-forms spanned by the linear polynomials. 
  Since the component in the space of $3$-forms of our analytical solutions corresponds to  
$\lambda_3$, and is taken to be zero,
the $\norm{h}{(P_{\mathfrak{H}}\uI{}{h} - \uI{}{h}P_{\mathfrak{H}})U}$ term vanishes in 
Theorem \ref{thm:err.estimate}.
We explored various time steps and spatial subdivisions, 
ranging between $\Delta t = 10^{-2}$ and $\Delta t = 10^{-5}$ for the time steps, 
and between $N = 2$ to $N = 10$ subdivision (hence between $8$ and $1000$ cells).
In order to save on resources, we did not run the simulation for all possible combinations.

The error computed for the various fields with respect to the cell size $h$ using the backward Euler time stepping 
is given in Figure \ref{fig:errorcurves}. 
The value showed for the error is $L^1([0,2\pi]) \times L^2([0,1]^3)$ norm of the local error,
\[
  E := \int_{t = 0}^{2\pi} \left( \int_{x \in [0,1]^3} \norm{}{u_h - \uI{}{h}(u)}^2 \right)^{\frac12} .
\]
We notice that the error attributed to the spatial discretization converges quickly below the contribution of the temporal discretization.
The error on $\lambda_0$ is initially of the order of machine precision and grows due to accumulation errors 
(between $10^{-13}$ and $10^{-7}$).
The evolution of the error during the simulation is given in Figure \ref{fig:errorstime} for a 
case dominated by the time discretization, 
one dominated by the spatial discretization, 
and one intermediary.

Since we could not neglect the error due to the time discretization, 
we modeled our error as $E = \gamma\left( (\Delta t)^\alpha + \delta h^\beta \right)$,
and computed the convergence rates $\alpha$ and $\beta$ as a best fit for the data obtained in our simulation.
The results obtained are shown in Table \ref{tab:convrate}.
The results are consistent with a first order time discretization.
Since there exists bounded cochain projections to the spline complex \cite{bonizzoni2025discrete}, 
we expect to see a spatial convergence of order $k+1$ where $k$ is the polynomial degree.
The discrete spaces for the space of $1$, $2$, and $3$-forms contain all polynomial of degree $1$, 
but not higher in a certain direction. 
Hence, the standard convergence theory gives a convergence rate of $2$.
The computed convergence rate is compatible with the result, even significantly higher.
  Two possible explanations for this higher convergence rate are 
  the alignment between our analytical solutions, and the directions containing higher order polynomials in the discrete spaces, 
  or a superconvergence due to the use of a smooth solution, and the use of the discrete norm to compute the error.

  To confirm the convergence rates, we also implemented the Crank-Nicolson time stepping that is second order accurate in time. We compute the error using the same formula on the same test cases.
  The results are given in Figure \ref{fig:errorcurvesCN}.
  Since the contributions to the error from the time discretization are neglectable, 
  we computed the convergence rate in space only using a timestep of $\Delta t = 10^{-4}$.
  The results are shown in Table \ref{tab:convrateCN}.
  They are coherent with the value obtained using the backward Euler time stepping.

\begin{table}[]
  \centering
\begin{tabular}{l|cc|cc|}
\cline{2-5}
                                  & \multicolumn{1}{l}{First case} & \multicolumn{1}{l|}{} & \multicolumn{1}{l}{Second case} & \multicolumn{1}{l|}{} \\ \cline{2-5} 
                                  & \multicolumn{1}{c|}{$\alpha$}  & $\beta$               & \multicolumn{1}{c|}{$\alpha$}   & $\beta$               \\ \hline
\multicolumn{1}{|l|}{g}           & \multicolumn{1}{c|}{0.986}     & 3.31                  & \multicolumn{1}{c|}{0.987}      & 3.20                  \\ \hline
\multicolumn{1}{|l|}{A}           & \multicolumn{1}{c|}{0.990}     & 3.13                  & \multicolumn{1}{c|}{0.996}      & 3.05                  \\ \hline
\multicolumn{1}{|l|}{$\lambda_3$} & \multicolumn{1}{c|}{1.01}      & 4.40                  & \multicolumn{1}{c|}{1.06}       & 4.11                  \\ \hline
\end{tabular}
\caption{Computed convergence rates using the Euler time stepping.}
\label{tab:convrate}
\end{table}

\begin{table}[]
  \centering
\begin{tabular}{l|c|c|}
  \cline{2-3}
                                  & {First case} & {Second case} \\ \hline 
\multicolumn{1}{|l|}{g}           &  3.22                  &  3.12                  \\ \hline
\multicolumn{1}{|l|}{A}           &  3.09                  &  3.03                  \\ \hline
\multicolumn{1}{|l|}{$\lambda_3$} &  4.16                  &  4.07                  \\ \hline
\end{tabular}
\caption{Computed convergence rates (in space) using the Crank-Nicolson time stepping.}
\label{tab:convrateCN}
\end{table}

\begin{figure}[p] 
    \centering
    \begin{subfigure}[t]{0.48\textwidth}
        \centering
        \includegraphics[width=\textwidth]{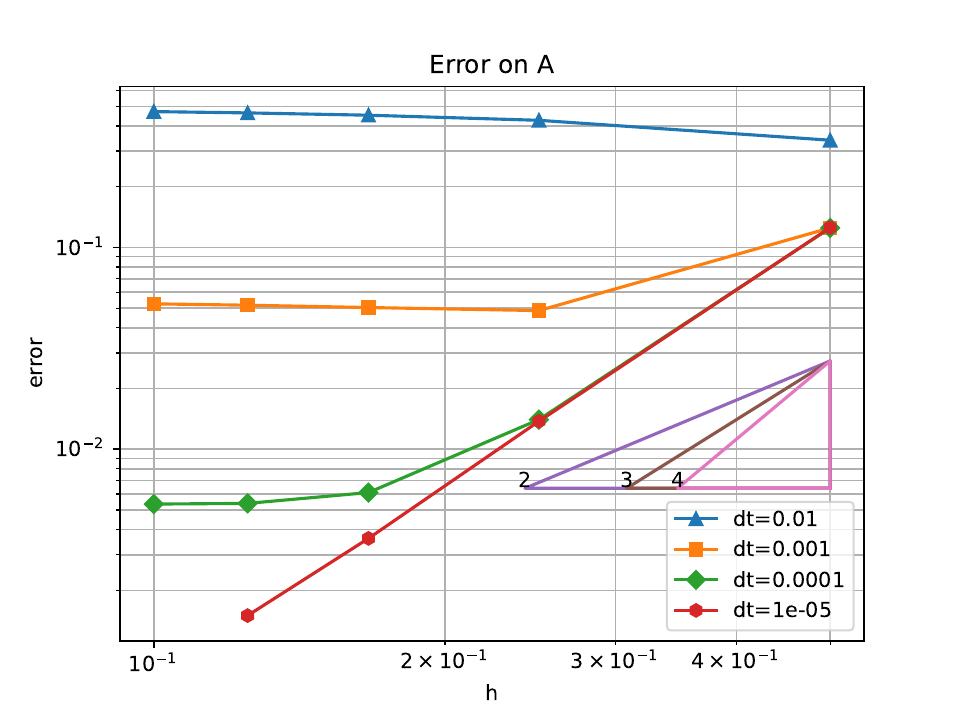}
        \caption{Error on $A$ in the first case.}
    \end{subfigure}
    \hfill
    \begin{subfigure}[t]{0.48\textwidth}
        \centering
        \includegraphics[width=\textwidth]{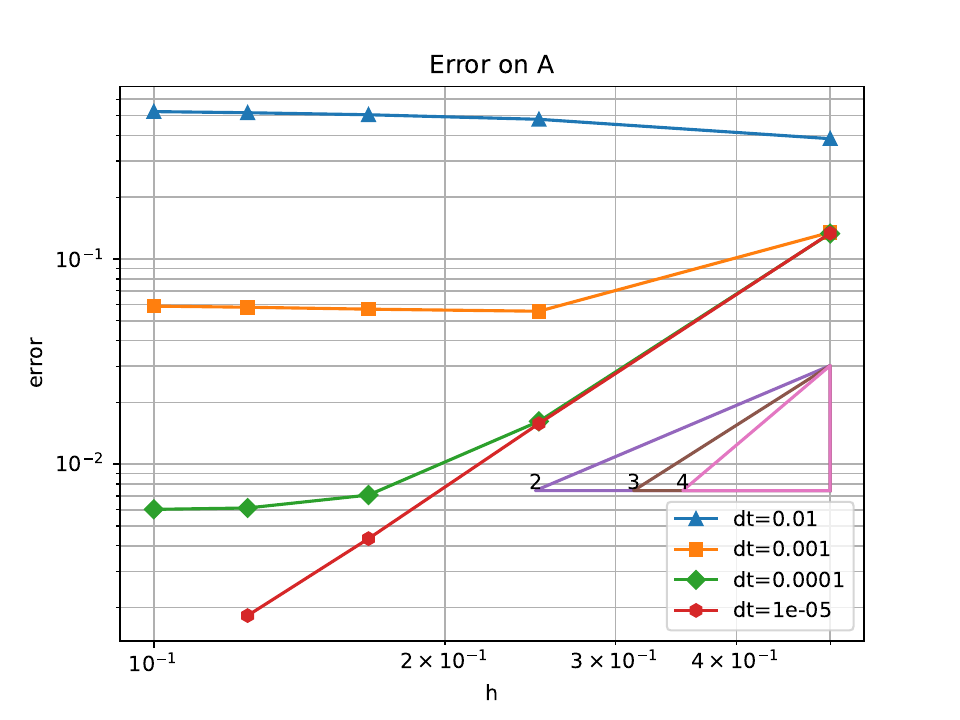}
        \caption{Error on $A$ in the second case.}
    \end{subfigure}
    \vspace{0.5cm}
    \begin{subfigure}[t]{0.48\textwidth}
        \centering
        \includegraphics[width=\textwidth]{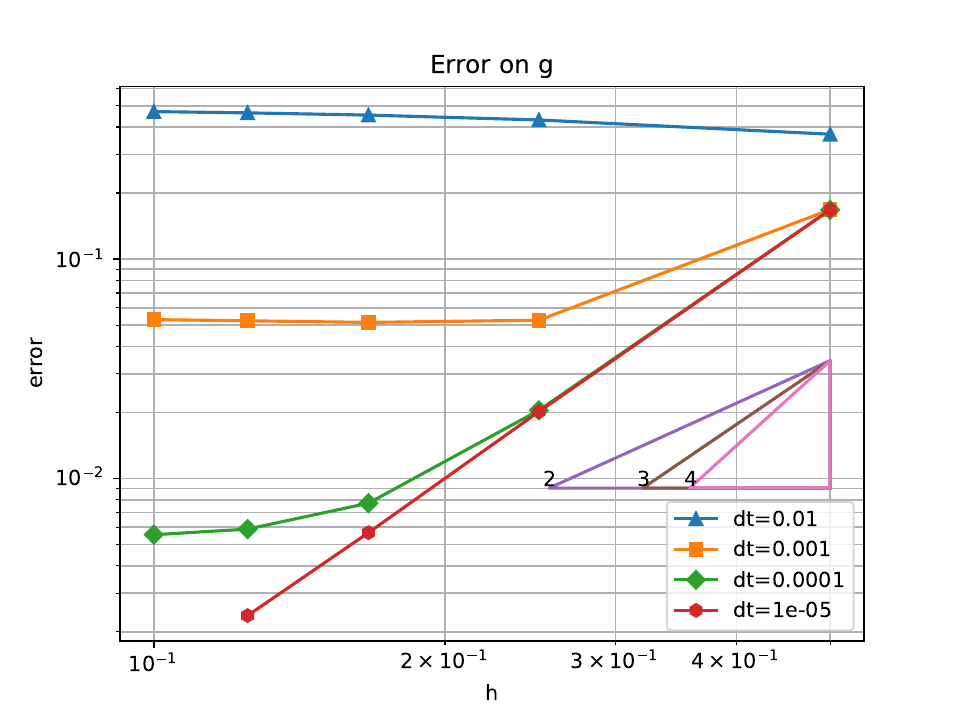}
        \caption{Error on $g$ in the first case.}
    \end{subfigure}
    \hfill
    \begin{subfigure}[t]{0.48\textwidth}
        \centering
        \includegraphics[width=\textwidth]{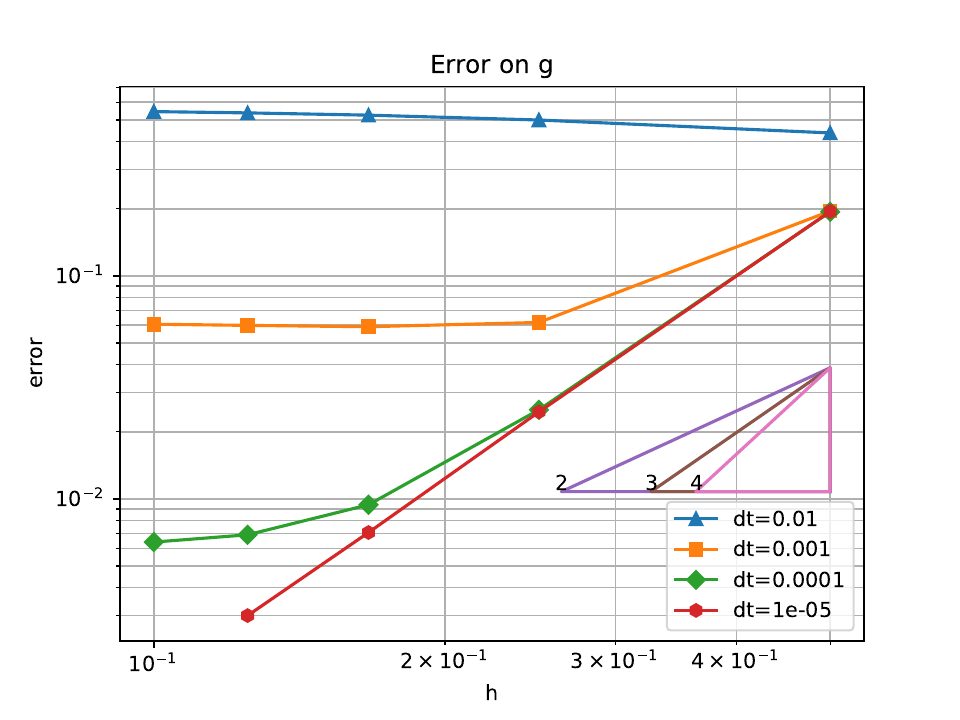}
        \caption{Error on $g$ in the second case.}
    \end{subfigure}
    \vspace{0.5cm}
    \begin{subfigure}[t]{0.48\textwidth}
        \centering
        \includegraphics[width=\textwidth]{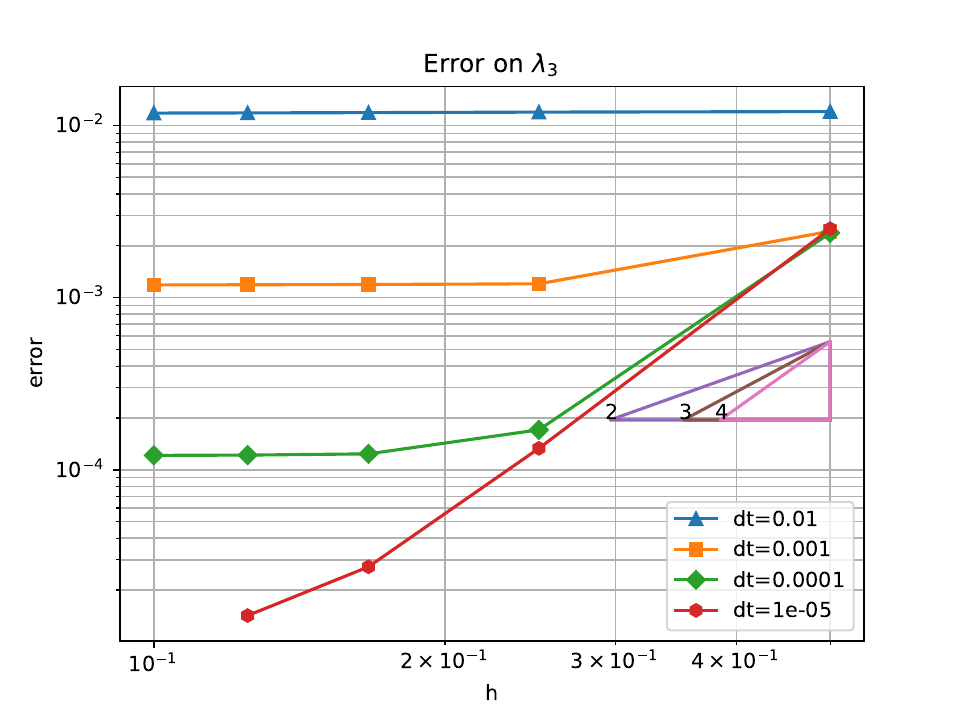}
        \caption{Norm of $\lambda_3$ in the first case.}
    \end{subfigure}
    \hfill
    \begin{subfigure}[t]{0.48\textwidth}
        \centering
        \includegraphics[width=\textwidth]{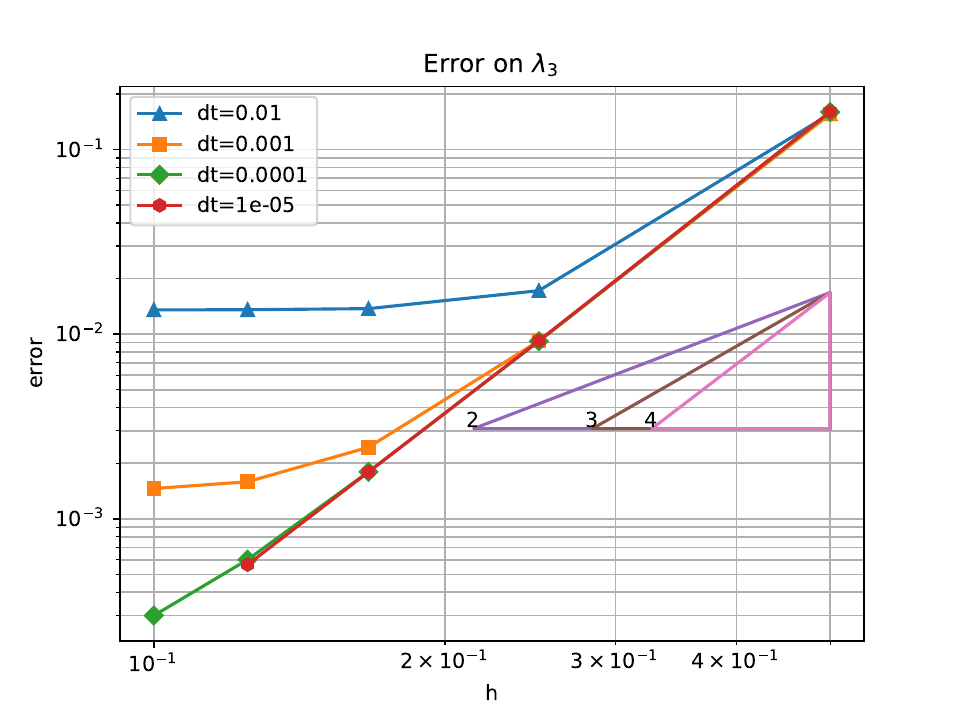}
        \caption{Norm of $\lambda_3$ in the second case.}
    \end{subfigure}
    \caption{Error with respect to the mesh size for various time steps using the Euler time stepping.}
    \label{fig:errorcurves}
\end{figure}

\begin{figure}[p] 
    \centering
    \begin{subfigure}[t]{0.48\textwidth}
        \centering
        \includegraphics[width=\textwidth]{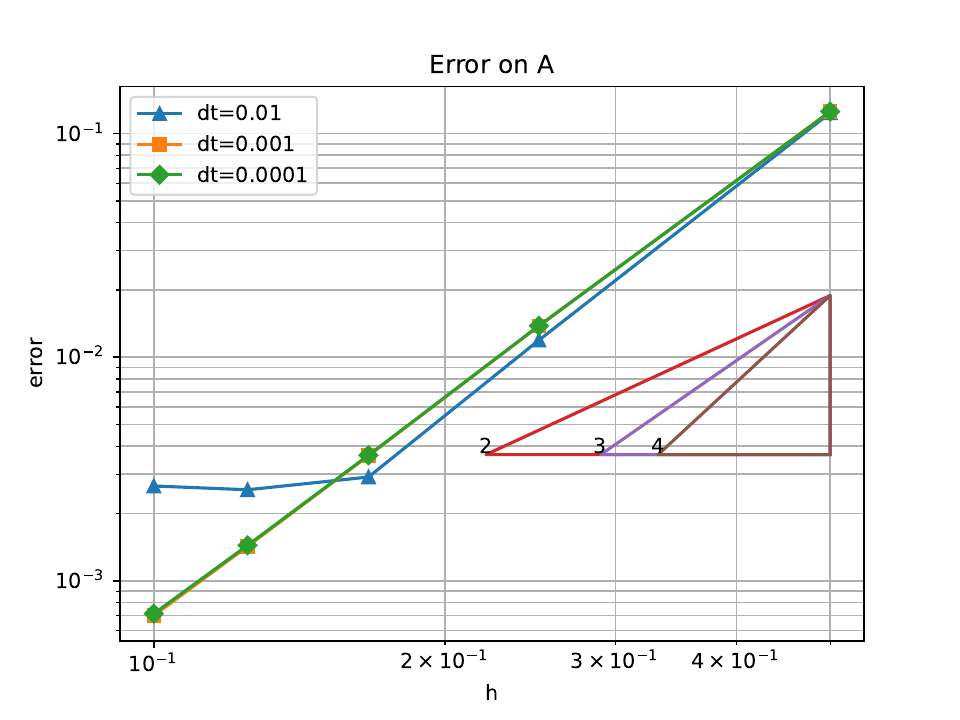}
        \caption{Error on $A$ in the first case.}
    \end{subfigure}
    \hfill
    \begin{subfigure}[t]{0.48\textwidth}
        \centering
        \includegraphics[width=\textwidth]{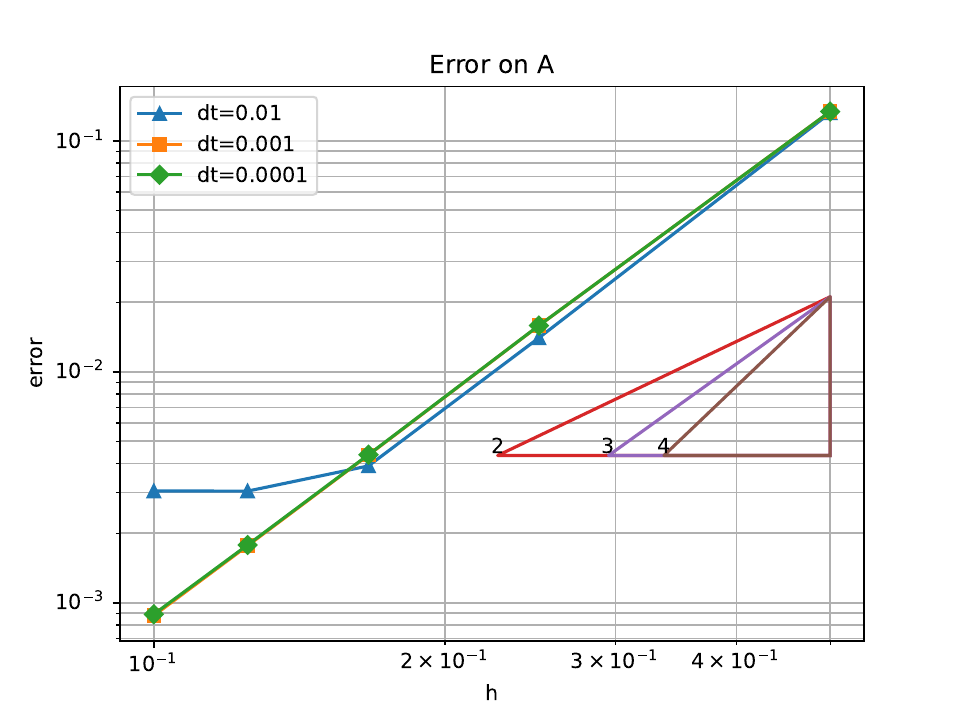}
        \caption{Error on $A$ in the second case.}
    \end{subfigure}
    \vspace{0.5cm}
    \begin{subfigure}[t]{0.48\textwidth}
        \centering
        \includegraphics[width=\textwidth]{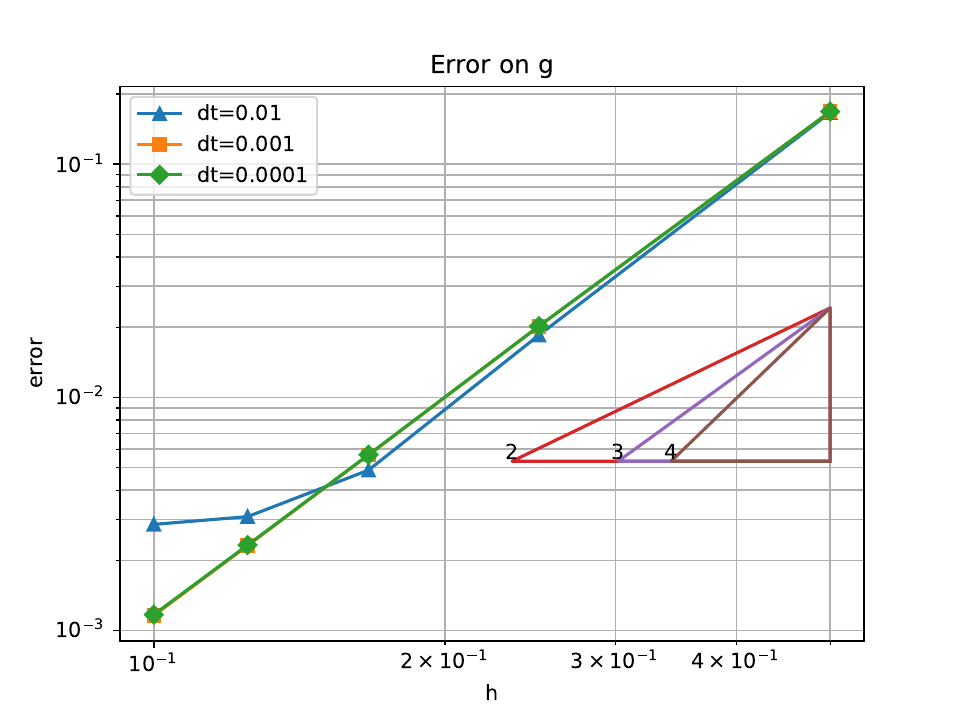}
        \caption{Error on $g$ in the first case.}
    \end{subfigure}
    \hfill
    \begin{subfigure}[t]{0.48\textwidth}
        \centering
        \includegraphics[width=\textwidth]{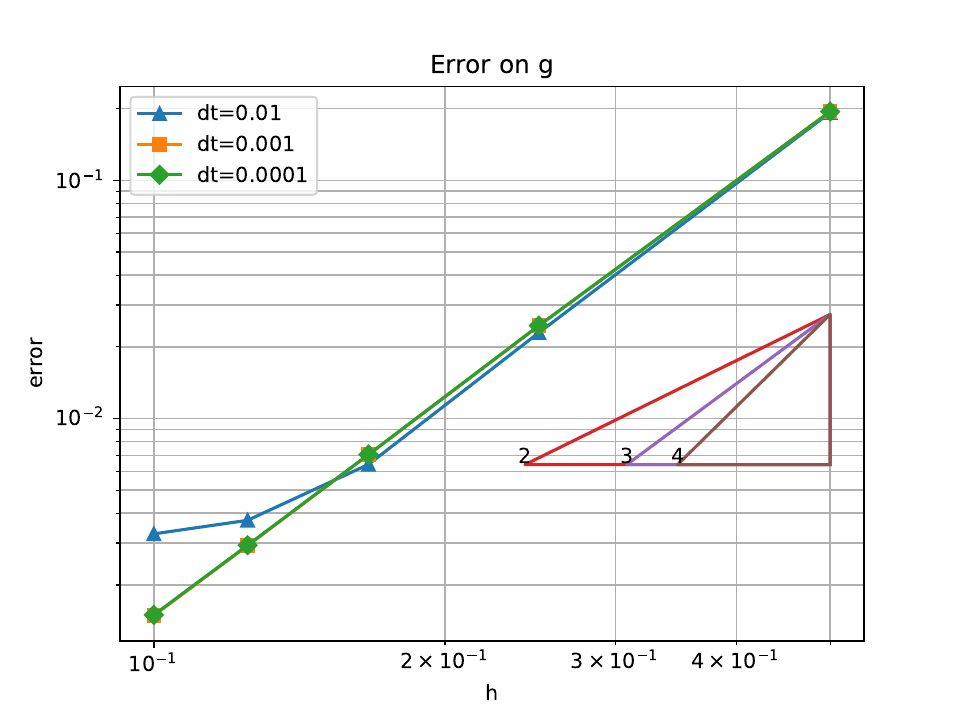}
        \caption{Error on $g$ in the second case.}
    \end{subfigure}
    \vspace{0.5cm}
    \begin{subfigure}[t]{0.48\textwidth}
        \centering
        \includegraphics[width=\textwidth]{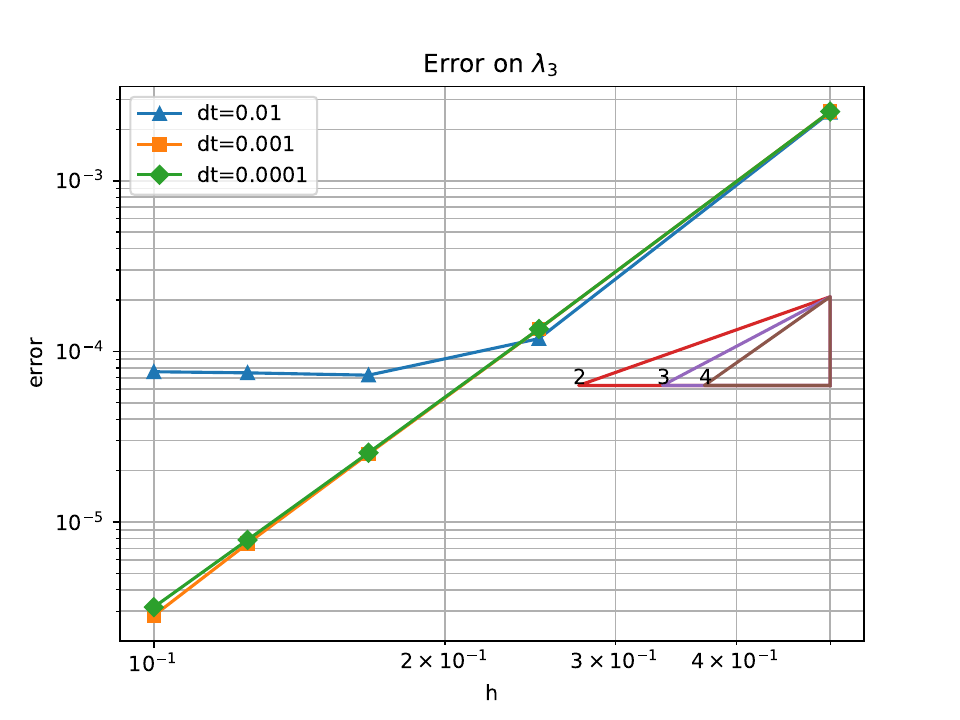}
        \caption{Norm of $\lambda_3$ in the first case.}
    \end{subfigure}
    \hfill
    \begin{subfigure}[t]{0.48\textwidth}
        \centering
        \includegraphics[width=\textwidth]{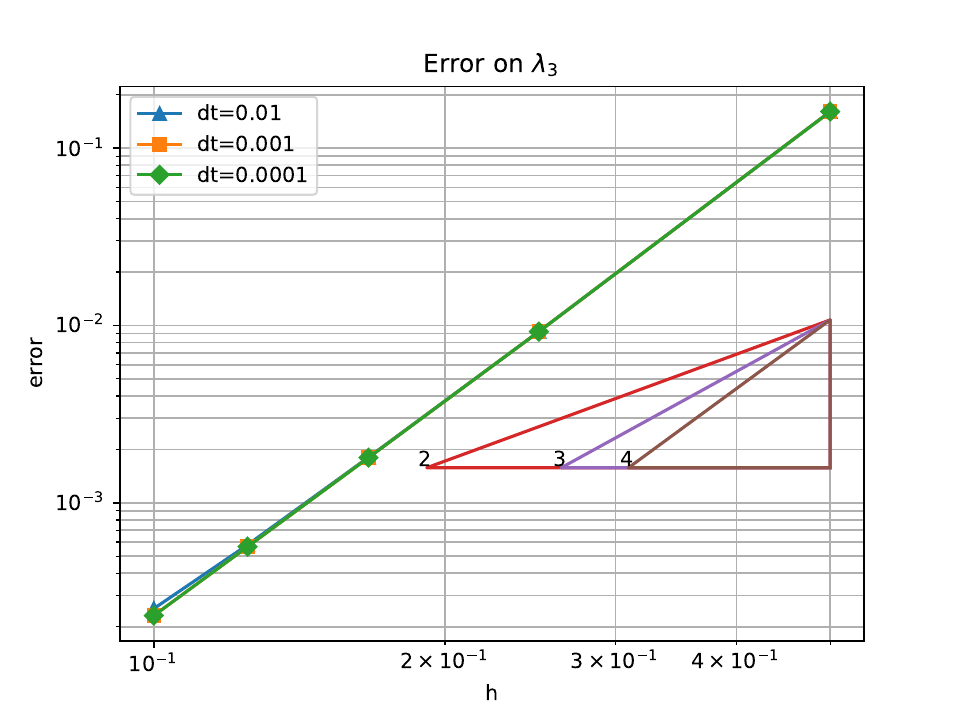}
        \caption{Norm of $\lambda_3$ in the second case.}
    \end{subfigure}
    \caption{Error with respect to the mesh size for various time steps using the Crank-Nicolson time stepping.}
    \label{fig:errorcurvesCN}
\end{figure}

\begin{figure}[p] 
    \centering
    \begin{subfigure}[t]{\textwidth}
        \centering
        \includegraphics[width=0.6\textwidth]{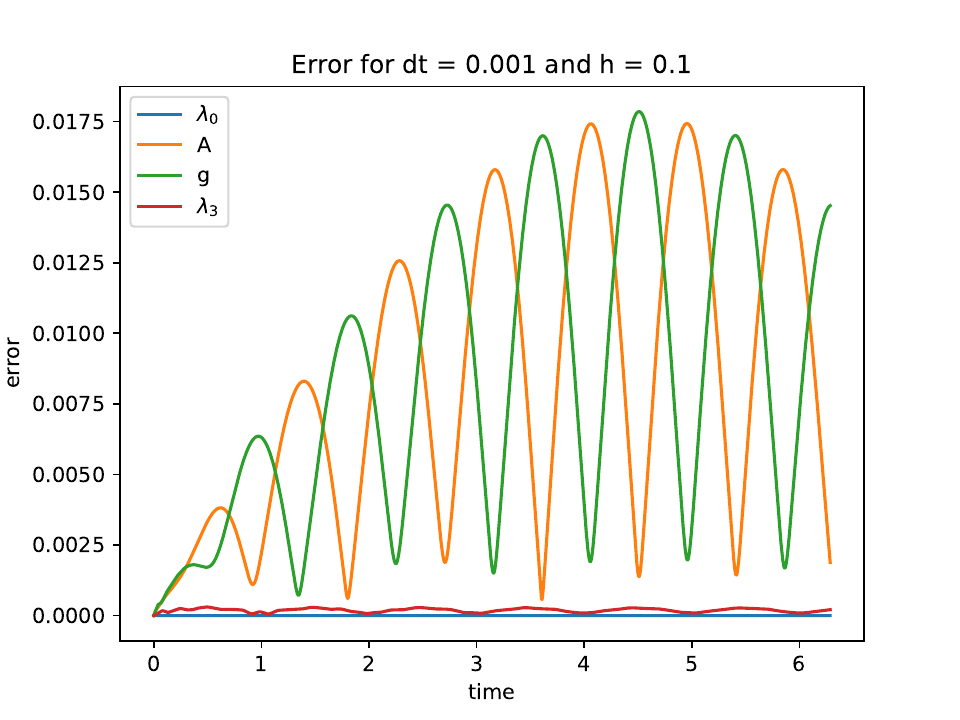}
        \caption{Error dominated by the time approximation.}
    \end{subfigure}
    \vspace{0.5cm}
    \begin{subfigure}[t]{\textwidth}
        \centering
        \includegraphics[width=0.6\textwidth]{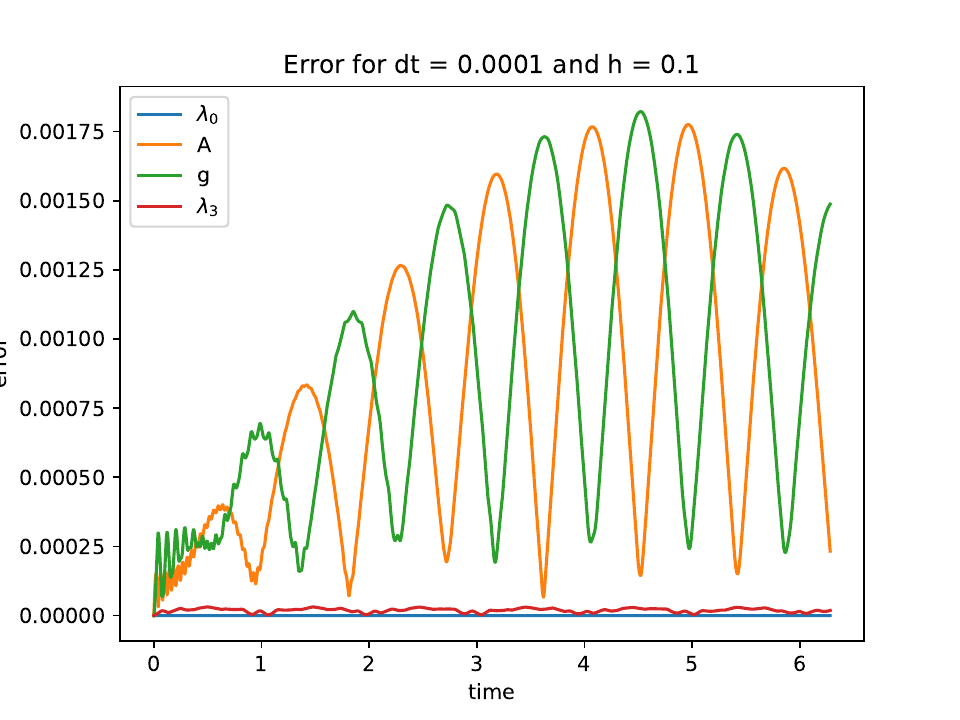}
        \caption{Error with comparable contribution from time and space.}
    \end{subfigure}
    \vspace{0.5cm}
    \begin{subfigure}[t]{\textwidth}
        \centering
        \includegraphics[width=0.6\textwidth]{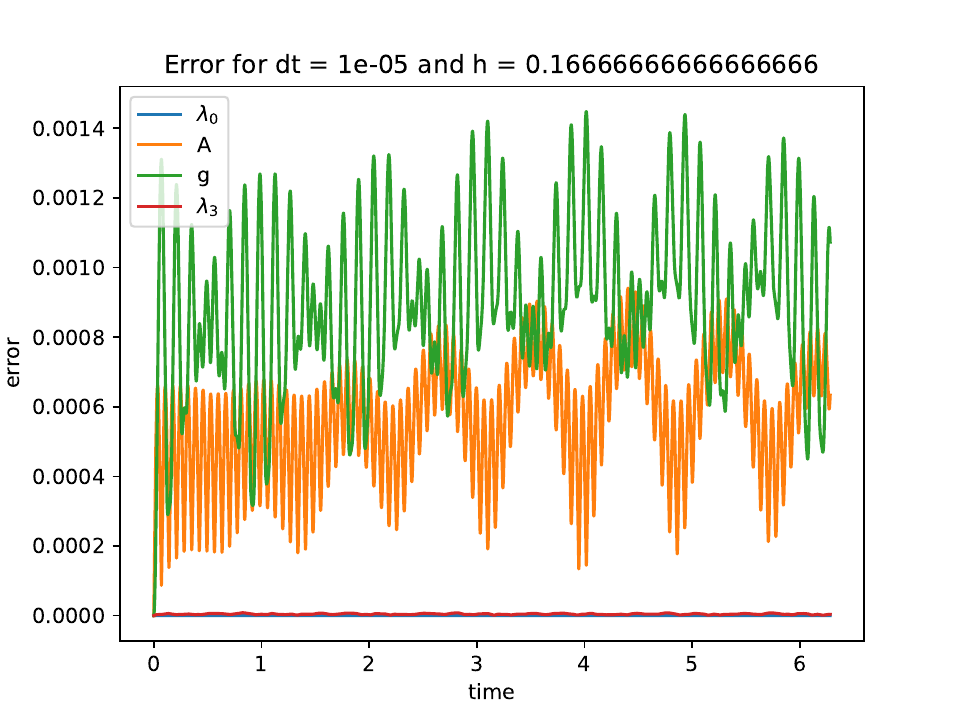}
        \caption{Error dominated by the spatial approximation.}
    \end{subfigure}
    \caption{Evolution of the error during the simulated time for the first case.}
    \label{fig:errorstime}
\end{figure}

\section{Conclusions}

We aim to extend this stability to the full, nonlinear equations. The linearized version provides a promising first step, since the mathematical properties of nonlinear hyperbolic systems mainly rely on their principal parts. However, deriving the associated nonlinear mixed formulation is   beyond the scope of this paper and is left as a future direction.

\appendix
\section{Vector calculus identities}
\begin{lemma}
  If $\gamma \in \Symm$ then
  \begin{equation}
    \SYM\CURL\CURL S\gamma = \INC \gamma - S\DEF\VDIV S\gamma - S\HESS\tr\gamma.
    \label{eq:symcurlcurl.dec}
  \end{equation}
  \label{lem:symcurlcurl.dec}
\end{lemma}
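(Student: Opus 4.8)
The plan is to reduce both sides of \eqref{eq:symcurlcurl.dec} to the same explicit combination of $\DEF\VDIV\gamma$, $\HESS\tr\gamma$, $\Delta\gamma$ and the identity matrix, and then match terms. Throughout we use $\gamma^\top=\gamma$, so that $S\gamma = \gamma-(\tr\gamma)I\in\Symm$.

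First I would simplify the left-hand side structurally. Applying the classical vector identity $\CURL\CURL v = \GRAD(\DIV v) - \Delta v$ to each column of a matrix field shows column-wise that $\CURL\CURL M = \GRAD(\VDIV M) - \Delta M$ for any $M$. Taking $M = S\gamma$ and applying $\SYM$, using $\SYM\GRAD = \DEF$ and $\SYM(\Delta S\gamma) = \Delta S\gamma$ (valid since $S\gamma$ is symmetric), gives
\[
  \SYM\CURL\CURL S\gamma = \DEF(\VDIV S\gamma) - \Delta S\gamma .
\]
Then, using $\VDIV S\gamma = \VDIV\gamma - \GRAD\tr\gamma$, $\DEF\GRAD = \HESS$, and $\Delta S\gamma = \Delta\gamma - (\Delta\tr\gamma)I$, the right-hand side becomes $\DEF\VDIV\gamma - \HESS\tr\gamma - \Delta\gamma + (\Delta\tr\gamma)I$.

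Next I would record the classical coordinate expansion of $\INC$ on symmetric fields. Writing $(\INC\gamma)_{ij} = \epsilon_{ipq}\epsilon_{jrs}\,\partial_p\partial_r\gamma_{qs}$ and expanding the product $\epsilon_{ipq}\epsilon_{jrs}$ as the $3\times3$ determinant of Kronecker deltas produces six terms which, after contraction with $\partial_p\partial_r\gamma_{qs}$ and use of $\gamma=\gamma^\top$, collapse to
\[
  \INC\gamma = -\Delta\gamma + 2\DEF\VDIV\gamma - \HESS\tr\gamma - (\DIVDIV\gamma - \Delta\tr\gamma)\,I .
\]
(Its trace is $\Delta\tr\gamma - \DIVDIV\gamma = -\DIVDIV S\gamma$, which is consistent with the relation $\tr\INC = -\DIVDIV S$ quoted in the paper and serves as a sanity check.) I would then expand the two $S$-terms via $Sm = m^\top - (\tr m)I = m - (\tr m)I$ for symmetric $m$: since $\tr(\DEF\VDIV S\gamma) = \DIV(\VDIV S\gamma) = \DIVDIV\gamma - \Delta\tr\gamma$ and $\tr(\HESS\tr\gamma) = \Delta\tr\gamma$, one gets $S\DEF\VDIV S\gamma = (\DEF\VDIV\gamma - \HESS\tr\gamma) - (\DIVDIV\gamma - \Delta\tr\gamma)I$ and $S\HESS\tr\gamma = \HESS\tr\gamma - (\Delta\tr\gamma)I$.

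Substituting these three expansions into $\INC\gamma - S\DEF\VDIV S\gamma - S\HESS\tr\gamma$, the $(\DIVDIV\gamma-\Delta\tr\gamma)I$ terms cancel and the $\HESS\tr\gamma$ and $\DEF\VDIV\gamma$ terms recombine, leaving exactly $\DEF\VDIV\gamma - \HESS\tr\gamma - \Delta\gamma + (\Delta\tr\gamma)I$, which is the expression obtained for $\SYM\CURL\CURL S\gamma$ in the first step; this proves \eqref{eq:symcurlcurl.dec}. The one nontrivial piece of work is the bookkeeping in the $\epsilon$--$\epsilon$ expansion of $\INC$; everything else is routine operator algebra. One needs only minimal care with the column-wise conventions, but since $S\gamma$, $\DEF(\cdot)$, $\HESS(\cdot)$, $\Delta S\gamma$ and $\INC\gamma$ are all symmetric, any row/column ambiguity is harmless for the symmetrized identity.
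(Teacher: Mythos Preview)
Your proof is correct but takes a genuinely different route from the paper. You reduce both sides to the common form $\DEF\VDIV\gamma - \HESS\tr\gamma - \Delta\gamma + (\Delta\tr\gamma)I$ by (i) using the column-wise Laplacian identity $\CURL\CURL M = \GRAD(\VDIV M) - \Delta M$ on the left, and (ii) expanding $(\INC\gamma)_{ij}=\epsilon_{ipq}\epsilon_{jrs}\partial_p\partial_r\gamma_{qs}$ via the $3\times3$ Kronecker determinant on the right. The paper instead works at the level of operator identities from the BGG calculus: it writes $\CURL S\gamma = \CURL\gamma - \CURL\iota\tr\gamma$, inserts $\CURL\gamma = S^{-1}\CURL\gamma + 2\skw\CURL\gamma$ (using that $\CURL\gamma\in\Tless$ when $\gamma\in\Symm$), and then applies the algebraic identities $\CURL\iota = \mskw\GRAD$, $2\skw\CURL = -\mskw\VDIV S$, and $\SYM\CURL\mskw = S\DEF$ to convert $\SYM\CURL(\cdot)$ term-by-term into $\INC\gamma$, $-S\DEF\VDIV S\gamma$, and $-S\HESS\tr\gamma$ without ever writing an index. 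Your approach is more elementary and self-contained, requiring only the vector Laplacian identity and some careful bookkeeping; the paper's approach is shorter and highlights why each term on the right appears, at the cost of invoking the less familiar $\mskw$/$\vskw$/$S$ operator algebra. Either way the symmetry of all the tensors involved makes the column-wise conventions harmless, as you note.
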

\begin{proof}
  The proof relies on the identities 
  \begin{equation}
    \begin{aligned}
      \CURL\iota &= \mskw\GRAD \\
      2\skw\CURL &= \mskw2\vskw\CURL = -\mskw\VDIV S,
    \end{aligned}
    \label{eq:pr.ids.C1}
  \end{equation}
  and 
  \begin{equation}
    \begin{aligned}
      \CURL\mskw &= S\GRAD \\
      \SYM\CURL\mskw &= S\DEF.
    \end{aligned}
    \label{eq:pr.ids.C2}
  \end{equation}

  If $\gamma \in \Symm$, then $\CURL\gamma\in\Tless$ and $S^{-1}\CURL\gamma = T\CURL\gamma = \CURL\gamma - 2 \skw\CURL\gamma$. 
  Expanding the definition of $S$, we have
  \begin{equation*}
    \begin{aligned}
      \SYM\CURL\CURL S\gamma 
      &= \SYM\CURL\left(\CURL\gamma - \CURL\iota\tr\gamma\right) \\
    &= \SYM\CURL\left(S^{-1}\CURL\gamma + 2\skw\CURL\gamma - \CURL\iota\tr\gamma\right) \\
    \overset{\eqref{eq:pr.ids.C1}}&= \SYM\CURL\left(S^{-1}\CURL\gamma - \mskw\VDIV S\gamma - \mskw\GRAD\tr\gamma\right) \\
    \overset{\eqref{eq:pr.ids.C2}}&= \SYM\INC\gamma - S\DEF\VDIV S\gamma - \SYM S\HESS\tr\gamma \\
      &= \INC\gamma - S\DEF\VDIV S\gamma - S\HESS\tr\gamma.
    \end{aligned}
  \end{equation*}
\end{proof}

\begin{lemma}
  If $\VDIV A = \bvec{0}$ then
  \begin{equation}
    \VDIV\SYM\CURL A = -\CURL\CURL\vskw A .
    \label{eq:divsymcurl.dec}
  \end{equation}
  \label{lem:divsymcurl.dec}
\end{lemma}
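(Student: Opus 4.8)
\ The plan is to compute everything in Cartesian index notation, keeping in mind that, with the column-wise convention, $(\CURL A)_{ij} = \epsilon_{ikl}\partial_k A_{lj}$ and $(\VDIV M)_j = \partial_i M_{ij}$, so that $\VDIV$ always contracts the \emph{first} index of its argument. First I would expand
\[
  (\VDIV\SYM\CURL A)_j \;=\; \tfrac12\,\partial_i\big[(\CURL A)_{ij} + (\CURL A)_{ji}\big],
\]
and note that the first summand vanishes identically because $\partial_i(\CURL A)_{ij} = \epsilon_{ikl}\partial_i\partial_k A_{lj} = 0$ by antisymmetry of $\epsilon$ in $(i,k)$. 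Hence only the symmetrizing term survives, giving
\[
  (\VDIV\SYM\CURL A)_j \;=\; \tfrac12\,\partial_i(\CURL A)_{ji} \;=\; \tfrac12\,\epsilon_{jkl}\,\partial_k\big(\partial_i A_{li}\big).
\]

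The key step is then to rewrite the inner quantity $\partial_i A_{li}$ — which is a \emph{row}-wise divergence of $A$, not the column-wise $\VDIV A$ — using the hypothesis $\VDIV A = \bvec{0}$. Splitting $A = \sym A + \skw A$ and exploiting the skew-symmetry of $\skw A$, one has $\partial_i(\skw A)_{li} = -\partial_i(\skw A)_{il} = -(\VDIV\skw A)_l$ while $\partial_i(\sym A)_{li} = (\VDIV\sym A)_l$, so $\partial_i A_{li} = (\VDIV\sym A - \VDIV\skw A)_l$; combined with $\VDIV\sym A + \VDIV\skw A = \VDIV A = \bvec{0}$ this yields $\partial_i A_{li} = -2(\VDIV\skw A)_l$. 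Writing $\skw A = \mskw(\vskw A)$ and computing $(\VDIV\mskw v)_j = \partial_i(-\epsilon_{ijk}v_k) = (\curl v)_j$, we get $\partial_i A_{li} = -2(\curl\vskw A)_l$. Substituting back,
\[
  (\VDIV\SYM\CURL A)_j \;=\; \tfrac12\,\epsilon_{jkl}\,\partial_k\!\big(-2(\curl\vskw A)_l\big) \;=\; -\,\epsilon_{jkl}\,\partial_k(\curl\vskw A)_l \;=\; -(\curl\curl\vskw A)_j,
\]
which is exactly \eqref{eq:divsymcurl.dec}.

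Alternatively, the same identity can be assembled coordinate-free from tools already in the appendix: decompose $A = \sym A + \mskw v$ with $v := \vskw A$, use $\SYM\CURL\mskw = S\DEF$ from \eqref{eq:pr.ids.C2} together with the auxiliary identity $\VDIV\SYM\CURL\sigma = \tfrac12\,\curl\VDIV\sigma$ valid for symmetric $\sigma$ (a one-line index check) and $\VDIV\mskw = \curl$; then $\VDIV A = \bvec{0}$ forces $\VDIV\sym A = -\curl v$, and one finds that the symmetric and the skew parts of $A$ each contribute $-\tfrac12\,\curl\curl v$ to $\VDIV\SYM\CURL A$. I would present the index computation above, since it is entirely self-contained. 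There is no genuine analytic difficulty here; the only thing to watch is the bookkeeping of the column-wise convention — in particular that $\partial_i A_{li}$ is not $(\VDIV A)_l$ — and the signs hidden in $\epsilon$, $\mskw$ and $\vskw$.
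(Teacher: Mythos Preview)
Your proof is correct and follows essentially the same route as the paper: both expand $\SYM$ as $\tfrac12(I+T)$, kill $\VDIV\CURL A$, commute the divergence through the transposed curl (the paper packages this as the identity $\VDIV\,T\,\CURL=\CURL\,\VDIV\,T$, while you derive it inline in indices), then use the sym/skew decomposition together with $\VDIV A=0$ and $\VDIV\mskw=\curl$ to finish. The only difference is presentational---index notation versus coordinate-free operator identities---so there is nothing to add.
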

\begin{proof}
  The proof relies on the identity 
  \begin{equation}
    \VDIV\mskw = \CURL.
    \label{eq:pr.ids.C3}
  \end{equation}
  Expanding the definition of $\SYM$, we have
  \begin{equation*}
    \begin{aligned}
      \VDIV\SYM\CURL A
      &= \frac12 \cancel{\VDIV\CURL A} + \frac12 \VDIV T \CURL A \\
      \overset{\eqref{eq:divTcurl.curldivT}} &= \frac12\CURL\VDIV A^\top \\
      &= \frac12\CURL\cancel{\VDIV A} - \CURL\VDIV\skw A \\
      &= -\CURL\VDIV \mskw\vskw A \\
      \overset{\eqref{eq:pr.ids.C3}}&= -\CURL\CURL\vskw A,
    \end{aligned}
  \end{equation*}
  where we used the decomposition $A^\top = A - 2\skw A$ on the third line.
\end{proof}

\begin{lemma}
  In general, it holds
  \begin{align}
    \CURL\CURL &= -2\VDIV S\DEF \label{eq:curlcurl.divdef},\\
    \VDIV T \CURL &= \CURL \VDIV T \label{eq:divTcurl.curldivT},
  \end{align}
  where $T$ is the transpose operator.
  \label{lem:curl.comms}
\end{lemma}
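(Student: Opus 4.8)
The two identities are coordinate‑level vector calculus facts, and the plan is to prove them by a short computation in components using the column‑wise conventions of the paper: for a matrix field $M$ write $(\CURL M)_{ij}=\epsilon_{ikl}\partial_k M_{lj}$ and $(\VDIV M)_j=\partial_i M_{ij}$, for a vector field $v$ write $(\CURL v)_i=\epsilon_{ijk}\partial_j v_k$ and $\DIV v=\partial_i v_i$, and let $(\cdot)^\top$ denote the transpose operator $T$. The only analytic ingredient is the symmetry of second partial derivatives, so the identities hold without any regularity assumption beyond what makes the expressions meaningful, i.e.\ ``in general''.

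For \eqref{eq:divTcurl.curldivT} the observation is that $\VDIV T$ is the \emph{row}-wise divergence: for a matrix field $A$,
\[
  (\VDIV T\CURL A)_j=\partial_i(\CURL A)_{ji}=\epsilon_{jkl}\,\partial_k\partial_i A_{li}=\epsilon_{jkl}\,\partial_k(\VDIV T A)_l=(\CURL\VDIV T A)_j,
\]
where we used symmetry of second derivatives and $\partial_i A_{li}=(\VDIV A^\top)_l$; this is exactly \eqref{eq:divTcurl.curldivT}. For \eqref{eq:curlcurl.divdef} I would first record the elementary vector identity $\CURL\CURL v=\GRAD\DIV v-\Delta v$, obtained from $\epsilon_{ick}\epsilon_{kab}=\delta_{ia}\delta_{cb}-\delta_{ib}\delta_{ca}$. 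Since $\DEF v=\SYM\GRAD v$ is symmetric with trace $\DIV v$, the algebraic map $S$ acts as $S\DEF v=\DEF v-(\DIV v)I$, and applying $\VDIV$ componentwise gives $(\VDIV S\DEF v)_j=\tfrac12(\Delta v_j+\partial_j\DIV v)-\partial_j\DIV v=-\tfrac12(\partial_j\DIV v-\Delta v_j)=-\tfrac12(\CURL\CURL v)_j$, so $\CURL\CURL v=-2\VDIV S\DEF v$. A Laplacian-free alternative is to chain identities already used earlier: from $S\DEF=\SYM\CURL\mskw$ (see \eqref{eq:pr.ids.C2}) one gets $\VDIV S\DEF w=\VDIV\SYM\CURL\mskw w=\tfrac12\VDIV T\CURL\mskw w=\tfrac12\CURL\VDIV(\mskw w)^\top$ using \eqref{eq:divTcurl.curldivT} and $\VDIV\CURL=0$, and then $(\mskw w)^\top=-\mskw w$ together with $\VDIV\mskw=\CURL$ (see \eqref{eq:pr.ids.C3}) yields $-\tfrac12\CURL\CURL w$.

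I do not anticipate a genuine obstacle here: the work is pure bookkeeping. The one point demanding care is applying the column-wise convention consistently and tracking exactly where the transpose $T$ enters — in particular that $\VDIV\CURL=0$ on matrix fields with this convention, which is precisely why the $T$ in \eqref{eq:divTcurl.curldivT} is essential rather than cosmetic.
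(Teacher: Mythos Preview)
Your proposal is correct, and your primary approach---direct index computation using the $\epsilon$--$\delta$ contraction for \eqref{eq:curlcurl.divdef} and straightforward index manipulation for \eqref{eq:divTcurl.curldivT}---is exactly what the paper does. Your alternative Laplacian-free derivation of \eqref{eq:curlcurl.divdef}, chaining $S\DEF=\SYM\CURL\mskw$, the just-proven \eqref{eq:divTcurl.curldivT}, and $\VDIV\mskw=\CURL$, is a nice extra the paper does not include; it has the merit of exhibiting \eqref{eq:curlcurl.divdef} as a corollary of the other operator identities in the appendix rather than as an independent coordinate computation, at the minor cost of depending on those identities.
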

\begin{proof}
  The result is a straightforward computation. \\
  \ul{Proof of \eqref{eq:curlcurl.divdef}}:
  Using Einstein notations, for any vector field $v$, we have
  $(\CURL v)_i = \epsilon_{ijk}\partial_jv_k$ where $\epsilon$ is the fully skew-symmetric tensor.
  Therefore, 
  \begin{equation}
    (\CURL\CURL v)_i 
    = \epsilon_{ijk}\partial_j(\CURL v)_k 
    = \epsilon_{ijk} \epsilon_{klm} \partial_j\partial_l v_m 
    = (\delta_i^l\delta_j^m - \delta_i^m\delta_j^l)\partial_j\partial_lv_m
    = \partial_j\partial_iv_j - \partial_j\partial_jv_i, 
    \label{eq:pr.curlcurl.einst}
  \end{equation}
  which is the vector Laplacian identity $-\Delta =\CURL\CURL-\GRAD\DIV$.
  On the other hand, writing $S \DEF v = \DEF v - \iota\tr\GRAD v$, we have
  \[ 
    2(S\DEF v)_{ji} = 2(\DEF v)_{ji} - 2(\iota\tr\GRAD v)_{ji} 
    = \partial_j v_i + \partial_i v_j - 2\delta^i_j \partial_k v_k.
  \]
  We infer \eqref{eq:curlcurl.divdef} equating \eqref{eq:pr.curlcurl.einst} with 
  \[
    -2(\VDIV S\DEF v)_i 
    = -\partial_j (2 S\DEF v)_{ji}
    = 2 \partial_i\partial_jv_j - \partial_j\partial_iv_j - \partial_j\partial_jv_i 
    = \partial_i\partial_jv_j - \partial_j\partial_jv_i .
  \]
  \ul{Proof of \eqref{eq:divTcurl.curldivT}}:
  For any matrix field $M$, we have by convention 
  $(\CURL M)_{ij} = \epsilon_{ikl}\partial_k M_{lj}$.
  Thus 
  \begin{equation}
    (\VDIV T \CURL M)_i = \partial_j (T\CURL M)_{ji} = \partial_j (\CURL M)_{ij} = \epsilon_{ikl}\partial_j \partial_k M_{lj}.
    \label{eq:pr.divTcurl.einst}
  \end{equation}
  On the other hand, we have $(\VDIV M^\top)_i = \partial_j (M^\top)_{ji} = \partial_j M_{ij}$.
  We infer \eqref{eq:divTcurl.curldivT} equating \eqref{eq:pr.divTcurl.einst} with
  \[
    (\CURL\VDIV M^\top)_i = \epsilon_{ikl} \partial_k (\VDIV M^\top)_l
    = \epsilon_{ikl}\partial_k\partial_j M_{lj} .
  \]
\end{proof}

\section*{Acknowledgement}

The work was supported by a Royal Society University Research Fellowship (URF$\backslash$R1$\backslash$221398, RF$\backslash$ERE$\backslash$221047), an ERC Starting Grant (project 101164551, GeoFEM) and a Royal Society International Exchanges Grant (IEC$\backslash$NSFC$\backslash$233594).  Views and opinions expressed are however those of the authors only and do not necessarily reflect those of the European Union or the European Research Council. Neither the European Union nor the granting authority can be held responsible for them.

\printbibliography
\end{document}